\documentclass[sigconf,nonacm]{aamas}
\usepackage{balance}
\usepackage{booktabs}
\graphicspath{{./}{../figs/}{../../figs/}}
\newtheorem{assumption}{Assumption}
\newtheorem{remark}{Remark}
\newcommand{\bs}{\mathbf{s}}
\newcommand{\be}{\mathbf{e}}
\newcommand{\bmu}{\boldsymbol{\mu}}
\newcommand{\R}{\mathbb{R}}
\newcommand{\N}{\mathcal{N}}
\newcommand{\Ncal}{\mathcal{N}}

\newcommand{\E}{\mathbb{E}}
\title{Beyond Independent Values: Optimal Auctions with Correlated Valuations for LLM-Native Advertising}
\newcommand{\1}{\mathbf{1}}
\theoremstyle{acmplain}
\newtheorem{stheorem}{Theorem}
\newtheorem{slemma}{Lemma}
\newtheorem{sproposition}{Proposition}
\theoremstyle{acmdefinition}
\newtheorem{sremark}{Remark}
\author{Shiman Sheng}
\affiliation{\institution{The University of Adelaide}\city{Adelaide}\country{Australia}}
\email{shiman.sheng@adelaide.edu.au}
\author{Mingyu Guo}
\affiliation{\institution{The University of Adelaide}\city{Adelaide}\country{Australia}}
\email{mingyu.guo@adelaide.edu.au}
\begin{document}
\pagestyle{plain}

\begin{abstract}
Conversational assistants are becoming the place where people ask what to buy, and sponsored content is moving into their answers. Several advertisers then compete for one sponsored slot, and all of them read the same conversation, so their willingness to pay usually moves together. Existing auctions for LLM advertising treat these values as independent. We therefore present \textsc{AffMyerson}, an auction that captures this correlation and turns it into revenue. We model the advertisers' values given the conversation as a multivariate normal, the standard model for correlated quantities, and turn the economic premise of positive dependence (affiliation) into a sign condition on the precision matrix. We design a neural parameterization whose every output satisfies it. We prove that every prior in this family yields a truthful, ex-post individually rational auction, so estimation error can cost revenue but never truthfulness. We derive the conditional virtual values in closed form. By the design of the prior, pricing an auction needs no matrix inversion and takes $O(n^2)$ time, fast enough to run at query time. We also work out the log-normal family used in empirical ad-auction work. There we prove that the auction is truthful for every prior and revenue-optimal when the prior is correct, and that the virtual value is regular everywhere whenever the conditional spread is below a loose threshold, which our fitted priors stay well under. With LLM-agent bidders, \textsc{AffMyerson} earns $6.2\%$ more than a second-price auction with a tuned reserve, $6.1\%$ more than Ronen's lookahead auction with the same prior, and $9.2\%$ more than Myerson's auction run as if values were independent. This gain is still $4.5\%$ on real keyword search queries where willingness to pay is hard to read. The correlation comes from common demand. Agents that judge the conversation bid together and agents told to calculate do not. The gain from conditioning moves with the correlation.
\end{abstract}
\keywords{LLM advertising; optimal auction design; correlated values; affiliation; M-matrix priors; truthful mechanisms}

\maketitle
\thispagestyle{plain}
\section{Introduction}

Conversational assistants such as ChatGPT, Perplexity, Google AI Overviews, and Bing Copilot are becoming the place where people ask what to buy, and sponsored suggestions are moving into their answers~\cite{hajiaghayi2024,dutting2024}. Consider a user who asks for ``the best hiking boots for a trip to the Alps in October.'' Several outdoor brands would pay to have their boot mentioned in the answer. All of them read the same conversation, so they can all see that this user is ready to buy and that October is peak season for the category. If the conversation is worth more than usual to one brand, it is likely to be worth more to the others as well. Milgrom and Weber's linkage principle says that such \emph{correlation} is worth money to the platform~\cite{milgrom1982}. A rival's high bid is evidence that the winner's own value is high, so the platform can charge the winner more without losing the sale. The auctions proposed so far for LLM advertising treat advertisers' values as independent, as in Myerson's independent-private-values (IPV) model~\cite{myerson1981}. They include segment auctions in retrieval-augmented generation~\cite{hajiaghayi2024}, token auctions inside decoding~\cite{dutting2024}, alignment-based allocation~\cite{llmauction2025}, and Myersonian payments in two-stage pipelines~\cite{lera2026,quality2026}. None of them uses what one advertiser's bid says about another's value, and that is the information this paper builds into the auction.

\begin{figure*}[t]\centering
\includegraphics[width=\textwidth]{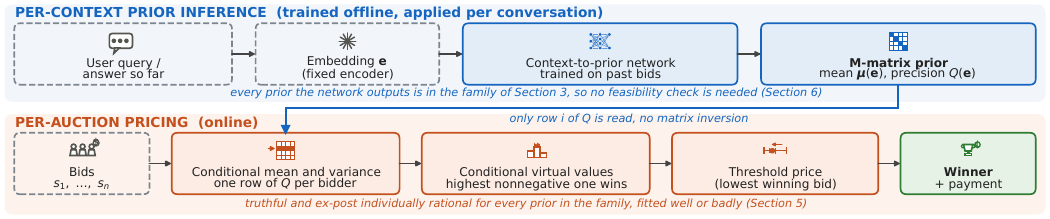}
\Description{Two-stage pipeline: a network maps the conversation to a joint prior, and the auction prices each opportunity from that prior.}
\caption{Two stages. \textbf{Per conversation} (top): a network trained offline on past bids maps the conversation to a joint prior over the advertisers' values. Every prior it can output is in the family of Section~\ref{sec:model}, so the auction below is truthful whatever the fit (Theorem~\ref{thm:robust}). \textbf{Per auction} (bottom): the winner is the bidder with the highest conditional virtual value and pays the lowest bid at which it would still win (Section~\ref{sec:mech}).}\label{fig:pipeline}
\end{figure*}

\paragraph{The prior the platform needs, and how we build it.}
The prior must do three things: carry correlation among $n$ values, be fast to fit and to condition on the rivals' bids at auction time, and keep the auction truthful when the estimate is inaccurate, since conversations vary too much for any fitted prior to be exact. We use the multivariate normal, the standard model for several correlated quantities, and build the economic premise of positive dependence (\emph{affiliation}) into it through one condition on its shape: the \emph{precision matrix} (the inverse covariance) has nonpositive off-diagonal entries. Statisticians call this the M-matrix condition, and for a Gaussian this one sign condition is all that affiliation requires~\cite{karlin1980}. It also lets the platform read one bidder's conditional distribution from a single row of the matrix, with no inversion. The auction is then the natural one: rank bidders by their conditional virtual value (Myerson's virtual value~\cite{myerson1981} computed under the distribution of the bidder's value given the rivals' bids), give the slot to the top-ranked bidder if its virtual value is nonnegative, and charge it the lowest bid at which it would still win. Under the true prior this is the revenue-optimal truthful auction~\cite{rtc2016}. Truthfulness does not need the true prior. For every prior in the family, fitted well or badly, the auction is dominant-strategy truthful and never leaves a bidder worse off (Theorem~\ref{thm:robust}), and revenue degrades continuously in the estimation error (Proposition~\ref{prop:stab}). A learned prior is therefore safe to deploy, since the only thing a bad estimate can cost is revenue.

A second prior family meets the same three requirements. Advertising values are commonly modeled as log-normal, in distributions fitted to bid data~\cite{lahaie2007}, in simulation~\cite{edelman2010}, and in field experiments~\cite{ostrovsky2011}, so we work out that family as well. The auction stays truthful and optimal there, and its virtual value is regular everywhere (so no ironing is needed) whenever the conditional spread of log values is below a threshold we compute, $\sigma^\star=1.5176$. The threshold is loose in practice: at $\sigma^\star$, one standard deviation in log scale multiplies a value by $4.6$, our fitted priors never exceed $0.64$, and every prior the parameterization of Section~\ref{sec:neural} outputs satisfies it by construction (Appendix~S4).

The fitting method is chosen the same way. We use a small network that reads a sentence embedding of the conversation and can only output priors inside the family, so nothing has to be checked at auction time. On our designed queries, the network can draw on the text of the conversation. On the short search queries we draw from Amazon ESCI and MS MARCO, a few words say little about how much an advertiser values the query, so the network has nothing to learn from the text, and a convex maximum-likelihood fit with no network earns the same revenue. Either way the auction stays truthful, so the platform can choose the fitting method by what its data support.

\paragraph{Contributions.}
\begin{enumerate}
  \item \sloppy\textbf{Two tractable prior families with incentive and optimality guarantees.} Roughgarden and Talgam-Cohen~\cite{rtc2016} verify the conditions of their optimal auction for correlated bidders on one symmetric normal. We verify them for every normal prior with an M-matrix precision, for any means and any number of bidders, which needs cross-monotonicity through the sign of the precision matrix and a revenue identity on the unbounded support (Theorem~\ref{thm:opt}). For the log-normal family we give an exact regularity condition: the virtual value is globally regular if and only if the conditional log-scale is at most $\sigma^\star=1.5176$, a threshold that is loose in practice (our fitted priors stay below $0.64$) and that the parameterization keeps by design (Theorem~\ref{thm:loglink}).
  \item \textbf{Truthful whatever the fit.} For every prior in the family, accurate or not, truthful bidding is a dominant strategy and no bidder ever pays more than its value (Theorem~\ref{thm:robust}). Learned mechanisms such as CAN~\cite{huo2025can} also get truthfulness from their structure. Here it comes with exact optimality when the prior is correct, and estimation error costs only revenue, by an amount we bound in the size of the error (Proposition~\ref{prop:stab}).
  \item \textbf{Fast enough to run inside a conversation.} Because we parameterize the prior by its precision matrix rather than its covariance, pricing an auction needs no matrix inversion and takes $O(n^2)$ time, microseconds in practice (Section~\ref{sec:mech}).
  \item \textbf{How LLM agents turn common demand into correlated bids.} When LLM agents bid on behalf of advertisers, given a brand profile, the conversation, and a demand signal shared by all brands, they reproduce two features of real sponsored-search markets: bids rise with purchase intent, and rival bids move together. The shared signal does not make the correlation automatic. Across seven prompt schemes and two models, agents that judge the conversation bid together (within-query correlation $+0.78$), agents told to calculate do not ($+0.06$), and the gain from conditioning rises and falls with the correlation. To our knowledge, this is the first study of how LLM bidding agents turn a common demand shock into correlated bids, a setting we expect to become common (Section~\ref{sec:exp-ablation}).
  \item \textbf{Revenue.} With LLM-agent advertisers and hidden demand shocks, \textsc{AffMyerson} earns $6.2\%$ more than a second-price auction with a tuned reserve, $6.1\%$ more than Ronen's lookahead auction~\cite{ronen2001} with the same prior, $9.2\%$ more than Myerson's auction run as if values were independent, and $1.1\%$ more than CAN, a neural mechanism trained on the same bids~\cite{huo2025can}. We call the gap over the unconditioned Myerson auction the \emph{affiliation rent}. The rent vanishes when correlation is removed from the data, grows monotonically as it is added back, survives twelve deliberate corruptions of the prior, and is still $+4.5\%$ on $580$ real keyword search queries where willingness to pay is hard to read (Section~\ref{sec:exp}). In a market where the conversation matters to one advertiser and not to the other, \textsc{AffMyerson} earns $24.8\%$ more than CAN, and the whole gain comes from the part of our design that reads the conversation (Section~\ref{sec:exp-rev}).
\end{enumerate}
\section{Related Work}

\paragraph{LLM-native advertising.}
Segment auctions~\cite{hajiaghayi2024} auction discourse segments in a retrieval-augmented generation (RAG) pipeline; token auctions~\cite{dutting2024} embed second-price logic into decoding; \textsc{LLM-Auction}~\cite{llmauction2025} casts ad allocation as preference alignment; LERA~\cite{lera2026} and quality-preserving mechanisms~\cite{quality2026} apply critical-value (Myersonian) payments in two-stage RAG systems. None of them models the correlation among advertisers' values, so none of them can price it.

\paragraph{Optimal auction theory, affiliation, and M-matrices.}
Myerson~\cite{myerson1981}, Milgrom and Weber~\cite{milgrom1982}, and Roughgarden and Talgam-Cohen~\cite{rtc2016} are the theory we build on. Our mechanism is the Gaussian case of the latter's Theorem 5.3, and Section~\ref{sec:theory} states the exact relation, including to Cr\'emer and McLean~\cite{cremer1988}. For general correlated priors, the optimal deterministic auction is NP-hard to compute with three or more bidders~\cite{pp2011}, and the optimal randomized auction is a linear program whose size grows with the support~\cite{dfk2011}, which is one reason to work inside a family where it has a closed form. Two lines of work learn the auction itself rather than the prior. For a given correlated prior, Guo et al.~\cite{guo2024nninterp} interpolate Myerson's greedy rule, which need not be monotone, and certify monotonicity afterwards by integer programming. In our family the greedy rule is already monotone (Lemmas~\ref{lem:reg} and~\ref{lem:cross}). Huo et al.~\cite{huo2025can} learn rank-score functions whose architecture enforces the single-crossing conditions, so truthfulness holds without any prior, and train them on about $100{,}000$ samples of one market. Both learn the allocation, and neither learns the value distribution. We learn the prior instead and keep the auction in closed form, so the platform gets a distribution it can inspect and that varies with the conversation, exact optimality within the family, and pricing in $O(n^2)$ time. What we give up is a distributional assumption, which Section~\ref{sec:exp-rev} tests. Learning the prior from samples is studied in~\cite{cole2014} and, for correlated buyers, in~\cite{fu2014}, and~\cite{ostrovsky2011} gives field evidence for prior-based reserves. To our knowledge, we are the first to use Karlin and Rinott's characterization of affiliated Gaussians by an M-matrix precision~\cite{karlin1980} in computational advertising or mechanism design. It is standard in graphical models~\cite{lauritzen1996}, and estimation under it is well studied: the constrained Gaussian maximum-likelihood fit is convex and has a solution under mild conditions~\cite{luz2019,slawski2015}, which Section~\ref{sec:exp-realq} uses.
\section{Model and Preliminaries}\label{sec:model}

This section sets up the environment, states the prior, and derives the two things the mechanism needs from it: how to condition one bidder's value on the rivals' bids, and why the prior's shape guarantees affiliation.

\subsection{Environment}

An LLM system is generating an answer to a user query $x$. At certain positions in the generation process the system identifies an \emph{ad opportunity}: a location where sponsored content can be inserted in a contextually relevant way. For each opportunity, a set $\Ncal=\{1,\dots,n\}$ of candidate advertisers competes for a single slot.

Each opportunity is a single-item auction with no repeated interaction. Each advertiser $i$ holds a private valuation signal $s_i\in\R$, its willingness to pay for appearing at this position for query $x$, and its value is its own signal, $v_i=s_i$. Three settings differ in how the $s_i$ relate. Under \emph{independent private values} (IPV) the signals are drawn independently. Under \emph{correlated private values} (our setting, the ``correlated values'' cell of~\cite{rtc2016}) each bidder still knows its own value, but the signals are jointly distributed, so a rival's report tells the seller something about $i$'s value. Under \emph{interdependent values} a bidder's value itself depends on others' signals, which we do not treat. We assume the dependence is positive (\emph{affiliation}, defined below). Negative correlation, as between close substitutes, is outside the model (Section~\ref{sec:conclusion}).

Two facts about sponsored-search values shape the prior. First, values depend on the query, since a conversation with strong commercial intent is worth more to every advertiser than a navigational one, so the prior's mean must move with the context. Second, values move together, because a demand shock the platform does not observe, such as a product launch or a seasonal rise in demand, raises every advertiser's value at once, and this co-movement is affiliation. We capture both with one model, a multivariate normal whose mean is a function of the context and whose precision matrix (the inverse covariance) is restricted to the class that affiliation requires.

Three terms recur below. \emph{Affiliation} is the economic premise: when one advertiser's value is high, every other advertiser's value is more likely to be high too. Its formal definition is $\mathrm{MTP}_2$, that the joint density is log-supermodular. An \emph{M-matrix} is a positive definite matrix whose off-diagonal entries are nonpositive, and it is what $\mathrm{MTP}_2$ reduces to for a Gaussian: the precision matrix must be an M-matrix, which says that every pair of bidders is nonnegatively partially correlated (Proposition~\ref{prop:mtp2}).

\begin{assumption}[Affiliated valuations]\label{as:mvn}
Given a context embedding $\be$ (derived from the query and the response generated so far), the valuation vector $\bs=(s_1,\dots,s_n)$ is drawn from a multivariate normal (MVN) distribution
\begin{equation}
\bs\mid\be \sim \N\big(\bmu(\be),\,Q(\be)^{-1}\big),
\end{equation}
where the precision matrix $Q(\be)\in\R^{n\times n}$ satisfies (i) $Q\succ 0$ (symmetric positive definite) and (ii) $Q_{ij}\le 0$ for all $i\neq j$. A matrix satisfying (i)--(ii) is a \emph{symmetric positive definite M-matrix}.
\end{assumption}

Condition (ii) is the one that matters, since for a Gaussian it is equivalent to affiliation (Proposition~\ref{prop:mtp2}). Section~\ref{sec:whymvn} explains why we choose this distribution, after the quantities it must provide have been introduced. Section~\ref{sec:neural} describes how we parameterize and fit this class.

\begin{remark}[Negative signals]\label{as:pos}
A Gaussian puts some mass on negative values, and so do our fitted priors (Appendix~S1). This has no formal consequence, since the incentive results hold on all of $\R^n$, a bidder with a negative signal has a negative conditional virtual value (it is always below the signal itself, Definition~\ref{def:cvv}) and so is never allocated and pays zero, and every signal in our data is an observed bid in cents.
\end{remark}

\begin{proposition}[Affiliation $\Leftrightarrow$ M-matrix]\label{prop:mtp2}
A multivariate normal $\N(\bmu,Q^{-1})$ is affiliated ($\mathrm{MTP}_2$) if and only if $Q\succ0$ and $Q_{ij}\le 0$ for every pair $i\neq j$.
\end{proposition}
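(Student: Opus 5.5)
The plan is to work directly with the log-density and use the standard characterization of $\mathrm{MTP}_2$ for smooth, strictly positive densities. The density of $\N(\bmu,Q^{-1})$ exists and is everywhere positive because $Q\succ0$. That positive definiteness is part of what it means for $\N(\bmu,Q^{-1})$ to be a nondegenerate normal with precision $Q$, so the real content of the equivalence is the sign condition. The log-density is
\[
\log f(\bs)=c-\tfrac12(\bs-\bmu)^\top Q(\bs-\bmu),
\]
a quadratic whose mixed partials are constant:
\[
\frac{\partial^2\log f}{\partial s_i\,\partial s_j}=-Q_{ij}\quad(i\neq j).
\]
So the proof reduces to the fact that a twice continuously differentiable function on $\R^n$ is supermodular if and only if all its off-diagonal second partials are nonnegative.

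For the direction ``$Q_{ij}\le0\Rightarrow\mathrm{MTP}_2$'', I will verify log-supermodularity, $f(\bs\vee\mathbf{t})f(\bs\wedge\mathbf{t})\ge f(\bs)f(\mathbf{t})$, explicitly.
\begin{itemize}
\item The linear terms and the diagonal terms of the quadratic form cancel, because coordinatewise $\{(\bs\vee\mathbf{t})_i,(\bs\wedge\mathbf{t})_i\}=\{s_i,t_i\}$. This means $\mu$ can be taken to be zero.
\item Each off-diagonal pair $(i,j)$ then contributes $-Q_{ij}$ times
\[
\max(s_i,t_i)\max(s_j,t_j)+\min(s_i,t_i)\min(s_j,t_j)-s_is_j-t_it_j.
\]
\item This bracket is nonnegative by the rearrangement inequality. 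Equivalently, writing $a=s_i-t_i$ and $b=s_j-t_j$, it equals $\max(0,-ab)\ge0$ up to the appropriate sign bookkeeping.
\end{itemize}
Multiplying by $-Q_{ij}\ge0$ and summing over pairs gives the inequality. This avoids invoking the general smooth characterization at all.

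For the converse, I assume $\mathrm{MTP}_2$ and suppose $Q_{ij}>0$ for some $i\neq j$. I then choose $\bs$ and $\mathbf{t}$ that differ from a common point only in coordinates $i$ and $j$:
\[
\bs=\bmu+\varepsilon\be_i,\qquad \mathbf{t}=\bmu+\varepsilon\be_j.
\]
The join and meet are then $\bmu+\varepsilon(\be_i+\be_j)$ and $\bmu$. Here $\be_i$ denotes the standard basis vectors, to be written out without relying on the paper's $\be$, which is the context embedding. A direct computation gives
\[
\log f(\bs\vee\mathbf{t})+\log f(\bs\wedge\mathbf{t})-\log f(\bs)-\log f(\mathbf{t})=-\varepsilon^2Q_{ij}<0,
\]
which contradicts $\mathrm{MTP}_2$.

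I expect the only delicate point to be bookkeeping rather than substance. One issue is the sign identity for the off-diagonal pair term in the forward direction; the cleanest route may be to write the form in terms of differences. The other is stating clearly that $Q\succ0$ is needed for the density to exist, rather than being derived from $\mathrm{MTP}_2$. This is consistent with the cited characterization in~\cite{karlin1980}.
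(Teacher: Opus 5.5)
Your proof is correct. It reaches the paper's conclusion by a more elementary route.

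The paper (Appendix~S1) reads off the constant cross-partials $\partial^2\ln f/\partial s_i\,\partial s_j=-Q_{ij}$. It then relies on the smooth criterion that you mention in your first paragraph and then bypass. Nonnegative mixed partials make $\ln f$ supermodular, hence $\mathrm{MTP}_2$ by Karlin and Rinott's Theorem~3.1. A positive $Q_{ij}$ is dismissed in one line, implicitly through the necessity half of that criterion.

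You instead verify $f(\bs\vee\mathbf{t})f(\bs\wedge\mathbf{t})\ge f(\bs)f(\mathbf{t})$ directly. Join and meet permute each coordinate pair $\{s_i,t_i\}$, so the constant, linear and diagonal parts of the quadratic form cancel. Each pair $i<j$ then contributes $-Q_{ij}$ times a bracket that is nonnegative by rearrangement. Your two-point converse is the finite-difference form of the necessity half. Because the log-density is quadratic, the difference $-\varepsilon^2Q_{ij}$ is exact and no limit is needed.

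Your version is self-contained and gives an explicit violating pair for the converse. The paper's is shorter and places the result inside the general differential characterization of $\mathrm{MTP}_2$ for smooth positive densities.

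Two small tidy-ups:
\begin{itemize}
\item The bracket is exactly $\max(0,-ab)$ with $a=s_i-t_i$ and $b=s_j-t_j$. It vanishes when $ab\ge0$ and equals $-ab$ when $ab<0$, so no further sign bookkeeping is needed.
\item As you say, $Q\succ0$ is not derived from $\mathrm{MTP}_2$. It is what makes $\N(\bmu,Q^{-1})$ a nondegenerate normal with a density, and that is the only sensible reading of that clause of the statement.
\end{itemize}
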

The characterization is Karlin and Rinott's~\cite{karlin1980}.

\emph{Why affiliation pays.} Under IPV the winner's payment depends on competitors' bids only through the ranking. Under affiliation, a high bid from one advertiser is evidence that the others' values are higher than usual as well, so when a rival bids high the seller can raise its estimate of the winner's value and \emph{link} the winner's payment to the rivals' reports. The winner's own value does not change, but what the seller knows about it does, and this is what reduces the winner's information rent, the surplus a bidder keeps because the seller does not know its value (the linkage principle~\cite{milgrom1982}).

\subsection{Conditioning from the Precision Matrix}

At auction time the mechanism needs, for each bidder $i$, the mean and standard deviation of $s_i$ given the rivals' reports $\bs_{-i}$ (here and below, $-i$ denotes all bidders other than $i$, and a subscript $i\mid-i$ a quantity of bidder $i$ conditional on them), which we call the \emph{conditional moments}. Parameterizing the precision matrix $Q$ directly, rather than the covariance $\Sigma=Q^{-1}$, means they can be computed without inverting a matrix:

\begin{lemma}[Precision-based conditioning]\label{lem:cond}
For $\N(\bmu,Q^{-1})$, the conditional law of $s_i$ given $\bs_{-i}$ is
\begin{equation}
\begin{aligned}
s_i\mid\bs_{-i} &\sim \N\big(\mu_{i\mid-i},\,\sigma_{i\mid-i}^2\big),\qquad \sigma_{i\mid-i}^2 = \tfrac{1}{Q_{ii}},\\
\mu_{i\mid-i} &= \mu_i-\tfrac{1}{Q_{ii}}\textstyle\sum_{j\neq i}Q_{ij}(s_j-\mu_j).
\end{aligned}
\end{equation}
\end{lemma}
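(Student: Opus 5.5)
The plan is to read the conditional density directly off the joint Gaussian density, viewed as a function of $s_i$ alone with $\bs_{-i}$ held fixed. The joint density of $\N(\bmu,Q^{-1})$ is proportional to $\exp\!\big(-\tfrac12(\bs-\bmu)^\top Q(\bs-\bmu)\big)$. Because $Q\succ0$, we have $\det Q>0$, so the density is well defined. By Bayes' rule the conditional density of $s_i$ given $\bs_{-i}$ is proportional, as a function of $s_i$, to this joint density. Every factor that does not involve $s_i$ can therefore be absorbed into the normalizing constant.

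The key step is to expand the quadratic form and keep only the terms containing $s_i$. Write $d=\bs-\bmu$. Using the symmetry of $Q$,
\begin{equation*}
d^\top Q d = Q_{ii}d_i^2 + 2d_i\textstyle\sum_{j\neq i}Q_{ij}d_j + (\text{terms free of } d_i).
\end{equation*}
Positive definiteness gives $Q_{ii}=\be_i^\top Q\be_i>0$, where $\be_i$ here denotes the $i$-th standard basis vector. This lets us complete the square:
\begin{equation*}
Q_{ii}d_i^2 + 2d_i c = Q_{ii}\big(d_i + c/Q_{ii}\big)^2 - c^2/Q_{ii},\qquad c=\textstyle\sum_{j\neq i}Q_{ij}d_j .
\end{equation*}
Since $c$ depends only on $\bs_{-i}$, the conditional density is proportional to $\exp\!\big(-\tfrac{Q_{ii}}{2}(s_i-\mu_i+c/Q_{ii})^2\big)$. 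This is the kernel of a normal distribution with variance $1/Q_{ii}$ and mean $\mu_i-\tfrac{1}{Q_{ii}}\sum_{j\neq i}Q_{ij}(s_j-\mu_j)$. A density proportional to a normal kernel must equal that normal density, which gives exactly the formulas in Lemma~\ref{lem:cond}.

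I do not expect a genuine obstacle here. The only points that need care are the ones already noted: $Q_{ii}>0$ follows from (i) of Assumption~\ref{as:mvn}, and the cross term picks up a factor of 2 only because $Q$ is symmetric. As a consistency check, one could recover the same result from the Schur-complement identity $(\Sigma^{-1})_{ii}=1/(\Sigma_{ii}-\Sigma_{i,-i}\Sigma_{-i,-i}^{-1}\Sigma_{-i,i})$ together with $Q_{i,-i}=-Q_{ii}\Sigma_{i,-i}\Sigma_{-i,-i}^{-1}$. The completing-the-square argument avoids both of these identities, so I would present that one.
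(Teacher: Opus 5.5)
Your proposal is correct and follows the paper's own proof (Appendix S1). Both partition the quadratic form into the terms involving $s_i$, complete the square using $Q_{ii}>0$, and read off the normal kernel with variance $1/Q_{ii}$ and the stated conditional mean from row $i$ of $Q$.
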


The lemma is standard in Gaussian graphical models~\cite{lauritzen1996}. Both conditional moments come from a single row of $Q$, so the auction in Section~\ref{sec:mech} is fast to price.

\subsection{Why a Context-Dependent Gaussian?}\label{sec:whymvn}

Our prior is a Gaussian whose mean moves with the conversation and whose precision matrix is an M-matrix. We choose it for three reasons.
\emph{(1) Theoretical support.} Roughgarden and Talgam-Cohen~\cite{rtc2016} verify the conditions of their ex-post optimality framework for one example, a symmetric multivariate normal. We verify them for the whole M-matrix class, with arbitrary context-dependent means and covariances (Proposition~\ref{prop:mtp2}, Section~\ref{sec:theory}), so the optimal auction for correlated values extends from one example to the whole class, and through a log transformation to the log-normal family (Section~\ref{sec:robust}).
\emph{(2) Everything the auction needs has a closed form.} Per-query pricing needs three things: the conditional distribution (Lemma~\ref{lem:cond}), because the price is set after the rivals report; the conditional virtual value (Section~\ref{sec:mech}), because the optimal allocation ranks bidders by it; and a checkable criterion for affiliation (Proposition~\ref{prop:mtp2}), because the incentive results need it to hold for every prior the fitting procedure can output. The Gaussian gives all three in closed form. The first two are one-line formulas in the Mills ratio, and the third is a sign condition on the precision matrix that every output of the parameterization satisfies (Section~\ref{sec:neural}).
\emph{(3) Truthful for every member of the family.} The fitted prior is an estimate, and real bids follow the Gaussian only approximately. Because every prior in the family gives a truthful auction (Theorem~\ref{thm:robust}), estimation error can cost revenue (Proposition~\ref{prop:stab}) but not incentives.
\section{The \textsc{AffMyerson} Mechanism}\label{sec:mech}

\begin{figure*}[t]\centering
\includegraphics[width=0.60\textwidth]{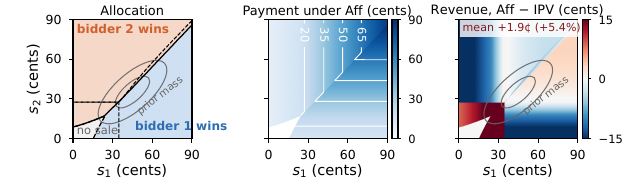}
\Description{Two-bidder example: the conditional distribution, the winner's payment surface, and the revenue gain over IPV-Myerson.}
\caption{How conditioning moves money (illustrative two-bidder prior with means $45$ and $35$ cents, standard deviations $13$ and $12$, $\rho=0.7$; derived quantities Appendix~S1). \textbf{Left:} allocation regions of \textsc{AffMyerson} (solid) vs.\ IPV-Myerson (dashed). The winner's threshold moves with the opponent's report. \textbf{Center:} the winner's payment $t_0$, which rises with the opponent's bid ($\partial t_0/\partial s_j>0$ in this bivariate case, Section~\ref{sec:compare}). \textbf{Right:} the revenue difference at each bid pair (\textsc{AffMyerson} minus IPV-Myerson). The ellipses enclose $1\sigma$ and $2\sigma$ of the prior, that is, where most bid pairs fall. Averaged over the prior the difference is $+1.9$ cents ($+5.4\%$).}\label{fig:mechanism}
\end{figure*}

We now define the mechanism, first the quantity it ranks, then the allocation and payment rule, and finally its comparison with IPV-Myerson (Section~\ref{sec:compare}).

\subsection{Conditional Virtual Values}

Myerson's optimal auction ranks bidders by virtual value, the bidder's value minus the information rent it can extract, $s_i-(1-F(s_i))/f(s_i)$~\cite{myerson1981}. Milgrom and Weber~\cite{milgrom1982} and Roughgarden and Talgam-Cohen~\cite{rtc2016} extend the notion to correlated values by replacing the marginal distribution with the distribution of $s_i$ conditional on the other bids. Under our prior that conditional distribution is normal with the moments of Lemma~\ref{lem:cond}, and the virtual value has a closed form in one standard function, the Mills ratio $R(z)=(1-\Phi(z))/\phi(z)$, which is the tail probability of a standard normal divided by its density. $R$ is decreasing and tabulated, so the virtual value below costs one function evaluation per bidder.

\begin{definition}[Conditional virtual value]\label{def:cvv}
$\psi_i(s_i\mid\bs_{-i}) = s_i-\sigma_{i\mid-i}\,R(z_i)$, where $z_i=(s_i-\mu_{i\mid-i})/\sigma_{i\mid-i}$ is the standardized residual.
\end{definition}

\subsection{Allocation and Payment}

The mechanism allocates to the bidder with the highest \emph{nonnegative} conditional virtual value and charges the \emph{threshold bid}, the lowest report at which the winner still wins. The threshold $t_0$ solves the implicit equation
\begin{equation}\label{eq:threshold}
\psi_{i^*}(t_0\mid\bs_{-i^*})=\max\big(0,\ \max_{j\neq i^*}\psi_j(s_j\mid t_0,\bs_{-\{i^*,j\}})\big),
\end{equation}
and the payment is $p_{i^*}=t_0$. Since $v_{i^*}=s_{i^*}$, the critical payment equals the threshold signal, and Lemma~\ref{thm:threshold} shows that $t_0$ is unique in $(0,s_{i^*}]$, so we find it by bisection on that bracket (numerical tolerance: Appendix~S2). If all $\psi_i<0$ the item is unallocated. Each virtual value needs one row of $Q$, so pricing costs $O(n^2)$ plus a constant number of $O(n)$ bisection steps, microseconds in pure Python (Appendix~S14).

\subsection{Comparison with IPV-Myerson: the Affiliation Rent}\label{sec:compare}

Under independent private values, Myerson's virtual value uses the \emph{marginal} distribution of $s_i$. \textsc{AffMyerson} uses the conditional one, which differs in two ways. Its spread is smaller, since the rivals' bids resolve part of the uncertainty about $s_i$ (the law of total variance), and a smaller spread leaves the bidder a smaller information rent $\sigma_{i\mid-i}R(z_i)$. Its mean moves with the rivals' bids, so a high rival bid raises the threshold $t_0$ the winner must clear ($\partial t_0/\partial s_j>0$ for $n{=}2$; Figure~\ref{fig:mechanism}). In the two-brand prior of Figure~\ref{fig:mechanism}, a rival bid of $61$ cents moves bidder 1's conditional mean from $45$ to $64.7$ cents and shrinks its conditional deviation from $13$ to $9.3$. For $n\ge3$ a high rival bid also lowers the other rivals' virtual values, so the net effect on $t_0$ can take either sign (Appendix~S2). Truthfulness does not depend on this, only on monotonicity in a bidder's own report (Lemmas~\ref{lem:reg} and~\ref{lem:cross}). We call the revenue gap between the two the \emph{affiliation rent}. Whether it survives an estimated prior is the question of the next section.
\section{Theoretical Analysis}\label{sec:theory}

We first establish the two monotonicity properties the incentive argument needs, then the incentive and optimality results, and finally what survives when the prior is only an estimate.

\subsection{Regularity and Cross-Monotonicity}

\begin{lemma}[Unconditional regularity]\label{lem:reg}
For every $\bs_{-i}$, $\psi_i(s_i\mid\bs_{-i})$ is strictly increasing in $s_i$.
\end{lemma}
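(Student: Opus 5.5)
We reduce the statement to a one-variable fact about the standard normal Mills ratio. By Lemma~\ref{lem:cond}, once $\bs_{-i}$ is fixed the conditional moments $\mu_{i\mid-i}$ and $\sigma_{i\mid-i}=Q_{ii}^{-1/2}$ are constants that do not depend on $s_i$: the conditional mean involves only the rivals' reports, and the conditional variance is the constant $1/Q_{ii}$. Write $\sigma=\sigma_{i\mid-i}$ and $z=(s_i-\mu_{i\mid-i})/\sigma$. Then
\[
\psi_i(s_i\mid\bs_{-i})=\mu_{i\mid-i}+\sigma\bigl(z-R(z)\bigr).
\]
Since $\sigma>0$ and $z$ is a strictly increasing affine function of $s_i$, it suffices to show that $g(z)=z-R(z)$ is strictly increasing on all of $\R$, that is, $R'(z)<1$ for every $z$.

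To get a usable formula for $R'$, we differentiate $R=(1-\Phi)/\phi$ using $\phi'(z)=-z\phi(z)$. This gives $R'(z)=zR(z)-1$, so
\[
g'(z)=2-zR(z).
\]
For $z\le 0$ we have $zR(z)\le 0$, so $g'(z)\ge 2>0$ immediately. For $z>0$ we invoke the classical Gordon bound $R(z)<1/z$, which gives $zR(z)<1$ and hence $g'(z)>1>0$. The Gordon bound itself follows by writing $1-\Phi(z)=\int_z^\infty\phi(t)\,dt<\int_z^\infty (t/z)\phi(t)\,dt=\phi(z)/z$.

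Combining the two cases, $g'>0$ everywhere, so $\psi_i$ is strictly increasing in $s_i$ for every fixed $\bs_{-i}$. An equivalent route is log-concavity of the normal density, which makes the hazard rate $1/R$ increasing, so $R$ is decreasing and $-\sigma R(z)$ is increasing in $s_i$. Either way, the only substantive step is the Mills-ratio inequality $zR(z)<1$ on $z>0$. It is classical and short, so we expect no real obstacle beyond checking the sign of $R'$ carefully. Note that the M-matrix condition is not needed for this lemma. Only Gaussianity of the conditional law is used, and the sign condition on $Q$ enters later, in cross-monotonicity (Lemma~\ref{lem:cross}).
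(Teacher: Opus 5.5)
Your proof is correct and matches the paper's argument (Appendix~S1) essentially step for step. The paper also uses $R'(z)=zR(z)-1$ to get $\partial\psi_i/\partial s_i=2-z_iR(z_i)$, splits into $z_i\le 0$ and $z_i>0$ with the Mills bound $R(z)<1/z$, and concludes that the derivative exceeds one at every $\bs_{-i}$.
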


The derivative exceeds one, so regularity holds and no ironing (Myerson's repair for non-monotone virtual values) is needed~\cite{bagnoli2005}. Because the conditionals stay Gaussian with a variance that does not depend on $\bs_{-i}$, regularity holds at every $\bs_{-i}$.

\begin{lemma}[Cross-monotonicity]\label{lem:cross}
For all $j\neq i$, $\psi_j(s_j\mid\bs_{-j})$ is nonincreasing in $s_i$, strictly decreasing when $Q_{ji}<0$.
\end{lemma}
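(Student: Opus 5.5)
The plan is to write $\psi_j$ as a composition and differentiate it through $s_i$. By Lemma~\ref{lem:cond}, the conditional moments of bidder $j$ are $\sigma_{j\mid-j}=Q_{jj}^{-1/2}$ and $\mu_{j\mid-j}=\mu_j-Q_{jj}^{-1}\sum_{k\neq j}Q_{jk}(s_k-\mu_k)$. The standard deviation does not involve any rival bid. The mean is affine in $s_i$ with slope
\[
\frac{\partial\mu_{j\mid-j}}{\partial s_i}=-\frac{Q_{ji}}{Q_{jj}}.
\]
This slope is nonnegative by condition (ii) of Assumption~\ref{as:mvn}, and it is strictly positive when $Q_{ji}<0$; recall $Q_{jj}>0$ because $Q\succ0$. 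Hence $s_i$ enters $\psi_j(s_j\mid\bs_{-j})=s_j-\sigma_{j\mid-j}R(z_j)$ only through the standardized residual $z_j=(s_j-\mu_{j\mid-j})/\sigma_{j\mid-j}$. That residual is affine in $s_i$ with slope $Q_{ji}/(Q_{jj}\sigma_{j\mid-j})=Q_{ji}Q_{jj}^{-1/2}\le 0$.

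Next we use the fact that $R$ is strictly decreasing on $\R$. The paper states this, and one can justify it from the identity $R'(z)=zR(z)-1$ together with the classical Mills-ratio bound $zR(z)<1$, valid for all $z$. Since $\sigma_{j\mid-j}$ is a positive constant, $-\sigma_{j\mid-j}R(z)$ is strictly increasing in $z$. Composing with the nonincreasing affine map $s_i\mapsto z_j$ makes $\psi_j$ nonincreasing in $s_i$. When $Q_{ji}<0$ that map is strictly decreasing, so $\psi_j$ is strictly decreasing. When $Q_{ji}=0$, $\psi_j$ does not depend on $s_i$ at all, which is consistent with ``nonincreasing''.

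The derivative can also be written out explicitly:
\[
\frac{\partial\psi_j}{\partial s_i}=-\sigma_{j\mid-j}R'(z_j)\,\frac{Q_{ji}}{Q_{jj}\sigma_{j\mid-j}}=\frac{Q_{ji}}{Q_{jj}}\bigl(1-z_jR(z_j)\bigr).
\]
This is $\le0$, and it is $<0$ iff $Q_{ji}<0$. The only nontrivial ingredient is strict monotonicity of $R$, equivalently $zR(z)<1$ for all real $z$. For $z\le0$ this is immediate because $R>0$. For $z>0$ it is the Gordon inequality $1-\Phi(z)<\phi(z)/z$. It is the same fact that underlies Lemma~\ref{lem:reg}, where $\partial\psi_i/\partial s_i=1-R'(z_i)=2-z_iR(z_i)>1$. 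So I expect no real obstacle beyond citing or briefly proving this inequality, for example by integrating $\phi'(t)=-t\phi(t)$ over $[z,\infty)$ and using $t>z$.
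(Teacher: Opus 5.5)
Your proof is correct and follows the same route as the paper's proof in Appendix~S1. The conditional scale $Q_{jj}^{-1/2}$ does not depend on $s_i$, the conditional mean has slope $-Q_{ji}/Q_{jj}\ge 0$ in $s_i$, and strict monotonicity of $R$ (from $R'=zR-1$ and $zR<1$) finishes the argument; your explicit derivative $\frac{Q_{ji}}{Q_{jj}}\bigl(1-z_jR(z_j)\bigr)$ is a correct, slightly more quantitative form of the same step.
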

Proofs: Appendix~S1.

Together the two lemmas give the two-sided monotonicity that ex-post incentive compatibility requires under correlated values~\cite{rtc2016}. Regularity needs only a Gaussian conditional. Cross-monotonicity is where the M-matrix condition enters, through the sign of $Q_{ji}$, and it holds for every M-matrix $Q$, which Theorem~\ref{thm:robust} builds on.

\subsection{Incentive Compatibility and Optimality}

\begin{lemma}[Threshold well-posedness]\label{thm:threshold}
For the provisional winner $i^*$, Equation~\eqref{eq:threshold} has a unique solution $t_0\in(0,s_{i^*}]$ (proof: Appendix~S2).
\end{lemma}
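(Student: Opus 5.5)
The plan is to reduce Equation~\eqref{eq:threshold} to finding a root of one scalar function of the candidate threshold $t$. That function will be continuous and strictly increasing, negative at $t=0$ and nonnegative at $t=s_{i^*}$. Fix the rivals' reports $\bs_{-i^*}$ and define, for $t\in\R$,
\begin{equation*}
g(t)=\psi_{i^*}(t\mid\bs_{-i^*})-\max\Big(0,\ \max_{j\neq i^*}\psi_j\big(s_j\mid t,\bs_{-\{i^*,j\}}\big)\Big).
\end{equation*}
A solution of Equation~\eqref{eq:threshold} is exactly a zero of $g$. I would then establish four properties of $g$ in order: continuity, strict monotonicity, the sign at $t=0$, and the sign at $t=s_{i^*}$.

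\emph{Continuity.} By Lemma~\ref{lem:cond}, $\sigma_{i^*\mid-i^*}=Q_{i^*i^*}^{-1/2}$ does not depend on $t$, and $\mu_{i^*\mid-i^*}$ does not involve $t$ at all. Hence $\psi_{i^*}(t\mid\bs_{-i^*})=t-\sigma R\big((t-\mu)/\sigma\big)$ is continuous in $t$, because the Mills ratio $R$ is continuous. For each rival $j$, $\mu_{j\mid-j}$ is affine in $t$ (with coefficient $-Q_{ji^*}/Q_{jj}$) and $\sigma_{j\mid-j}$ is constant, so $\psi_j$ is also continuous in $t$. Finite maxima of continuous functions are continuous, so $g$ is continuous.

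\emph{Strict monotonicity.} By Lemma~\ref{lem:reg}, the first term of $g$ is strictly increasing in $t$. By Lemma~\ref{lem:cross}, each $\psi_j(s_j\mid t,\cdot)$ is nonincreasing in $t$, since $t$ occupies bidder $i^*$'s coordinate in $j$'s conditioning set. A maximum of nonincreasing functions, together with the constant $0$, is nonincreasing, so the subtracted term is nonincreasing. It follows that $g$ is strictly increasing, and therefore $g$ has at most one zero.

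\emph{Sign at the endpoints.} At $t=0$, Definition~\ref{def:cvv} gives $\psi_{i^*}(0\mid\bs_{-i^*})=-\sigma R(z)<0$, because $R>0$ everywhere. The subtracted term is at least $0$, so $g(0)<0$. At $t=s_{i^*}$, $g$ evaluates the actual profile. Since $i^*$ is the provisional winner, its conditional virtual value is nonnegative and at least every rival's, so $g(s_{i^*})\ge 0$; if ties are broken in $i^*$'s favour, equality is allowed. The intermediate value theorem on $[0,s_{i^*}]$ then gives a zero $t_0\in(0,s_{i^*}]$, and strict monotonicity makes it unique on all of $\R$.

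I expect the only delicate step to be the right endpoint. It needs the winner's selection rule stated as ``$\psi_{i^*}\ge\max(0,\max_j\psi_j)$'' with a fixed tie-breaking convention, so that $g(s_{i^*})\ge0$ holds and the root may sit exactly at $s_{i^*}$. The rest follows directly from Lemmas~\ref{lem:cond}, \ref{lem:reg} and~\ref{lem:cross}. Strict monotonicity also justifies using bisection on the bracket $(0,s_{i^*}]$.
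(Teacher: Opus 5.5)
Your proposal is correct and follows the paper's Appendix~S2 proof essentially step for step. The paper uses the same gap function, gets strict monotonicity from Lemmas~\ref{lem:reg} and~\ref{lem:cross}, gets a negative value at $t=0$ from $\psi_{i^*}(0\mid\bs_{-i^*})<0$ and a nonnegative value at $t=s_{i^*}$ from the selection of the provisional winner, and then concludes by the intermediate value theorem and uniqueness. Your explicit continuity argument via Lemma~\ref{lem:cond} is a small addition, since the paper only asserts continuity.
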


The incentive guarantee is the standard ex-post characterization for correlated private values (\cite{rtc2016}, Prop.~5.1) applied to our allocation rule. A bidder cannot gain by misreporting when winning is monotone in its own report and the price is the lowest report that still wins. Lemmas~\ref{lem:reg} and~\ref{lem:cross} supply the monotonicity and Lemma~\ref{thm:threshold} the price. A winner pays $t_0\le s_{i^*}$ and losers pay zero (full proof: Appendix~S2).

\begin{theorem}[EPIC and EPIR]\label{thm:epic}
The mechanism of Section~\ref{sec:mech} is ex-post incentive compatible (EPIC) and ex-post individually rational (EPIR); with private values ($v_i=s_i$), EPIC is equivalent to dominant-strategy IC~\cite{rtc2016}.
\end{theorem}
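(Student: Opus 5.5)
The plan is to reduce the theorem to the standard characterization for single-parameter environments with correlated private values (\cite{rtc2016}, Prop.~5.1). That result says a deterministic mechanism is EPIC if and only if, for every fixed $\bs_{-i}$, two things hold: the allocation to $i$ is monotone in $s_i$, and a winning bidder pays the critical (threshold) report. So the work is to show that \textsc{AffMyerson}'s allocation is a threshold rule in each bidder's own report, and that the payment of Equation~\eqref{eq:threshold} is exactly that threshold.

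\textbf{Step 1: monotonicity of the allocation.} Fix bidder $i$ and $\bs_{-i}$. I would show that the set $W_i(\bs_{-i})=\{s_i:\ i \text{ wins}\}$ is an up-set, i.e.\ an interval $[t_0,\infty)$ or $(t_0,\infty)$, up to tie-breaking. Bidder $i$ wins exactly when $\psi_i(s_i\mid\bs_{-i})\ge 0$ and $\psi_i(s_i\mid\bs_{-i})\ge\psi_j(s_j\mid\bs_{-j})$ for all $j\ne i$.
\begin{itemize}
\item The left-hand side is strictly increasing in $s_i$ by Lemma~\ref{lem:reg}.
\item Each $\psi_j(s_j\mid\bs_{-j})$ depends on $s_i$ only through $\mu_{j\mid-j}$, and it is nonincreasing in $s_i$ by Lemma~\ref{lem:cross}.
\end{itemize}
Hence $g(s_i)=\psi_i-\max(0,\max_{j\ne i}\psi_j)$ is strictly increasing and continuous, since $R$ is continuous. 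Therefore $\{g\ge 0\}$ is an up-set, and its infimum is the unique root $t_0$ when one exists. I would fix a deterministic tie-breaking rule, ties measure-zero under the Gaussian prior, so that the threshold is well defined.

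\textbf{Step 2: the payment is the threshold.} Lemma~\ref{thm:threshold} states that Equation~\eqref{eq:threshold}, i.e.\ $g(t_0)=0$, has a unique solution in $(0,s_{i^*}]$ for the provisional winner. By Step~1 this root is $\inf W_{i^*}(\bs_{-i^*})$. For this root to be the critical value, one point must be checked: Equation~\eqref{eq:threshold} is written with the other bidders' virtual values recomputed at $t_0$, rather than at the true $s_{i^*}$. I would note that this recomputation is exactly what makes it the critical report.

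\textbf{Step 3: IC and IR.} The standard argument then gives both properties.
\begin{itemize}
\item \emph{Truthful bidder wins.} Utility is $s_i-t_0\ge0$. Any report that still wins pays the same $t_0$, since $t_0$ depends only on $\bs_{-i}$. Any report that loses yields $0$.
\item \emph{Truthful bidder loses.} Then $s_i< t_0$ (or no threshold exists). Winning by misreporting costs $t_0> s_i$, giving negative utility, so losing with utility $0$ is optimal.
\end{itemize}
EPIR follows because winners pay $t_0\le s_{i^*}$ (Lemma~\ref{thm:threshold}) and losers pay $0$. Finally, with private values $v_i=s_i$, a bidder's ex-post utility is unaffected by others' types except through their reports. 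So EPIC for all realized $\bs_{-i}$ is exactly dominant-strategy IC against arbitrary reports $\bs_{-i}\in\R^{n-1}$, which holds because Lemmas~\ref{lem:reg}--\ref{lem:cross} apply for every $\bs_{-i}$, not only typical ones.

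\textbf{Main obstacle.} I expect the main obstacle to be Step~1's handling of the $\max$ over rivals together with the zero reserve on unbounded support. In particular, one must rule out that $g$ never crosses zero, so that $W_i$ is empty or all of $\R$. Showing it is never all of $\R$ requires that $\psi_i\to-\infty$ as $s_i\to-\infty$. This should follow from $\psi_i< s_i$ together with strict monotonicity: $W_i$ then excludes $s_i<0$, which is consistent with $t_0>0$.
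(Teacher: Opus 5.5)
Your proposal is correct and follows essentially the same route as the paper. In both, Lemmas~\ref{lem:reg} and~\ref{lem:cross} give monotonicity of the allocation in the bidder's own report, Lemma~\ref{thm:threshold} identifies the payment in Equation~\eqref{eq:threshold} as the critical bid, the standard critical-value argument (\cite{rtc2016}, Prop.~5.1) yields EPIC, and $t_0\le s_{i^*}$ with zero loser payments yields EPIR. You simply make explicit what the paper leaves implicit: the up-set structure of the winning set, tie-breaking, and the fact that $\psi_i<s_i$ keeps that set inside $(0,\infty)$.
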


\begin{theorem}[Revenue optimality]\label{thm:opt}
If the learned distribution equals the true distribution, the mechanism maximizes expected revenue among all EPIC and EPIR mechanisms with no positive transfers and finite expected payments (two conditions the unbounded Gaussian support requires). The theorem is Theorem~5.3 of~\cite{rtc2016}. This paper verifies its hypotheses for every prior in the family and supplies the revenue identity on the unbounded support (Theorem~S3).
\end{theorem}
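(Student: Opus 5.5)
The proof follows the Myerson route adapted to correlated values, carried out conditionally on $\bs_{-i}$ for each bidder. Fix any EPIC, EPIR mechanism $(x,p)$ with no positive transfers and finite expected payments. For each bidder $i$ and each fixed $\bs_{-i}$, the interim problem in $s_i$ is a single-parameter private-value problem. EPIC gives that $x_i(\cdot,\bs_{-i})$ is nondecreasing in $s_i$, and the standard envelope argument gives the payment identity
\begin{equation*}
p_i(\bs)=s_i x_i(\bs)-\int_{-\infty}^{s_i}x_i(u,\bs_{-i})\,du + p_i(-\infty,\bs_{-i}),
\end{equation*}
where $p_i(-\infty,\bs_{-i})$ denotes the limit of $p_i(s_i,\bs_{-i})-s_ix_i(s_i,\bs_{-i})$ as $s_i\to-\infty$. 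EPIR together with no positive transfers forces this constant to be at most $0$. Because the support is all of $\R$, the lower limit is $-\infty$ rather than a finite lower endpoint. This is the first place where the unbounded support requires care.

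Next I take the expectation over $s_i\mid\bs_{-i}\sim\N(\mu_{i\mid-i},\sigma^2_{i\mid-i})$ from Lemma~\ref{lem:cond}. I swap the order of integration (Fubini/Tonelli) in $\E[\int_{-\infty}^{s_i}x_i\,du]$ to obtain $\E[x_i(\bs)(1-F_{i\mid-i}(s_i))/f_{i\mid-i}(s_i)]$, and hence $\E[p_i\mid\bs_{-i}]\le\E[\psi_i(s_i\mid\bs_{-i})x_i(\bs)\mid\bs_{-i}]$, with $\psi_i$ as in Definition~\ref{def:cvv}. I then take the outer expectation over $\bs_{-i}$ and sum over $i$. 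Expected revenue is therefore bounded by $\E[\sum_i\psi_i x_i]$, and this is in turn at most $\E[\max(0,\max_i\psi_i)]$ because $\sum_i x_i\le1$ and $x_i\ge0$. The \textsc{AffMyerson} allocation attains this pointwise maximum by construction. Its threshold payments are exactly the critical values, since the constant $p_i(-\infty,\bs_{-i})$ equals $0$: a bidder with a very low report has a negative virtual value, is never allocated, and pays zero (Remark~\ref{as:pos}). Consequently the upper bound holds with equality for our mechanism, provided the mechanism is feasible, that is, EPIC and EPIR. Feasibility is Theorem~\ref{thm:epic}, which rests on Lemmas~\ref{lem:reg}, \ref{lem:cross} and~\ref{thm:threshold}. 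These are exactly the hypotheses of Theorem~5.3 of~\cite{rtc2016}, namely monotonicity of the conditional virtual value in the bidder's own signal and the nonincreasing cross effect, both of which hold for every M-matrix $Q$.

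I expect the main obstacle to be justifying the revenue identity on the unbounded support, that is, Theorem~S3. This involves three things: showing the boundary term $\lim_{s_i\to-\infty}$ behaves, justifying the Fubini exchange, and ensuring $\E|\psi_i x_i|<\infty$. My first approach is to use the Gaussian tail bounds $R(z)\le 1/z$ for $z>0$ and $R(z)\le C(1+|z|)$ in general. These give $|\psi_i|\le |s_i|+\sigma_{i\mid-i}(C+|z_i|)$, which is integrable under a Gaussian. For the exchange I would use Tonelli on the nonnegative integrand $x_i(u,\bs_{-i})\1\{u\le s_i\}$, together with $\int_{-\infty}^{s}(1-F)\,$-type bounds. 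For the boundary I would use the finite-expected-payment assumption to show that $\liminf_{s_i\to-\infty}|s_i|\,x_i(s_i,\bs_{-i})f_{i\mid-i}(s_i)=0$ along a sequence. I would then pass to the limit by monotone convergence to handle $x_i$ nondecreasing.
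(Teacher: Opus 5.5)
Your architecture matches the paper's own proof (Theorem S3): fix $\bs_{-i}$, write payments by the envelope formula, exchange integrals by Tonelli, maximize $\sum_i\psi_ix_i$ pointwise, and attain the bound with the monotone rule and critical payments. The step that fails is the one you single out as the main obstacle, integrability. The bound $R(z)\le C(1+|z|)$ is false on the left tail, where $R(z)\sim\sqrt{2\pi}\,e^{z^2/2}$. No pointwise bound on $\psi_i$ can rescue it, because $\psi_i$ is not integrable under the Gaussian. Since $\sigma_{i\mid-i}R(z_i)f_{i\mid-i}(s_i)=1-F_{i\mid-i}(s_i)$, one gets $\E[\sigma_{i\mid-i}R(z_i)\mid\bs_{-i}]=\int_{\R}(1-F_{i\mid-i}(u))\,du=\infty$, and hence $\E[\psi_i^-\mid\bs_{-i}]=\infty$. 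So finiteness of $\E[\psi_ix_i]$ has to come from the allocation vanishing on low types, not from a bound on $\psi_i$. The same holds for the legitimacy of splitting $\E[p_i]$ into $\E[s_ix_i]$ minus $\E[\int^{s_i}x_i]$.

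The missing idea is the one the paper uses. For $s_i<0$, EPIR and no positive transfers give $s_ix_i\ge p_i\ge0$, hence $x_i=p_i=0$, and also $u_i(0)=0$. The identity then reads $p_i=s_ix_i-\int_0^{s_i}x_i$ on $[0,\infty)$. Tonelli needs only $\int_0^\infty(1-F_{i\mid-i})=\E[s_i^+\mid\bs_{-i}]<\infty$, and on $\{s_i\ge0\}$ the term $\sigma_{i\mid-i}R(z_i)x_i$ is integrable for the same reason. The boundary term at $-\infty$ and your $\liminf$ argument via finite expected payments are not needed, since Tonelli on a nonnegative integrand produces no boundary term.

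Your $-\infty$ formulation can also be repaired directly. Your constant is $-\lim_{s_i\to-\infty}u_i$, which is $\le0$ by EPIR alone, so as written your argument never actually uses no positive transfers. EPIR also makes $\int_{-\infty}^{s}x_i=u_i(s)-u_i(-\infty)$ finite. Splitting at $0$ then gives $\int_{\R}x_i(u)(1-F_{i\mid-i}(u))\,du\le u_i(0)-u_i(-\infty)+\E[s_i^+\mid\bs_{-i}]<\infty$. Either way, the integrability you need follows from the mechanism's constraints, and your plan must use that in place of the false Mills-ratio bound.
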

\begin{proof}[Condition-by-condition verification]
\cite{rtc2016}, Thm.~5.3 requires (i) correlated private values (Section~\ref{sec:model}), (ii) affiliation (Proposition~\ref{prop:mtp2}), (iii) regularity at every $\bs_{-i}$ (Lemma~\ref{lem:reg}, so ironing never arises), (iv) cross-monotonicity (Lemma~\ref{lem:cross}), and (v) single-item feasibility (rank-one matroid). Under (i)--(v), the revenue-maximal EPIC/EPIR mechanism in that characterization is exactly ``highest nonnegative conditional virtual value wins at the threshold payment''; boundary and tie-breaking cases are Appendix~S3.
\end{proof}

\begin{corollary}[Strict advantage over IPV-Myerson]\label{cor:strict}\sloppy
Suppose the mechanism uses the \emph{true} distribution and the IPV baseline uses its \emph{true} regular marginals. The weak inequality $\E[\mathrm{Rev}_{\textsc{AffMyerson}}]\ge\E[\mathrm{Rev}_{\mathrm{IPV}}]$ is immediate from Theorem~\ref{thm:opt}, since IPV-Myerson is itself EPIC and EPIR; the inequality is \emph{strict} whenever some $Q_{ij}<0$, because the two allocations then disagree on a set of positive probability (proof Appendix~S3).
\end{corollary}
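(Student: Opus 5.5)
The weak inequality needs no new work. IPV-Myerson with the true regular marginals allocates to the highest nonnegative marginal virtual value and charges the threshold bid, so it is EPIC, EPIR, makes no positive transfers, and has finite expected payments. It therefore lies in the class over which Theorem~\ref{thm:opt} says \textsc{AffMyerson} is optimal.

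For strictness I would use the revenue identity of Theorem~S3. Under EPIC/EPIR with zero rent for the lowest type, expected revenue equals $\E[\sum_i x_i(\bs)\psi_i(s_i\mid\bs_{-i})]$ for any EPIC mechanism, with conditional virtual values in the integrand. \textsc{AffMyerson} maximizes this integrand pointwise: it puts weight on $\arg\max_i \psi_i$ when that maximum is nonnegative and allocates nothing otherwise. The revenue gap is thus
\[
\E\Big[\max\big(0,\max_i\psi_i\big)-\sum_i x^{\mathrm{IPV}}_i(\bs)\,\psi_i(s_i\mid\bs_{-i})\Big]\ge 0,
\]
with a pointwise nonnegative integrand. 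It suffices to exhibit an open set of $\bs$, which has positive Gaussian probability, on which the integrand is strictly positive. Since all functions involved are continuous off tie sets, it is enough to find one point where the IPV allocation gives a strictly suboptimal conditional virtual value, and then perturb.

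To construct that point, fix a pair with $Q_{ij}<0$. I would first try the case where IPV allocates to $i$ but the conditional virtual value $\psi_i$ is negative.
\begin{itemize}
\item The marginal virtual value of $i$ depends only on $s_i$. It is positive once $s_i$ exceeds the marginal reserve $r_i$.
\item The conditional one is $\psi_i=s_i-\sigma_{i\mid-i}R\big((s_i-\mu_{i\mid-i})/\sigma_{i\mid-i}\big)$. By Lemma~\ref{lem:cond}, $\mu_{i\mid-i}$ is affine in $s_j$ with coefficient $-Q_{ij}/Q_{ii}>0$.
\item Fix $s_i$ slightly above $r_i$ and send $s_j\to+\infty$. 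Then $z_i\to-\infty$ and $R(z_i)\approx|z_i|$, so $\psi_i\approx\mu_{i\mid-i}\to+\infty$. This is the wrong direction for making $\psi_i$ negative.
\item Send $s_j\to-\infty$ instead. Then $z_i\to+\infty$, $R(z_i)\to0$, and $\psi_i\to s_i>0$. This also fails to make $\psi_i$ negative.
\end{itemize}
So this case fails, and I would switch to a ranking disagreement, which is the main obstacle.

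Take $n\ge2$ with $i,j$ linked. Choose $s_j$ very negative, so $j$ loses in both mechanisms, and keep the other rivals' signals moderate. Then vary $s_i$ near the value where the IPV marginal virtual value crosses zero. The zero of the conditional virtual value, $\psi_i(\cdot\mid\bs_{-i})=0$, is at the conditional reserve. That reserve equals $\mu_{i\mid-i}+\sigma_{i\mid-i}c$, where $c$ solves $c\,\sigma_{i\mid-i}+\mu_{i\mid-i}=\sigma_{i\mid-i}R(c)$. It is strictly increasing in $\mu_{i\mid-i}$ (by Lemma~\ref{lem:reg}, via implicit differentiation), and $\mu_{i\mid-i}$ moves strictly with $s_j$ because $Q_{ij}\ne0$. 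The marginal reserve $r_i$ does not move with $s_j$. Hence there is some $s_j$ at which the conditional reserve differs from $r_i$, and an open interval of $s_i$ lies strictly between the two reserves.

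On that set, with all other bidders having negative $\psi$ under both rules (make their signals very low), the two mechanisms disagree on whether to allocate. On the interval, either IPV allocates with $\psi_i<0$, or \textsc{AffMyerson} allocates with $\psi_i>0$ while IPV does not. Both cases give a strictly positive integrand. An open box around this point has positive probability under the nondegenerate Gaussian, so the inequality is strict.

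Remaining details:
\begin{itemize}
\item Confirm that IPV's payment rule also fits the identity. It does, because the identity holds for every EPIC mechanism with finite payments.
\item Handle tie sets, which have measure zero.
\end{itemize}
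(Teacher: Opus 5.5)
Your proof is correct and is essentially the paper's own argument. It has the same ingredients: the revenue identity with the true conditional virtual values, pointwise optimality of \textsc{AffMyerson}, and an open set where $s_i$ lies strictly between the fixed IPV reserve $r_i$ and the conditional reserve (which moves strictly with $s_j$ since $Q_{ij}<0$) while every rival is inactive, so that exactly one mechanism sells to $i$.

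One slip in the branch you discarded: as $s_j\to+\infty$ you get $z_i\to-\infty$ and $R(z_i)\sim\sqrt{2\pi}\,e^{z_i^2/2}$, so $\psi_i\to-\infty$ (as Lemma~\ref{lem:cross} requires). The case where IPV sells to $i$ with $\psi_i<0$ therefore does occur, and it is the other half of the paper's dichotomy, though your final construction does not need it.
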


\subsection{Truthfulness under an Estimated Prior}\label{sec:robust}

The results so far use the true prior. We now let the mechanism run on any prior in the family.

\begin{theorem}[Truthfulness for every prior in the family]\label{thm:robust}
Let the mechanism run with \emph{any} M-matrix parameters $(\hat\bmu,\hat Q)$ satisfying Assumption~\ref{as:mvn}, not necessarily equal to the true ones. Then it remains EPIC, hence dominant-strategy IC, and EPIR.
\end{theorem}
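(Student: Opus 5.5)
The plan is to observe that none of the incentive arguments behind Theorem~\ref{thm:epic} uses the fact that the prior is the true one. They use only structural properties of the parameters $(\hat\bmu,\hat Q)$ that every member of the family shares. So I would redo the proof of Theorem~\ref{thm:epic} with $(\hat\bmu,\hat Q)$ in place of $(\bmu,Q)$, checking at each step that only Assumption~\ref{as:mvn} is used. The mechanism is then a deterministic function of the reports and of $(\hat\bmu,\hat Q)$: it computes $\hat\psi_i$ by Lemma~\ref{lem:cond} and Definition~\ref{def:cvv} from row $i$ of $\hat Q$, allocates to the highest nonnegative $\hat\psi_i$, and charges the threshold $t_0$. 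The true distribution does not enter the allocation or the payment at all. It affects only expected revenue, so it plays no role in an ex-post statement.

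The key steps, in order, are these.

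\emph{(a) Own-report monotonicity.} Lemma~\ref{lem:reg} applied to $\hat Q$ gives that $\hat\psi_i(s_i\mid\bs_{-i})$ is strictly increasing in $s_i$. This needs only that the conditional is Gaussian with variance $1/\hat Q_{ii}>0$, which $\hat Q\succ0$ guarantees. The derivative equals $1-R'(z)>1$ by the log-concavity of $\Phi$.

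\emph{(b) Cross-monotonicity.} Lemma~\ref{lem:cross} applied to $\hat Q$ uses only $\hat Q_{ji}\le0$. In that case $\hat\mu_{j\mid-j}$ is nondecreasing in $s_i$, so $z_j$ is nonincreasing in $s_i$. Since $R$ is decreasing, $\hat\psi_j$ is nonincreasing in $s_i$.

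\emph{(c) Monotone allocation.} Combining (a) and (b), for fixed $\bs_{-i}$ raising $s_i$ raises bidder $i$'s virtual value and weakly lowers every rival's. The event ``$i$ wins'' is therefore an up-set in $s_i$, that is, a half-line $[t_0,\infty)$ up to tie-breaking.

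\emph{(d) Threshold payment.} Lemma~\ref{thm:threshold}, whose proof again uses only (a), (b) and the Gaussian form, gives a unique $t_0\in(0,s_{i^*}]$ that is exactly the left endpoint of this half-line.

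With (c) and (d) established, I would invoke the critical-value characterization (\cite{rtc2016}, Prop.~5.1). Fix $\bs_{-i}$ and the true value $v_i=s_i$. Consider first the case where truthful reporting makes $i$ win. Then its utility is $s_i-t_0\ge0$. Any report that still wins pays the same $t_0$, because $t_0$ depends only on $\bs_{-i}$. Any losing report yields $0$. Now consider the case where truthful reporting makes $i$ lose, so $s_i<t_0$. Then any winning report pays $t_0>s_i$ and gives negative utility. Hence truthful reporting is ex-post optimal, i.e.\ the mechanism is EPIC. EPIR follows because winners pay $t_0\le s_{i^*}$ and losers pay zero. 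Since values are private, EPIC is the same as dominant-strategy IC, as in Theorem~\ref{thm:epic}.

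I expect the main obstacle to be verifying that the bracket and uniqueness claims of Lemma~\ref{thm:threshold} hold for arbitrary M-matrix parameters and not just for fitted ones. The concern is $t_0>0$, which would fail if the winning region extended to very negative reports. I would use $\hat\psi_i(t)<t$ (Remark~\ref{as:pos}): at $t\le0$, bidder $i$'s virtual value is negative and it cannot win against the reserve at zero. Continuity and strict monotonicity in $t$ of the difference between the two sides of Equation~\eqref{eq:threshold} then give existence and uniqueness by the intermediate value theorem. Rival virtual values are continuous and nonincreasing in $t$ by (b), and $\hat\psi_{i^*}$ is strictly increasing by (a). Ties occur on a measure-zero set and are handled by a fixed tie-breaking rule, which preserves the up-set structure.
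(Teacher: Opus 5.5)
Your proposal is correct and follows the paper's own argument. The paper's proof likewise notes that EPIC and EPIR are structural consequences of monotone allocation with threshold payments (\cite{rtc2016}, Prop.~5.1), that neither rule references the true distribution, and that Lemmas~\ref{lem:reg} and~\ref{lem:cross} hold for every M-matrix $\hat Q$. You spell out the critical-value step, the $(0,s_{i^*}]$ bracket of Lemma~\ref{thm:threshold}, and the tie-breaking, all of which the paper leaves implicit or defers to Appendix~S2.
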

\begin{proof}
EPIC and EPIR are \emph{structural} properties of the allocation and payment rules (\cite{rtc2016}, Prop.~5.1): monotone allocation and threshold payments, neither of which references the true distribution. Lemmas~\ref{lem:reg} and~\ref{lem:cross} hold for every M-matrix $\hat Q$, so the argument for Theorem~\ref{thm:epic} goes through with $\hat Q$ in place of $Q$.
\end{proof}

The M-matrix parameterization therefore separates learning the prior from incentives. Cr\'emer and McLean~\cite{cremer1988} extract the full surplus under correlation but need the exact prior. We give up full extraction, and in exchange truthfulness holds under any prior error.

Proposition~\ref{prop:stab} bounds what an error in the prior can cost, measured in the quantities the mechanism uses, each bidder's conditional mean and conditional scale.

\begin{proposition}[Revenue stability]\label{prop:stab}
For any two priors in the family, the revenue difference is at most an explicit constant times the average error in the conditional means and scales, plus a term for the profiles where the two priors disagree about who wins or whether to sell. Revenue is therefore continuous in the fitted prior. Formal statement, constant, and proof: Appendix~S6.
\end{proposition}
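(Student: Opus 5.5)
The plan is to write the revenue of each prior as the expected threshold payment and compare the two payments profile by profile. Fix the true distribution of bids and two priors $P=(\bmu,Q)$ and $P'=(\bmu',Q')$ in the family. Let $\mathrm{Rev}(P)=\E_{\bs}[\sum_i p_i(\bs;P)]$, where $p_i$ is the threshold $t_0$ of Equation~\eqref{eq:threshold}. I would split the profile space into two sets. The \emph{agreement set} $A$ contains the profiles where both priors sell and name the same winner $i^*$. Its complement $D$ contains the profiles where the priors disagree about who wins or whether to sell. On $D$ I would bound the payment gap crudely. Every payment lies in $[0,s_{i^*}]$ by Lemma~\ref{thm:threshold}, so the contribution of $D$ is at most $\E[\max_i |s_i|\,\1_D]$. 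This is the ``disagreement term'' of the statement, and it is finite because Gaussian maxima have finite first moments.

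On $A$ the task is to bound $|t_0-t_0'|$ by the errors in the conditional moments. Write $\Delta\mu_k=|\mu_{k\mid-k}-\mu'_{k\mid-k}|$ and $\Delta\sigma_k=|\sigma_{k\mid-k}-\sigma'_{k\mid-k}|$, both computed by Lemma~\ref{lem:cond}. The threshold solves $G(t)=\psi_{i^*}(t)-M(t)=0$, where $M(t)$ is the max of $0$ and the rivals' virtual values evaluated with $s_{i^*}=t$. By Lemma~\ref{lem:reg} and Lemma~\ref{lem:cross}, $G$ is strictly increasing with slope at least $1$: the slope of $\psi_{i^*}$ exceeds one and $M$ is nonincreasing in $t$. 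Two consequences follow.
\begin{itemize}
\item Root perturbation: if $\sup_t|G(t)-G'(t)|\le\varepsilon$, then $|t_0-t_0'|\le\varepsilon$.
\item Sup-norm bound on virtual values: $\psi(s)=s-\sigma R((s-\mu)/\sigma)$, with $0<R(z)$ and $|R'(z)|\le1$. Perturbing $\mu$ changes $\psi$ by at most $\Delta\mu$. Perturbing $\sigma$ changes $\sigma R(z)$ by $|R(z)-zR'(z)|\,\Delta\sigma$, and $R(z)-zR'(z)$ is bounded for $z\ge0$ but grows like $|z|$ as $z\to-\infty$.
\end{itemize}
So $|G-G'|\le 2\max_k(\Delta\mu_k+C(z)\Delta\sigma_k)$, where the max of rival virtual values is $1$-Lipschitz in each argument. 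One catch is that the rivals' conditional means depend on $t$ through $Q_{ji}t$, so their errors must be evaluated along $t\in[0,s_{i^*}]$. Because $\mu_{j\mid-j}$ is affine in $t$, this error is bounded by its values at the endpoints.

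Taking expectations over $A$ gives a bound of the form $C\,\E[\sum_k(\Delta\mu_k+\Delta\sigma_k)]$ plus the disagreement term. I expect the main obstacle to be the constant multiplying $\Delta\sigma$, since $z$ is unbounded on a Gaussian support. I would handle it in one of two ways. The first is to restrict attention to $z$ above a level where virtual values are negative: for $z\le z_0$ the bidder is not competitive, because $\psi<0$ forces the max to be $0$. The second is to absorb the growth into an expectation weighted by $(1+|z|)$, which is finite under Gaussian tails. Continuity then follows from two limits. The moment errors vanish as $(\bmu',Q')\to(\bmu,Q)$, since $Q_{ii}>0$ keeps the formulas of Lemma~\ref{lem:cond} continuous. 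The probability of $D$ vanishes too: the winner and sale decisions change only near ties and near zero virtual value, and those sets have measure zero, so dominated convergence gives $\E[\max_i|s_i|\1_D]\to0$.
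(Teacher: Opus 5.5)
There is a genuine gap. Your skeleton matches the paper's: split into an agreement event and its complement, bound every payment by the winner's report, use that the gap function rises at slope at least one so that a perturbation of size $\eta$ moves the root by at most $\eta$, and evaluate the rivals' conditional means at the endpoints of $[0,s_{i^*}]$ because they are affine in $t$. Qualitative continuity is also within reach by dominated convergence alone, since payments are dominated by $\max_i s_i^+$ and converge pointwise off a null set. So that part is not the issue.

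\textbf{The quantitative step on $A$ fails.} That step is the actual content of the proposition, and your sensitivity bounds are wrong on the negative half-line.
\begin{enumerate}
\item $\partial\psi/\partial\mu=R'(z)$ and $|R'(z)|=1-zR(z)$. This equals $1$ at $z=0$ and exceeds $1$ for every $z<0$. So a mean error is not passed through with constant one.
\item $R(z)-zR'(z)=R(z)(1-z^2)+z$ behaves like $-\sqrt{2\pi}\,z^2e^{z^2/2}$ as $z\to-\infty$, because $R(z)\sim\sqrt{2\pi}e^{z^2/2}$ there. It does not grow like $|z|$. Your second remedy, weighting by $1+|z|$, therefore cannot absorb the growth; the weight you would actually need is not even integrable against a standard normal density.
\item Your root-perturbation bullet demands $\sup_t|G-G'|\le\varepsilon$ over the whole bracket. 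But the winner's own term $\psi_{i^*}(t)$ is not clamped, and near $t=0$ its residual is about $-\mu_{i^*\mid-i^*}/\sigma_{i^*\mid-i^*}$. On Gaussian support this is unbounded. Even with a bound $\kappa$ on the mean-to-scale ratio, it makes the constant grow like $\kappa^2e^{\kappa^2/2}$.
\end{enumerate}

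\textbf{The missing idea is what keeps residuals bounded below.} Write $\psi_i=\sigma_i\,(m_i/\sigma_i-h(z_i))$ with $h(z)=R(z)-z$ strictly decreasing. Then a bidder is active ($\psi_i\ge0$) only if $z_i\ge h^{-1}(m_i/\sigma_i)$.
\begin{itemize}
\item Your first remedy assumes a profile-independent level $z_0$ below which bidders are inactive. No such level exists unless the conditional mean-to-scale ratio is bounded. On Gaussian support it is not, because conditional means are affine in the rivals' reports.
\item This is why the paper takes the true law on $[0,\bar s]^n$ and introduces $\kappa$. That gives the floor $z_0=\min\{h^{-1}(\kappa),0\}$, whose magnitude grows only like $\sqrt{2\log\kappa}$.
\end{itemize}
Three further steps are needed on top of that floor:
\begin{itemize}
\item The clamp drops competitors that are inactive under both priors.
\item The gap bound is invoked only at $t_0+\eta$ and $\hat t_0+\eta$, where the winner is active. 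That is all the sandwich argument needs, so you never touch the winner's unclamped inactive region.
\item A bidder active under one prior but not the other is controlled by moving the scale first and the mean second. This keeps every intermediate residual above $z_1=z_0-\max\{1,|z_0|\}\|\Delta\|_\infty/\sigma_{\min}$ and yields the constant $C(z_1)$. Moving the mean first can overshoot multiplicatively.
\end{itemize}

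\textbf{Your disagreement term is not controlled by the moment errors.} $\E[\max_i|s_i|\1_D]$ fits the informal wording of the statement. The paper instead shows that the two priors can disagree only on strips of half-width $C(z_1)\|\Delta\|_\infty$ around $\max_i\psi_i=0$ and $\psi_{(1)}=\psi_{(2)}$. Bounded densities on those strips make that term linear in $\|\Delta\|_\infty$. This is what upgrades continuity to the local Lipschitz bound the appendix states.
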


\paragraph{The second family: log-normal values.}
The same guarantees hold for a second family, the log-normal, the long-tailed model commonly used in empirical work on advertising values~\cite{lahaie2007,edelman2010,ostrovsky2011}. Under the transformation $y_i=\log s_i$ affiliation is preserved, but the Jacobian puts a factor $s_i$ in front of the virtual value, and regularity is no longer automatic.
\begin{theorem}[The log-normal family: optimality and exact regularity]\label{thm:loglink}
Let $\mathbf y=\log\bs$ follow the M-matrix Gaussian of Assumption~\ref{as:mvn}, with conditional moments $\mu_{i\mid-i}$ and $\sigma_{i\mid-i}=Q_{ii}^{-1/2}$. Then (i) the conditional virtual value is $\psi_i(s_i\mid\bs_{-i})=s_i\bigl[1-\sigma_{i\mid-i}R(z_i)\bigr]$ with $z_i=(\log s_i-\mu_{i\mid-i})/\sigma_{i\mid-i}$; (ii) allocating to the highest nonnegative conditional virtual value at the threshold payment is EPIC and EPIR for every such prior, and when the prior is correct it maximizes expected revenue among EPIC and EPIR mechanisms with no positive transfers and finite expected payments; (iii) $\psi_i$ is increasing on its entire support if and only if $\sigma_{i\mid-i}\le\sigma^\star=\min_{z}\{2/R(z)-z\}=1.5176$
\end{theorem}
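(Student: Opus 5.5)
The plan has three parts, one for each clause of Theorem~\ref{thm:loglink}.

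\emph{Part (i).} Write the conditional law of $s_i$ given $\bs_{-i}$ by change of variables. Since $\log$ is a strictly increasing bijection $(0,\infty)\to\R$, conditioning on $\bs_{-i}$ is the same as conditioning on $\mathbf y_{-i}$. By Lemma~\ref{lem:cond}, $y_i\mid\mathbf y_{-i}\sim\N(\mu_{i\mid-i},\sigma_{i\mid-i}^2)$ with $\sigma_{i\mid-i}^2=1/Q_{ii}$. The conditional distribution function of $s_i$ is then $F(s)=\Phi(z)$ with $z=(\log s-\mu_{i\mid-i})/\sigma_{i\mid-i}$, and its density is $f(s)=\phi(z)/(s\sigma_{i\mid-i})$. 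Hence
\[
\frac{1-F(s)}{f(s)}=s\,\sigma_{i\mid-i}R(z),
\]
which gives (i) directly from the definition of the virtual value.

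\emph{Part (iii): derivative and sign.} Put $\sigma=\sigma_{i\mid-i}$ and use $dz/ds=1/(s\sigma)$. Differentiating gives
\[
\frac{d\psi_i}{ds}=1-\sigma R(z)-R'(z).
\]
The Mills ratio satisfies $R'(z)=zR(z)-1$. Substituting,
\[
\frac{d\psi_i}{ds}=2-R(z)\,(\sigma+z)=R(z)\Bigl[\tfrac{2}{R(z)}-z-\sigma\Bigr].
\]
Because $R>0$, $\psi_i$ is nondecreasing on $(0,\infty)$ exactly when $\sigma\le g(z):=2/R(z)-z$ for all $z\in\R$, that is, when $\sigma\le\inf_z g(z)$. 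This reduces (iii) to showing that the infimum of $g$ is attained and equals $1.5176$.

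\emph{Part (iii): attainment and strict monotonicity.} I first check the behaviour of $g$ at the ends. As $z\to-\infty$, $R(z)\sim 1/\phi(z)$ is huge, so $g\approx -z\to\infty$. As $z\to\infty$, $1/R(z)=z+1/z+O(z^{-3})$, so $g\approx z\to\infty$. So $g$ is continuous and coercive and attains its minimum. The value $\sigma^\star=1.5176$ then comes from a one-dimensional numerical minimization, either by solving $g'(z)=0$ or by grid search with interval bounds for rigour. For the "if" direction, at $\sigma=\sigma^\star$ the derivative vanishes only at the minimizer of $g$, presumably a single isolated point, so $\psi_i$ is still strictly increasing. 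For the "only if" direction, if $\sigma>\sigma^\star$ the derivative is negative on a neighbourhood of the minimizer. This neighbourhood is a nonempty interval of $z$, hence of $s$, so $\psi_i$ is not increasing there.

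\emph{Part (ii).} Affiliation is preserved under coordinatewise increasing maps, because log-supermodularity of the density is invariant under such reparametrization (the Jacobian $\prod_i 1/s_i$ is a product of single-coordinate terms). So the $s$-vector is $\mathrm{MTP}_2$ by Proposition~\ref{prop:mtp2}.

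For cross-monotonicity, $\psi_j$ depends on $s_i$ only through $\mu_{j\mid-j}$, which is nondecreasing in $\log s_i$ because $Q_{ji}\le0$. Writing $\psi_j=s_j[1-\sigma R(z_j)]$, it suffices that $R(z_j)$ is nondecreasing as $z_j$ falls. This holds because $R$ is decreasing, so raising the conditional mean lowers $z_j$ and raises $R$, which lowers $\psi_j$.

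For own-monotonicity, I need $\psi_i$ increasing wherever it is nonnegative, which is all the allocation uses. Whenever $1-\sigma R(z)\ge0$ we have $R(z)\le 1/\sigma$, and then
\[
2-R(z)(\sigma+z)\ge 2-1-z/\sigma\cdot\ldots
\]
This inequality still needs care. The fallback I would use is that, by the derivative formula, the region where $\psi_i'\le0$ lies where $R$ is large, hence where $\psi_i<0$, so the winning event $\{\psi_i\ge\text{competition}\}$ is still an up-set in $s_i$.

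With these monotonicity properties in hand, EPIC and EPIR follow verbatim from the arguments for Theorems~\ref{thm:epic} and~\ref{thm:robust}. Optimality under the true prior follows as in Theorem~\ref{thm:opt}, invoking \cite{rtc2016} Thm.~5.3 together with the revenue identity of Theorem~S3. That identity would be adapted by noting that a finite expected payment suffices on $(0,\infty)$, since log-normal tails are integrable.

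\emph{Main obstacle.} I expect the hardest step to be establishing monotonicity of the winning event without assuming $\sigma\le\sigma^\star$. This requires showing that any nonmonotone portion of $\psi_i$ lies entirely in the negative region. The cleanest route I would try is to show that at any critical point where $\psi_i'=0$, the identity $R(\sigma+z)=2$ forces $\sigma R>1$, hence $\psi_i<0$. The rigorous certification of the constant $1.5176$ is the other delicate point.
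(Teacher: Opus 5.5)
Your parts (i) and (iii) match the paper's argument: the change of variables with the $1/s$ Jacobian, then the derivative $2-(z+\sigma)R(z)$ obtained via $R'=zR-1$. Factoring this as $R(z)[2/R(z)-z-\sigma]$ reads $\sigma^\star$ off directly. The paper instead shows that $\sigma\mapsto\sup_z(z+\sigma)R(z)$ is increasing and crosses $2$ once. Your ``presumably isolated'' zero at $\sigma=\sigma^\star$ can be settled by real-analyticity of $2/R(z)-z$.

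\textbf{Gap in own-monotonicity, part (ii).} The statement claims (ii) for \emph{every} conditional log-scale, including $\sigma>\sigma^\star$, where $\psi_i$ is not globally monotone. Your displayed bound is left unfinished. The direction it takes, bounding $zR(z)$ by $z/\sigma$ via $R\le 1/\sigma$, cannot close, since for $z\ge\sigma$ it yields only $1-z/\sigma\le 0$.

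The missing ingredient is the classical Mills bound $zR(z)<1$ for all $z$, the same one behind Lemma~\ref{lem:reg}. On $\{\psi_i\ge0\}$ one has $\sigma R(z_i)\le 1$, hence $\partial\psi_i/\partial s_i=2-z_iR(z_i)-\sigma R(z_i)\ge 1-z_iR(z_i)>0$. Equivalently, your proposed route finishes in one line: $\psi_i'\le 0$ forces $\sigma R\ge 2-zR>1$, that is, $\psi_i<0$.

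You also need $\{\psi_i\ge0\}$ to be an up-set in $s_i$. The paper gets this at once from $\psi_i/s_i=1-\sigma R(z_i)$ being increasing in $s_i$. Without these two facts your ``fallback'' is an assertion, and monotonicity of the allocation, hence EPIC/EPIR, is unproved for $\sigma>\sigma^\star$.

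\textbf{Gap in optimality, part (ii).} You route optimality ``as in Theorem~\ref{thm:opt}'' through \cite{rtc2016}, Thm.~5.3. That theorem takes regularity at every $\bs_{-i}$ as a hypothesis (item (iii) of the paper's verification), and regularity fails exactly when $\sigma_{i\mid-i}>\sigma^\star$. What is needed is a regularity-free argument, which the paper gives:
\begin{enumerate}
\item For any allocation nondecreasing in $s_i$ with the IR-binding payment $p_i=x_is_i-\int_0^{s_i}x_i(t)\,dt$, integration by parts gives $\E[p_i\mid\bs_{-i}]=\E[x_i\psi_i\mid\bs_{-i}]$.
\item This uses only monotonicity, $u(0^+)=0$ (from EPIR and no positive transfers), $\E[s_i\mid\bs_{-i}]=e^{\mu+\sigma^2/2}<\infty$, and the vanishing boundary term $s(1-F(s))\to0$.
\item Then $\sum_ix_i\psi_i\le\max\{0,\max_i\psi_i\}$ holds pointwise.
\item The mechanism attains this bound, and it is implementable by the monotonicity established above.
\end{enumerate}
Your integrability remark supplies the finiteness conditions, but the revenue identity and the pointwise bound must be derived without invoking regularity.
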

The threshold characterizes global regularity. Neither truthfulness nor optimality needs it, because virtual values increase throughout their nonnegative region and negative virtual values never receive the item. It is loose (one standard deviation in log scale multiplies a value by $4.6$, while the largest conditional log-scale in our fitted priors is $0.64$), and the parameterization keeps every output globally regular by a floor of $0.45>1/(\sigma^\star)^2$ on the precision diagonal, a design choice that the guarantees do not need. The experiments use the normal family as their main line.
\section{Context-Dependent Neural Parameterization}\label{sec:neural}
The theory above holds for any M-matrix prior (Figure~\ref{fig:pipeline}). This section is about learning one from data. A network reads the embedding of the conversation and outputs, for each ad opportunity, a mean vector and a precision matrix, and the difficulty is to keep every output inside the M-matrix class. We give two constructions that do so by design: a one-factor form, which is small and matches the common-demand story, and a spectral form, which covers the whole class. We then describe two alternatives that we use as controls and a convex fit that needs no network. A frozen sentence encoder (all-MiniLM-L6-v2~\cite{reimers2019,wang2020}) embeds the query, and shared MLP (multilayer perceptron) heads output, for each context, the means $\mu_i=\mathrm{Softplus}(\cdot)$ and the precision parameters below. Training is by maximum likelihood in precision form on historical bids and never forms the covariance matrix.

\paragraph{Default construction (diagonal plus rank one).}
The economic premise of Section~1 is a \emph{common demand factor}, and a prior with one common factor is automatically an M-matrix prior:
\begin{proposition}[Single factor $\Rightarrow$ M-matrix]\label{prop:rankone}
If $\Sigma=D+\lambda vv^\top$ with $D$ positive diagonal, $\lambda>0$, and $v\ge0$ entrywise, then $Q=\Sigma^{-1}$ is a symmetric positive definite M-matrix.
\end{proposition}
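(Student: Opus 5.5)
The plan is to compute $Q=\Sigma^{-1}$ in closed form with the Sherman--Morrison formula and then read off the two properties of Assumption~\ref{as:mvn} directly.

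\emph{Step 1: positive definiteness.} First I check that $\Sigma$ itself is symmetric positive definite. For any $x\neq0$ we have $x^\top\Sigma x = x^\top D x+\lambda (v^\top x)^2 \ge x^\top D x>0$, because $D$ is positive diagonal and $\lambda>0$. Hence $\Sigma\succ0$, and its inverse $Q$ is also symmetric positive definite. This gives condition (i).

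\emph{Step 2: explicit inverse.} Sherman--Morrison requires $1+\lambda v^\top D^{-1}v\neq0$. This denominator is at least $1$ because every term in $v^\top D^{-1}v=\sum_k v_k^2/D_{kk}$ is nonnegative. The formula then gives
\begin{equation*}
Q=D^{-1}-\frac{\lambda\,D^{-1}vv^\top D^{-1}}{1+\lambda\,v^\top D^{-1}v}.
\end{equation*}

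\emph{Step 3: sign pattern.} For $i\neq j$ the first term $D^{-1}$ contributes nothing, so
\begin{equation*}
Q_{ij}=-\frac{\lambda\,v_iv_j/(D_{ii}D_{jj})}{1+\lambda\,v^\top D^{-1}v}.
\end{equation*}
The numerator is nonnegative because $v\ge0$, $\lambda>0$ and $D_{ii},D_{jj}>0$. The denominator is at least $1$. Therefore $Q_{ij}\le0$, which is condition (ii). Together, (i) and (ii) say that $Q$ is a symmetric positive definite M-matrix.

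There is no real obstacle in this argument. The only point needing care is the denominator's sign, and Step 2 handles it. It is also worth noting that strictness, $Q_{ij}<0$, holds exactly when $v_iv_j>0$. That observation matters for the strict cases of Lemma~\ref{lem:cross} and Corollary~\ref{cor:strict}.
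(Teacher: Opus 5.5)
Your proof is correct and follows the paper's own route: Sherman--Morrison for the explicit inverse, the sign of $-\lambda v_iv_j/(c\,d_id_j)$ for the off-diagonals, and $\Sigma\succ0$ for positive definiteness. You spell out $\Sigma\succ0$ and the denominator bound more explicitly than the paper's one-line proof does.
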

The proof is a Sherman--Morrison computation (Appendix~S5).
Proposition~\ref{prop:rankone} gives a sufficient condition. Every one-factor prior is in the class, but the class is larger, and the spectral construction below covers the rest. Under the one-factor construction the network outputs only $d\in\R^n_{>0}$ and $v\in\R^n_{\ge0}$ ($2n$ parameters), and $v_i$ is brand $i$'s sensitivity to the category demand shock. On real queries this default matches the full class to $0.02$ cents (Section~\ref{sec:exp-realq}).
\paragraph{General construction: spectral M-matrix.}
When correlation beyond a common factor is needed, we want a network head that can output any matrix in the class and no matrix outside it. The head outputs a positive diagonal $D$, a symmetric nonnegative matrix $C_0$ with zero diagonal, and a scalar $g$, and the prior's precision is $Q=D^{1/2}(I-\alpha C_0)D^{1/2}$ with $\alpha=\mathrm{sigmoid}(g)/\lambda_{\max}(C_0)$. Dividing by the largest eigenvalue of $C_0$ keeps $I-\alpha C_0$ positive definite, multiplying by $D^{1/2}$ on both sides preserves positive definiteness, and the off-diagonal entries of $Q$ are nonpositive because those of $C_0$ are nonnegative. Every output is therefore an M-matrix, and every M-matrix can be written this way, so the head covers the whole family except for two numerical-stability margins, a small gap below $\alpha$'s upper limit and a floor on $D$ (Appendix~S5). The main-line experiments use this construction.

\paragraph{Two alternative constructions.}
Strict diagonal dominance also yields M-matrices but caps the correlation the prior can express, and at $n=4$ the cap binds (Section~\ref{sec:exp-fit}). Unconstrained learning leaves $24$--$27\%$ of the off-diagonal entries outside the class and would need a repair step at every auction (Appendix~S5). Both stay truthful by Theorem~\ref{thm:robust}, so only revenue is at stake. The affiliation-constrained Gaussian maximum-likelihood fit needs no network at all~\cite{luz2019,slawski2015}, and Section~\ref{sec:exp-realq} shows it matches the network on real queries. The network extends that convex fit to conversation-dependent priors.
\section{Experiments}\label{sec:exp}

We evaluate \textsc{AffMyerson}, run with the normal prior family, on a semi-synthetic LLM-advertising benchmark where queries and bids are produced by LLMs, a hidden demand shock the platform cannot see induces correlation, and all revenue is settled on held-out LLM-agent bids. The prior is therefore estimated, never exact, which is the regime Theorem~\ref{thm:robust} is designed for. Section~\ref{sec:exp-data} describes the benchmark, Sections~\ref{sec:exp-fit} and~\ref{sec:exp-rev} the fitted priors and the revenue comparison, and Sections~\ref{sec:exp-rho}--\ref{sec:exp-perturb} test where the rent comes from and when it fails.

\subsection{Benchmark}\label{sec:exp-data}

\textbf{Queries.} We use \texttt{gpt-4o-mini} to generate $600$ designed conversational queries in the outdoor category over a ``subcategory $\times$ purchase intent'' grid with four intent tiers (transactional, comparison, informational, adjacent), plus $180$ special queries (niche and brand-specific) for the $n{=}4$ study.
\textbf{Advertisers.} Four price-stratified brands (flagship, premium, value, clearance) each bid independently per (query, demand level) through an LLM agent call ($62{,}400$ calls; prompts and brand profiles Appendix~S7). Because each brand bids without seeing the others, any subset of brands is a valid smaller market. The main line is $n{=}2$ (premium$+$value, $12{,}000$ auctions).
\textbf{Hidden demand shock.} A topic-independent demand signal $u$ enters only the agents' prompts and not the platform's context encoding. All brands in a scenario read the same $u$, which creates the correlation the mechanism prices. Within-query correlation is $+0.66$, and all pairwise within-query correlations are positive (Appendix~S7).
\textbf{Protocol.} Queries are split $6{:}1{:}3$ into train/validation/test, so no test conversation is seen in training. Learned quantities are means over $5$ training seeds (seed deviations in the tables), and the bidders are \texttt{gpt-4o-mini}~\cite{gpt4o}. Full prompts, brand profiles, and hyperparameters are in Appendix~S7, and the code will be made publicly available.

\subsection{Fitting the Conditional Distribution}\label{sec:exp-fit}

We first ask how well the parameterization fits. Of four priors fitted on the full $n{=}4$ data (table Appendix~S5), strict diagonal dominance learns an informative context mean ($R^2$ $0.60$ on unseen queries) but allows little correlation (mean fitted correlation $\bar\rho$ $0.22$), and an unconstrained precision fit learns neither. The deployed spectral fit reads the context mean poorly ($R^2\approx0$) but captures the correlation ($\bar\rho$ $0.47$), and revenue depends on that channel (Section~\ref{sec:exp-realq}). Per-cell likelihoods favor the log-normal overall but the normal in the high-value region where revenue concentrates (Appendix~S10).

\subsection{Revenue on Held-Out LLM-Agent Bids}\label{sec:exp-rev}

\begin{table}[t]\centering\small
\caption{Revenue on held-out LLM-agent bids ($n{=}2$, cents, mean$\pm$standard deviation over $5$ training seeds where a prior is learned; percentages use unrounded means). The last column is \textsc{AffMyerson}'s lead over that row. Five-seed and five-split reruns: Appendix~S14.}\label{tab:main}
\begin{tabular}{@{}lr r@{}}\toprule
Mechanism & Revenue & Lead \\\midrule
Second price (SPA) & $28.39$ & $+33.3\%$ \\
Contextual-reserve SPA & $32.98\pm0.18$ & $+14.7\%$ \\
IPV-Myerson (same prior) & $34.66\pm0.19$ & $+9.2\%$ \\
Tuned-reserve SPA ($r^*{=}32$) & $35.62$ & $+6.2\%$ \\
Ronen lookahead~\cite{ronen2001} & $35.67\pm0.08$ & $+6.1\%$ \\
CAN, rank scores, no context~\cite{huo2025can} & $37.43\pm0.07$ & $+1.1\%$ \\
\textsc{AffMyerson} & $\mathbf{37.84\pm0.06}$ & \\\bottomrule
\end{tabular}
\end{table}

\textbf{Baselines.} Besides the second-price auction (SPA, here equal to VCG) and SPA with the reserve tuned on the training set ($r^*{=}32$), we run a contextual-reserve SPA (ctxSPA), which sets per-advertiser reserves from the same learned marginals; IPV-Myerson, which uses \textsc{AffMyerson}'s exact means and marginal deviations but does not condition; Ronen's lookahead auction~\cite{ronen2001} with the same learned prior (LA), which gives the item to the highest bidder at the larger of the second bid and its conditional monopoly price; and CAN~\cite{huo2025can}, a neural mechanism whose rank-score networks enforce the single-crossing conditions, trained on the same $7{,}200$ training bids with no prior and no context (implementation Appendix~S7).

In Table~\ref{tab:main}, \textsc{AffMyerson} earns $37.84$ cents. The strongest baseline is CAN at $37.43$, and the strongest closed-form baselines are LA at $35.67$ and the tuned-reserve SPA at $35.62$. The affiliation rent over IPV-Myerson is $\mathbf{+9.2\%}$ and the margin over the tuned reserve is $\mathbf{+6.2\%}$ (seed deviations in the table, reruns Appendix~S14).

Two comparisons show where the margin comes from. The ctxSPA comparison shows that the margin does not come from contextual reserves, since reserves set on the noisy context model earn $7.4\%$ less than the tuned constant. LA shows that conditional pricing alone is not enough. Pricing the highest bidder conditionally earns $+2.9\%$ over IPV but only ties the tuned reserve. \textsc{AffMyerson} also ranks bidders by conditional virtual value (its winner differs from the highest bidder in $5.6\%$ of auctions) and earns $+6.1\%$ over LA, so the margin comes from conditional allocation. CAN, at $37.43$, recovers most of the rent from the bids alone. \textsc{AffMyerson} earns more, $+1.1\%$, and every \textsc{AffMyerson} seed exceeds every CAN seed. Beyond revenue, \textsc{AffMyerson} is a closed-form rule with a prior the platform can inspect, priced in $O(n^2)$ time, and its incentive compatibility and the robustness guarantee of Theorem~\ref{thm:robust} hold by construction.

Revenue does not come at the cost of efficiency, since against IPV-Myerson \textsc{AffMyerson} raises both welfare, the winner's value ($53.8$ vs.\ $48.2$ cents), and the trade rate ($89.8\%$ vs.\ $69.7\%$).

\textbf{What the conversation is worth.} In a market where the conversation matters to one advertiser and not to the other, \textsc{AffMyerson} earns $24.8\%$ more than CAN, and the gap is the value of the information that bids alone do not carry. Markets like this exist. A marketplace that sells everything bids about the same on any outdoor query, while a specialist brand's value depends on what the conversation says. We write $400$ conversations in $200$ pairs, where the same product words carry either a high-value cue (an upcoming trip, a broken item) or a low-value one (a school project, an item already bought). The marketplace's value is $30$ cents in every auction, and the specialist's is $200$ on a high cue and $100$ on a low one, each with a spread of $25$. Values are drawn from the Gaussian family, so the prior is fitted as on the main line. The two advertisers' values are independent here, so the whole gain comes from the part of our design that reads the conversation, the context-dependent prior. On $2{,}400$ held-out auctions \textsc{AffMyerson} earns $104.8$ cents and CAN $84.1$. CAN earns as much as the bids allow. Without the context, the most any mechanism can earn is what Myerson's auction on the marginal bid distribution earns, $84.4$, and CAN reaches it. The rest is earned on the low-cue conversations, where a mechanism that sees only bids sells to the marketplace for $24.7$ cents while the fitted prior places the specialist's value near $100$ and sells to it for $66.8$. On the main line every advertiser reads the same conversation, the bids move together, and CAN reads the conversation through the bids. Here the marketplace's bid does not reveal the cue, so only a prior that reads the conversation can price it, at no cost in truthfulness (Theorem~\ref{thm:robust}; design and full table Appendix~S7).
\subsection{Rent versus Affiliation Strength}\label{sec:exp-rho}

The rent rises and falls with the correlation in the data (Figure~\ref{fig:rho}). To check that the rent comes from the correlation, we remove the correlation without changing the marginals. Within each query we re-pair the second brand's bids across scenarios, so that full shuffling zeroes the within-query correlation, partial shuffling interpolates, and each brand's marginal is unchanged element by element. The prior is refit and the reserve re-tuned at every level. Figure~\ref{fig:rho} shows the rent over IPV-Myerson and the margin over the tuned-reserve SPA along this curve. We observe that the rent rises monotonically with $\rho$, is statistically zero at $\rho\approx0$, and tracks the fitted $\bar\rho$, while the margin over the tuned reserve becomes significant only above $\rho\approx0.3$. These thresholds are specific to this benchmark, so a platform should measure its own before it adopts conditioning.

\begin{figure}[t]\centering
\includegraphics[width=0.58\linewidth]{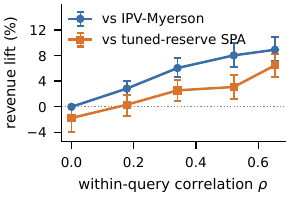}
\Description{Revenue lift over IPV-Myerson and over the tuned-reserve second-price auction as the within-query correlation is varied.}
\caption{Shuffling scan (prior refit and reserve re-tuned at every level): lift over IPV-Myerson (blue) and over the tuned-reserve SPA (orange) against within-query $\rho$. The panel that tracks the fitted $\bar\rho$ is Appendix~S9.}\label{fig:rho}
\end{figure}

\subsection{$n=4$ Stratified Markets}\label{sec:exp-n4}

We next ask whether the margin survives more bidders. On the full four-brand data the revenue is $49.05$ cents, $\mathbf{+4.2\%}$ over the tuned-reserve SPA and $\mathbf{+6.2\%}$ over matched IPV. Across the $11$ brand submarkets (table Appendix~S14) we observe that the fitted $\bar\rho$ alone does not rank the markets. The three lowest-rent submarkets each pair a premium brand with a clearance brand whose near-constant bids carry no usable signal, so the rent depends on how informative competitors' reports are as well as on correlation. The median rent across submarkets is flat in $n$, so added competitors bring conditioning information that offsets dilution.

\subsection{Real Search Queries}\label{sec:exp-realq}

To test whether the findings depend on simulated queries, we build a real-query benchmark of $580$ genuine user queries from Amazon ESCI~\cite{esci2022} ($274$) and MS MARCO~\cite{msmarco2016} ($306$), filtered by embedding ranking and two LLM gates and passed unchanged through the same bidding pipeline (Appendix~S16).

\textbf{The rent persists on real queries.} The premium brand bids $4.60$ times more on transactional than on informational queries and the two brands' bids correlate at $0.722$ within a query, so both market features of Section~\ref{sec:exp-ablation} reappear on queries we did not write. A short search query gives the prior little to read (Appendix~S16). Even so, the rent is $\mathbf{+4.5\%}$ and the margin over the tuned reserve is $\mathbf{+4.7\%}$ (per-source and channel diagnostics: Appendix~S16).

\textbf{What the network adds.} On short search queries, where the text gives the prior little to read, a convex $\mathrm{MTP}_2$-constrained Gaussian fit with no network~\cite{luz2019,slawski2015} earns the same revenue as the embedding network ($43.65$ vs.\ $43.48$ cents, within seed noise) and a one-factor prior (Proposition~\ref{prop:rankone}) is within $0.02$ of it, so the rent does not depend on the fitting method. On the designed conversational queries, where the text does predict values, an unconstrained network repaired into the class earns $2.0\%$ more than the best convex fit ($49.29$ vs.\ $48.33$ cents; Appendix~S11).

\subsection{How Bidding Agents Turn Common Demand into Bids}\label{sec:exp-ablation}

We now ask where the correlation the mechanism prices comes from. Our bidders are LLM agents acting as advertisers. Each receives a brand profile of behavioral traits, the conversation, and the shared demand signal, and returns a bid. We check their behavior against two features of real sponsored-search markets, that bids rise with commercial intent~\cite{jansen2011} and that rivals' values move with common demand shocks~\cite{milgrom1982}, and the agents reproduce both. On the main line the premium brand bids $5.76$ times more on transactional than on informational queries, and the two brands' bids within a query correlate at $+0.66$.

Within a query, the shared demand shock is the only input the agents have in common (Appendix~S7), so the question is how differently prompted agents translate it into bids. With queries and shocks fixed, within-query correlation is $+0.78$ when the agents judge the conversation and $+0.06$ when they are prompted to calculate (seven schemes, two models; table Appendix~S7), and the calculating agents also concentrate their bids near zero. This variation matters for auction design. On the full-scale calculation-type bids \textsc{AffMyerson} still earns $+2.5\%$ over IPV-Myerson but less than the tuned reserve, so what conditioning is worth depends on how agents express demand in their bids.

\subsection{Robustness to Prior Perturbation}\label{sec:exp-perturb}

Last, we corrupt the learned prior deliberately, in twelve controlled ways ($n{=}4$, seed $0$), and find that revenue falls gradually while truthfulness is unaffected. We observe three things. Mean noise $\varepsilon$ reduces the rent monotonically, from $+6.4\%$ at $\varepsilon{=}0$ to $-4.3\%$ at $\varepsilon{=}20$, consistent with Proposition~\ref{prop:stab}, which bounds the loss from above only. Setting the correlation to zero gives the revenue of IPV-Myerson. No corruption produces an incentive violation ($0$ of $28{,}800$ probes), as Theorem~\ref{thm:robust} requires. At $\varepsilon{=}20$ \textsc{AffMyerson} falls below the tuned-reserve SPA, so a platform with an unusable prior should prefer prior-free second pricing.
\section{Scope and Limitations}\label{sec:conclusion}

Our LLM-agent benchmarks show strong positive within-query correlation, consistent with affiliation, and IPV mechanisms give up the rent on such data. Five limitations remain. (i) The guarantee covers nonnegative dependence only, so negatively correlated substitutes are outside it. (ii) The values are correlated private values and not interdependent values. (iii) The auction sells one item, and multi-slot allocation needs a combinatorial extension. (iv) Bids are LLM-agent outputs, since no public dataset offers jointly observed real bids on one opportunity. (v) Learned priors invite history shading~\cite{kanoria2021}, and the channel that earns the rent also transmits a rival's misreport. Averaging the prior across periods and capping the conditional-mean shift limit both, at little or no revenue cost (Appendices~S12 and~S13).

\clearpage
\appendix
\renewcommand{\thesection}{S\arabic{section}}
\setcounter{section}{0}
\noindent{\Large\bfseries Supplementary Material}\par\medskip
\noindent Sections S1--S16 below are the supplementary material. ``Main text'' refers to Sections 1--8.\par\bigskip

\section{Complete Gaussian derivations}

\subsection{Affiliation and the M-matrix sign condition (main text, Proposition~3.1)}
The log-density of $\N(\bmu,Q^{-1})$ is $\ln f(\bs)=\mathrm{const}-\tfrac12(\bs-\bmu)^\top Q(\bs-\bmu)$, whose cross-partials $\partial^2\ln f/\partial s_i\partial s_j=-Q_{ij}$ are constant in $\bs$. $\mathrm{MTP}_2$ requires them to be nonnegative. If some $Q_{ij}>0$ the requirement fails; if all $Q_{ij}\le 0$ then $\ln f$ is supermodular on $\R^n$, which implies $\mathrm{MTP}_2$ by Karlin and Rinott (1980, \emph{J.\ Multivariate Anal.}\ 10, Theorem~3.1).

\subsection{Conditioning from the precision matrix}
Partition the quadratic form for bidder $i$:
\[
(\bs-\bmu)^\top Q(\bs-\bmu)
=Q_{ii}(s_i-\mu_i)^2
+2(s_i-\mu_i)\sum_{j\ne i}Q_{ij}(s_j-\mu_j)+C(\bs_{-i}).
\]
Completing the square gives
\[
s_i\mid\bs_{-i}\sim
\N\!\left(\mu_i-\frac{1}{Q_{ii}}\sum_{j\ne i}Q_{ij}(s_j-\mu_j),
\frac{1}{Q_{ii}}\right).
\]
Only row $i$ of $Q$ is needed.

\subsection{Regularity and cross-monotonicity}
Let $R(z)=(1-\Phi(z))/\phi(z)$. Direct differentiation gives
$R'(z)=zR(z)-1$. Hence, for
$\psi_i(s_i\mid\bs_{-i})=s_i-\sigma_iR(z_i)$,
\[
\frac{\partial\psi_i}{\partial s_i}=2-z_iR(z_i)>1.
\]
For $z_i>0$, the standard Mills bound gives $R(z_i)<1/z_i$; for
$z_i\le0$, $z_iR(z_i)\le0$. For $j\ne i$,
\[
\frac{\partial\mu_{i\mid-i}}{\partial s_j}=-\frac{Q_{ij}}{Q_{ii}}\ge0.
\]
Because $R$ is strictly decreasing, raising $s_j$ weakly lowers $\psi_i$,
strictly when $Q_{ij}<0$.

\paragraph{Mass below zero of the fitted priors (main text, Remark~1).} A prior whose marginal mean satisfies $\mu_i(\mathbf{e})>3\sqrt{(Q(\mathbf{e})^{-1})_{ii}}$ puts negligible mass ($<0.14\%$) below zero. The fitted spectral priors do not satisfy this on most contexts: on the seed-$0$ prior and test set the ratio $\mu_i/\sqrt{\Sigma_{ii}}$ has minimum $0.37$ and exceeds $3$ on only $14\%$ of context--bidder pairs, so on many contexts the prior puts non-negligible mass below zero (\texttt{kappa\_box.py}). Nothing formal depends on it, and in the replay every signal is an observed bid in $[2,185]$ cents.

\paragraph{Parameters of the main text's Figure~2 and the example of Section~4.3.} The illustrative two-bidder prior has $\bmu=(45,35)$ cents, marginal deviations $\sigma=(13,12)$ cents and $\rho=0.7$, i.e.\ $\Sigma=\bigl(\begin{smallmatrix}169&109.2\\109.2&144\end{smallmatrix}\bigr)$ and precision $Q=\Sigma^{-1}=\bigl(\begin{smallmatrix}0.01160&-0.00880\\-0.00880&0.01362\end{smallmatrix}\bigr)$, an M-matrix. Conditional scales $\sigma_{i\mid-i}=Q_{ii}^{-1/2}=(9.28,\,8.57)$ against marginal $(13,12)$; conditional-mean slopes $\beta_{12}=-Q_{12}/Q_{11}=0.758$ and $\beta_{21}=0.646$, so a bid of $61$ by bidder 2 moves bidder 1's conditional mean to $45+0.758\cdot26=64.7$. IPV-Myerson reserves (the dashed lines of the left panel) solve $\varphi^{\mathrm{IPV}}_i(r)=0$: $r_1=34.8$, $r_2=27.3$. Expected revenue over the prior, by quadrature on an $801^2$ grid over $[-60,200]^2$ with $60$-step bisection payments: \textsc{AffMyerson} $36.89$, IPV-Myerson $35.00$, difference $+1.90$ cents ($+5.4\%$). Script \texttt{draw\_fig2.py} (with \texttt{fig2\_core.py}) regenerates the figure and prints these numbers.

\section{Incentives and payments}

\paragraph{Numerical tolerance of the payment (main text, Section~4.2).} Bisection with $60$ steps on the bracket $[0,s_{i^*}]$ locates $t_0$ to $2^{-60}s_{i^*}<2\times10^{-16}$ cents for bids capped at $200$. The mechanism is exactly EPIC; only the payment is evaluated numerically, to this tolerance plus the floating-point error of the Mills ratio (same order).

\begin{stheorem}[EPIC and EPIR]
For any mean vector and any SPD M-matrix used by the mechanism, allocation to
the highest nonnegative conditional virtual value with critical payment is
ex-post incentive compatible and ex-post individually rational.
\end{stheorem}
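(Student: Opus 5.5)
The argument reduces EPIC and EPIR to two structural facts about the allocation rule: monotonicity in the bidder's own report, and the threshold characterization of payments. Fix any mean vector $\hat\bmu$ and any SPD M-matrix $\hat Q$, a bidder $i$, and the rivals' reports $\bs_{-i}$. Since values are private ($v_i=s_i$), EPIC at a fixed $\bs_{-i}$ is a single-parameter problem in $i$'s report $b$. I will therefore show that the allocation indicator $x_i(b,\bs_{-i})$ is a step function of $b$, and that the payment $t_0$ equals the switching point. The standard single-parameter argument (Myerson's payment identity, or \cite{rtc2016}, Prop.~5.1) then gives truthfulness pointwise in $\bs_{-i}$, which is exactly EPIC.

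\emph{Step 1 (monotone winning set).} Bidder $i$ wins at report $b$ iff
\[
\psi_i(b\mid\bs_{-i})\ge \max\Bigl(0,\ \max_{j\ne i}\psi_j(s_j\mid b,\bs_{-\{i,j\}})\Bigr),
\]
with a fixed tie-breaking rule. By Lemma~\ref{lem:reg} the left side is strictly increasing and continuous in $b$, with derivative $2-zR(z)>1$, so it tends to $\pm\infty$ as $b\to\pm\infty$. By Lemma~\ref{lem:cross}, each $\psi_j$ on the right is nonincreasing in $b$; this is where $\hat Q_{ji}\le0$ enters. Hence the gap between the two sides is continuous and strictly increasing, negative for very low $b$ and positive for very high $b$. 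It therefore has a unique zero $t_0$, and the winning set is $[t_0,\infty)$, up to the tie at $t_0$, which has measure zero. This is the uniqueness half of Lemma~\ref{thm:threshold}, and it uses no property of the true distribution.

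\emph{Step 2 (payment is critical; IC).} The mechanism charges a winner $t_0$ and a loser $0$. This is the threshold form, so utility is $(v_i-t_0)^+$ when reporting truthfully. Any misreport either leaves the outcome unchanged, at the same price since $t_0$ does not depend on $b$, or flips it. If $v_i\ge t_0$, a flip turns a win with nonnegative utility into a loss with zero utility. If $v_i<t_0$, a flip produces a win at price $t_0>v_i$, giving negative utility. In either case no deviation helps.

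\emph{Step 3 (IR).} A truthful winner has $v_i=s_i\ge t_0$, so utility is nonnegative; a loser pays zero. I also check $t_0>0$, so there are no positive transfers. At $b\le0$ we have $\psi_i(b)<b\le0$, because $\sigma R>0$, so $i$ does not win at $b\le 0$ and the zero lies in $(0,s_i]$.

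I expect the main care to go into Step~1's continuity and limit claims. The rivals' virtual values depend on $b$ through their conditional means, and $R$ is continuous and bounded on half-lines, so the right side is continuous in $b$. The boundary cases also need attention: all rivals with negative virtual values, and ties handled by a fixed deterministic rule. These measure-zero issues do not affect the ex-post argument.
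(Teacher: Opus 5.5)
Your proposal is correct and follows the paper's own proof. The paper likewise argues that raising $s_i$ strictly raises bidder $i$'s conditional virtual value and weakly lowers every rival's, so the allocation is monotone; it then invokes the standard single-parameter fact that a monotone allocation with critical payment is dominant-strategy IC, and gets EPIR because a truthful winner's report is at least the threshold while losers pay zero. Your version adds details the paper leaves implicit or places in the threshold lemma: continuity of the gap function, the limits at $\pm\infty$ that give a threshold for every $\bs_{-i}$ and not only on the bracket $[0,s_i]$, the explicit deviation cases, and the check that $t_0>0$.
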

\begin{proof}
Fix $\bs_{-i}$. Raising $s_i$ strictly raises bidder $i$'s score and weakly
lowers every competing score, so the binary allocation $x_i(s_i,\bs_{-i})$ is
monotone. A monotone single-parameter allocation with its critical payment is
dominant-strategy incentive compatible. If bidder $i$ wins truthfully, its
report is at least the critical threshold, so payment is at most $s_i$.
Losers pay zero. Thus utility is nonnegative for every report profile.
\end{proof}

\begin{stheorem}[Threshold existence and uniqueness]\label{supp:thm:threshold}
For a provisional winner $i$ with report $s_i$, the score gap in the payment
equation has a unique zero $t_0\in(0,s_i]$.
\end{stheorem}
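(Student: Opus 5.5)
The plan is to study the score gap as a function of bidder $i$'s hypothetical report $t$, with $\bs_{-i}$ held fixed:
\[
g(t)\;=\;\psi_i(t\mid\bs_{-i})\;-\;\max\Bigl(0,\ \max_{j\neq i}\psi_j\bigl(s_j\mid t,\bs_{-\{i,j\}}\bigr)\Bigr).
\]
I will show that $g$ is continuous and strictly increasing on $\R$, that $g(0)<0$, and that $g(s_i)\ge 0$. Existence of a zero in $(0,s_i]$ then follows from the intermediate value theorem, and uniqueness from strict monotonicity.

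\emph{Continuity and monotonicity.} By Lemma~\ref{lem:cond}, each conditional mean $\mu_{j\mid-j}$ is affine in the reports and each conditional scale $\sigma_{j\mid-j}=Q_{jj}^{-1/2}$ is constant. The Mills ratio $R$ is smooth, so every $\psi_j$ is continuous in $t$. A finite maximum of continuous functions, and its maximum with $0$, is continuous, so $g$ is continuous.
\begin{itemize}
\item By Lemma~\ref{lem:reg}, the first term of $g$ is strictly increasing in $t$; its derivative $2-z_iR(z_i)$ exceeds $1$.
\item By Lemma~\ref{lem:cross}, each $\psi_j$ with $j\neq i$ is nonincreasing in $t$. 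Hence $\max(0,\max_{j\neq i}\psi_j)$ is nonincreasing in $t$, and subtracting it preserves strict increase.
\end{itemize}
So $g$ is strictly increasing. This is where the M-matrix sign condition enters, through Lemma~\ref{lem:cross}.

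\emph{Endpoint signs.}
\begin{itemize}
\item At $t=0$: since $R>0$ everywhere, $\psi_i(0\mid\bs_{-i})=-\sigma_{i\mid-i}R(z_i)<0$. The subtracted term is at least $0$, so $g(0)<0$. The same observation, $\psi_i(t)<t$ for every $t$, shows that a provisional winner must have $s_i>0$. Otherwise $\psi_i(s_i)<0$ and $i$ could not hold the highest nonnegative virtual value. So the bracket $(0,s_i]$ is nonempty.
\item At $t=s_i$: the profile is the actual report profile. Because $i$ is the provisional winner, $\psi_i(s_i\mid\bs_{-i})\ge 0$ and $\psi_i(s_i\mid\bs_{-i})\ge\psi_j(s_j\mid\bs_{-j})$ for all $j\neq i$. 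Hence $g(s_i)\ge 0$.
\end{itemize}

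\emph{Conclusion and main obstacle.} By the intermediate value theorem, $g$ has a zero $t_0$ in $(0,s_i]$. By strict monotonicity it is the only zero on all of $\R$, and $g<0$ to its left and $g>0$ to its right. This also identifies $t_0$ as the critical bid used in Theorem~\ref{thm:epic}, and it justifies bisection on $[0,s_i]$.

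I expect the only delicate step to be the boundary case $g(s_i)=0$. This happens when $i$ ties with a rival, or when $\psi_i(s_i)=0$ exactly. Then $t_0=s_i$, which is still in the bracket, so the closed right endpoint in the statement is exactly what is needed. It is consistent with any tie-breaking rule that awards the item at equality; other rules change the winner only on a measure-zero set, which Appendix~S3 handles.
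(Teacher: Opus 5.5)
Your proposal is correct and follows the paper's proof essentially step for step. Both show that the gap is continuous and strictly increasing (own term strictly increasing by Lemma~\ref{lem:reg}, rivals' terms nonincreasing by Lemma~\ref{lem:cross}), that it is negative at $t=0$ because $\psi_i(0\mid\bs_{-i})<0$ and nonnegative at $t=s_i$ because $i$ is the provisional winner, and then conclude with the intermediate value theorem and strict monotonicity. Your extra justifications, continuity through the affine conditional means and the check that $s_i>0$, only make explicit what the paper leaves implicit.
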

\begin{proof}
Let
\[
D(t)=\psi_i(t\mid\bs_{-i})
-\max\{0,\max_{j\ne i}\psi_j(s_j\mid t,\bs_{-\{i,j\}})\}.
\]
The first term is strictly increasing in $t$ and every term inside the maximum
is nonincreasing, so $D$ is strictly increasing and continuous. Since
$\psi_i(0\mid\bs_{-i})<0$, $D(0)<0$. Since $i$ is the selected provisional
winner, $D(s_i)\ge0$. The intermediate value theorem therefore gives a zero
in $(0,s_i]$, and strict monotonicity gives uniqueness. Hence the bracket
$[0,s_i]$ contains exactly one root; the implementation bisects it for $60$
steps (\texttt{run\_auctions4.py}), reaching resolution $2^{-60}s_i<2\times10^{-16}$
cents for bids capped at $200$.
\end{proof}

\paragraph{Payments need not increase in every competitor's report when $n\ge3$.}
Cross-monotonicity concerns the \emph{allocation}. The winner's threshold $t_0$ solves $\psi_w(t_0\mid\bs_{-w})=\max\{0,\max_{j\ne w}\psi_j(s_j\mid t_0,\bs_{-\{w,j\}})\}$; a third bidder's report enters both sides. For $n=2$ both channels push $t_0$ up. For $n\ge3$ they can conflict: with $Q=\bigl(\begin{smallmatrix}1&0&0\\0&1&-0.8\\0&-0.8&1\end{smallmatrix}\bigr)$ (eigenvalues $0.2,1,1.8$, an SPD M-matrix), $\bmu=(5,5,0)$, $s_1=20$, $s_2=10$, raising $s_3$ from $0$ to $1$ leaves bidder $1$ the winner but lowers its payment from $10.000$ to $9.968$: bidder $1$ is uncoupled from $3$, while $3$'s higher report lowers $\psi_2$ and with it the competing threshold. The main text's ``$\partial t_0/\partial s_j>0$'' is therefore stated for the bivariate case only.

\section{Revenue identity on unbounded Gaussian support}

\begin{stheorem}[Gaussian-tail extension]
When the learned prior equals the true Gaussian prior, AffMyerson maximizes
expected revenue among EPIC, EPIR, no-positive-transfer mechanisms with finite
expected payments.
\end{stheorem}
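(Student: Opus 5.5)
The plan is the standard Myerson argument run pointwise in $\bs_{-i}$, with care for the unbounded support. Fix any EPIC, EPIR mechanism $(x,p)$ with no positive transfers and finite expected payments, and fix a bidder $i$ and a rival profile $\bs_{-i}$. By EPIC in the single-parameter private-value setting, $x_i(\cdot,\bs_{-i})$ is nondecreasing. The payment satisfies the envelope formula
\[
p_i(s_i,\bs_{-i}) = s_i x_i(s_i,\bs_{-i}) - \int_{-\infty}^{s_i} x_i(t,\bs_{-i})\,dt + c_i(\bs_{-i}),
\]
where the constant is the limit of $p_i$ as $s_i\to-\infty$. We then bound $c_i$. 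No positive transfers gives $p_i\ge 0$. EPIR at very low reports, where utility is at least $0$, gives $c_i\le 0$ in the limit. Hence the maximal revenue sets $c_i=0$. The subtle point is that the integral from $-\infty$ must be finite. This holds because $x_i(t,\bs_{-i})\to 0$ as $t\to-\infty$; otherwise EPIR combined with $p_i\ge0$ is violated for negative $t$, since a bidder with negative value that is allocated with positive probability would need a negative payment.

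Next, take expectations over $s_i\mid\bs_{-i}$, which is $\N(\mu_{i\mid-i},\sigma^2_{i\mid-i})$ by Lemma~\ref{lem:cond}, and exchange the order of integration by Fubini/Tonelli. This yields
\[
\E[p_i\mid\bs_{-i}] = \E[\psi_i(s_i\mid\bs_{-i})\,x_i(s_i,\bs_{-i})\mid\bs_{-i}] + c_i(\bs_{-i}).
\]
The exchange is justified because $0\le x_i\le1$ and the Gaussian conditional has finite mean. Concretely, $\int f(s)\int_{-\infty}^s x_i\,dt\,ds=\int x_i(t)(1-F(t))\,dt$, and this is finite because $\int_{-\infty}^{\infty}\min\{1-F,\ \text{mass for } t\to-\infty\}$ is controlled. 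For $t\to+\infty$ the tail $1-F$ is integrable. For $t\to-\infty$ the integrand is $x_i(t)(1-F(t))\le x_i(t)$, and we control it using finiteness of expected payments together with $p_i\ge0$. This step is the main obstacle: on $\R$ one must rule out mechanisms whose low-report allocation has non-integrable mass. I will first try to show directly that $\int_{-\infty}^0 x_i\,dt<\infty$ is forced, since $p_i(s)\ge 0$ and EPIR give $\int_{-\infty}^{s} x_i\le s\,x_i(s)$ for $s<0$. If that proves insufficient, I will fall back on truncating the support to $[-M,M]$ and passing to the limit by monotone convergence, using finite expected payments.

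Finally, sum over $i$ and take the outer expectation over $\bs_{-i}$, using the joint density. This gives $\E[\mathrm{Rev}]\le\E\big[\sum_i\psi_i(s_i\mid\bs_{-i})x_i(\bs)\big]$, with the virtual-value terms integrable because $|\psi_i|\le|s_i|+\sigma_{i\mid-i}R(z_i)$ and $R(z)\le |z|+c$ on the left tail. The right side is maximized pointwise, subject to single-item feasibility, by allocating to the highest nonnegative $\psi_i$. That allocation is monotone by Lemmas~\ref{lem:reg} and~\ref{lem:cross}. Its threshold payment (Lemma~\ref{thm:threshold}) is the envelope payment with $c_i=0$, since losers pay zero and the allocation vanishes as $s_i\to-\infty$. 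It therefore attains the bound with equality. Ties form a measure-zero set under the continuous prior and do not affect revenue.
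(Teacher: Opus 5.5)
Your route is the paper's: the envelope identity pointwise in $\bs_{-i}$, then Tonelli, pointwise maximization, and implementability via Lemmas~\ref{lem:reg} and~\ref{lem:cross}. As written, though, the proposal leaves the lower tail unresolved. The missing observation is already in your own sentence. For every $t<0$, EPIR and $p_i\ge0$ give $0\le p_i(t)\le t\,x_i(t)\le 0$, so $x_i(t,\bs_{-i})=p_i(t,\bs_{-i})=0$ \emph{exactly} on all negative reports, not merely in the limit. The paper uses precisely this to anchor the envelope at $0$ with $u_i(0)=0$, which gives $p_i(t)=t\,x_i(t)-\int_0^t x_i$ for $t\ge0$. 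The integral from $-\infty$, the constant $c_i$, the ``main obstacle'' and the truncation fallback then all disappear. Your fallback inequality $\int_{-\infty}^{s}x_i\le s\,x_i(s)\le 0$ for $s<0$ says the same thing, but you did not draw the conclusion.

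The integrability justification is also wrong. $R(z)$ is not $O(|z|)$ on the left tail: $R(z)\sim\sqrt{2\pi}\,e^{z^2/2}$ as $z\to-\infty$. In fact $\E[\sigma_{i\mid-i}R(z_i)\mid\bs_{-i}]=\int_{\R}(1-F(t))\,dt=\infty$, so $\psi_i$ is \emph{not} integrable under the Gaussian conditional. This is exactly why the revenue identity has to be run on $[0,\infty)$.

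What you actually need is integrability of $x_i\psi_i$, and it follows from $x_i=0$ on $\{s_i<0\}$:
\[
\E[x_i|\psi_i|\mid\bs_{-i}]\le\E[s_i^+\mid\bs_{-i}]+\int_0^\infty(1-F)=2\,\E[s_i^+\mid\bs_{-i}]<\infty .
\]
For the pointwise upper bound, use $\psi_i<s_i$ to get $\sum_i x_i\psi_i\le\max\{0,\max_i\psi_i\}\le\max_i s_i^+$, which has finite expectation. With these two repairs your argument coincides with the paper's proof.
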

\begin{proof}
Fix $\bs_{-i}$ and abbreviate the conditional CDF and density by $F$ and $f$.
For a monotone allocation $x(t)$, the envelope theorem gives
\[
u(t)=u(a)+\int_a^t x(z)\,dz.
\]
EPIR and no positive transfers imply $x(t)=0$ for $t<0$ and $u(0)=0$.
Therefore, for $t\ge0$,
\[
p(t)=t x(t)-\int_{0}^{t}x(z)\,dz,
\]
while $p(t)=x(t)=0$ for $t<0$. Since the Gaussian has a finite first moment,
Tonelli's theorem applies:
\begin{align*}
\E[p\mid\bs_{-i}]
&=\int_{0}^{\infty}t x(t)f(t)\,dt
-\int_{0}^{\infty}f(t)\int_{0}^{t}x(z)\,dz\,dt\\
&=\int_{0}^{\infty}x(t)\{t f(t)-(1-F(t))\}\,dt\\
&=\E\!\left[x(s_i,\bs_{-i})
\left(s_i-\frac{1-F(s_i)}{f(s_i)}\right)\middle|\bs_{-i}\right].
\end{align*}
Summing over bidders converts revenue to conditional virtual surplus.
Pointwise, single-item feasibility is maximized by allocating to the highest
nonnegative conditional virtual value. The preceding monotonicity proof makes
this allocation implementable, and critical payments attain the identity.
Finally $\psi_i<s_i$, so any allocated type is positive and EPIR holds.
\end{proof}

\paragraph{Boundary and ties (main text, Theorem~5.5).} Two events need a convention: a tie for the highest conditional virtual value, and a maximum exactly equal to zero. Under any nondegenerate Gaussian prior both have probability zero, so every fixed convention yields the same expected revenue and the same incentive properties (the allocation remains monotone and the payment remains the threshold). The implementation allocates when the maximum is $\ge0$ and breaks exact ties by the lowest bidder index (\texttt{argmax}); the threshold bracket $[0,s_{i^*}]$ of the main text's Lemma~5.3 is unaffected.

\paragraph{Dominance over IPV: full proof of the main text's Corollary~5.6.}
The main text states the corollary without a sketch; the argument is as follows. IPV-Myerson is a feasible EPIC/EPIR mechanism whose revenue equals its expected \emph{true} conditional virtual surplus (Roughgarden and Talgam-Cohen 2016, Prop.~5.2), while \textsc{AffMyerson} maximizes that integrand pointwise, which gives the weak inequality. For strictness: when $Q_{ij}<0$ the moving threshold of \textsc{AffMyerson} differs from the fixed IPV reserve for all but one value of $s_j$, and on the open interval between them exactly one mechanism allocates to $i$ (the one with the larger integrand) on a set of positive MVN measure.
Regular IPV-Myerson is itself EPIC and EPIR, hence the theorem gives weak
revenue dominance under the true prior. The inequality is strict exactly when
the IPV allocation fails to maximize true conditional virtual surplus on a
positive-probability set; when $Q_{ij}<0$ for some pair, such a set can be
exhibited constructively, so strictness there is derived, not assumed.

\begin{proof}[Constructive proof of strictness]
IPV-Myerson is a feasible EPIC/EPIR mechanism: its allocation is monotone in
each own signal (regular marginals) with threshold payments. By the revenue
identity above (equivalently, Prop.~5.2 of Roughgarden and Talgam-Cohen),
\emph{any} EPIC/EPIR mechanism's expected revenue equals its expected
conditional virtual surplus, $\E_{\bs}[\sum_i x_i(\bs)\psi_i(s_i\mid\bs_{-i})]$
with the \emph{true} $\psi_i$ common to both mechanisms. AffMyerson maximizes
the integrand pointwise, so its revenue weakly dominates. Strictness: pick
$i,j$ with $Q_{ij}<0$ and fix the other coordinates. Let $r_i$ solve
$\varphi^{\mathrm{IPV}}_i(r_i)=0$ (independent of $\bs_{-i}$) and
$\rho_i(\bs_{-i})$ solve $\psi_i(\rho_i\mid\bs_{-i})=0$ (unique by the
unconditional-regularity lemma). By cross-monotonicity,
$s_j\mapsto\psi_i(r_i\mid\bs_{-i})$ is strictly decreasing, so
$\rho_i(\bs_{-i})\neq r_i$ except at most one $s_j$. Choose $s_j$ low enough
that competitor virtual values are negative on the relevant compact set; then
for $s_i$ in the nonempty open interval between $r_i$ and $\rho_i$, exactly
one mechanism allocates to $i$: if IPV allocates where AffMyerson does not
($\rho_i>r_i$), the IPV integrand is $\psi_i<0$ versus $0$; in the opposite
case AffMyerson's integrand is $\psi_i>0$ versus $0$. Either way IPV's
integrand is strictly smaller on an open set of positive MVN measure;
integrating gives the strict inequality.
\end{proof}

\section{Log-link regularity: the complete characterization}

Under the identity link (main text), $\partial\psi_i/\partial s_i=2-z_iR(z_i)>1$
by the standard bound $zR(z)<1$, so regularity is automatic. This section gives
the complete characterization promised in the main text for the multiplicative
(log-normal) specification.

\begin{slemma}[Conditional virtual value under the log link]
Let $y_i=\log s_i$ follow the M-matrix MVN of the main text's assumption. Then
$s_i\mid\bs_{-i}$ is log-normal and
\[
\psi_i(s_i\mid\bs_{-i})=s_i\bigl[1-\sigma_{i\mid-i}R(z_i)\bigr],\quad
z_i=\frac{\log s_i-\mu_{i\mid-i}}{\sigma_{i\mid-i}},\quad
\sigma_{i\mid-i}=Q_{ii}^{-1/2}.
\]
\end{slemma}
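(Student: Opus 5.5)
The plan is a change of variables applied to the conditional law of the log value. First, $y_i=\log s_i$ is a strictly increasing bijection from $(0,\infty)$ onto $\R$. So conditioning on $\bs_{-i}$ is the same as conditioning on $\mathbf y_{-i}=\log\bs_{-i}$, since both generate the same $\sigma$-algebra.

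By Lemma~\ref{lem:cond} applied to the Gaussian law of $\mathbf y$, we have $y_i\mid\mathbf y_{-i}\sim\N(\mu_{i\mid-i},\sigma_{i\mid-i}^2)$ with $\sigma_{i\mid-i}=Q_{ii}^{-1/2}$. The conditional mean is $\mu_{i\mid-i}=\mu_i-Q_{ii}^{-1}\sum_{j\ne i}Q_{ij}(\log s_j-\mu_j)$. Hence $s_i\mid\bs_{-i}$ is log-normal with these parameters, which is the first claim.

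Next I compute the conditional CDF and density of $s_i$. With $z=(\log s-\mu_{i\mid-i})/\sigma_{i\mid-i}$:
\begin{itemize}
\item the CDF is $F(s)=\Phi(z)$, so $1-F(s)=1-\Phi(z)$;
\item by the Jacobian $dz/ds=1/(s\,\sigma_{i\mid-i})$, the density is $f(s)=\phi(z)/(s\,\sigma_{i\mid-i})$.
\end{itemize}
The conditional virtual value, i.e.\ Definition~\ref{def:cvv}'s construction $s-(1-F(s))/f(s)$ taken under the conditional law, is then
\[
s-\frac{1-\Phi(z)}{\phi(z)}\,s\,\sigma_{i\mid-i}=s\bigl[1-\sigma_{i\mid-i}R(z)\bigr],
\]
using $R(z)=(1-\Phi(z))/\phi(z)$ as defined in Section~\ref{sec:mech}. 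This is exactly the displayed formula.

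There is essentially no obstacle here. The only care needed is to justify that the virtual value in the correlated setting is defined through the conditional distribution of the bidder's own signal given $\bs_{-i}$, which is the convention of Section~\ref{sec:mech}. One should also note that the support of $s_i$ is $(0,\infty)$, so $z$ ranges over all of $\R$ and the formula holds on the entire support. I would also remark that the M-matrix sign condition plays no role in this lemma. It matters only afterwards, where $\partial\mu_{i\mid-i}/\partial s_j=-Q_{ij}/(Q_{ii}s_j)\ge0$ gives cross-monotonicity, mirroring Lemma~\ref{lem:cross}.
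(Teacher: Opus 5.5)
Your proof is correct and follows the paper's own argument. You read off the Gaussian conditional of $y_i$ from row $i$ of $Q$, push it through $s_i=e^{y_i}$ to get $F=\Phi(z)$ and $f=\phi(z)/(s\,\sigma_{i\mid-i})$ via the Jacobian, and substitute $(1-F)/f=s\,\sigma_{i\mid-i}R(z)$ into $s-(1-F)/f$. Your extra remarks are accurate but not needed here: conditioning on $\bs_{-i}$ is equivalent to conditioning on $\mathbf y_{-i}$, and the M-matrix sign condition plays no role in this lemma.
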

\begin{proof}
The conditional law of $y_i$ is $\N(\mu_{i\mid-i},\sigma^2_{i\mid-i})$. Under
$s_i=e^{y_i}$, $F_{i\mid-i}(s)=\Phi(z)$ and
$f_{i\mid-i}(s)=\phi(z)/(s\,\sigma_{i\mid-i})$ (the $1/s$ is the Jacobian), so
$(1-F)/f=s\,\sigma_{i\mid-i}R(z)$, and the general conditional virtual value
$s-(1-F)/f$ gives the display.
\end{proof}

\begin{stheorem}[Regularity threshold $\sigma^\star$]
In the setting above,
\[
\frac{\partial\psi_i}{\partial s_i}=2-\bigl(z_i+\sigma_{i\mid-i}\bigr)R(z_i),
\]
so $\psi_i$ is strictly increasing in $s_i$ iff
$\sup_{z}(z+\sigma_{i\mid-i})R(z)\le2$. Writing $g_\sigma(z)=(z+\sigma)R(z)$, the
map $\sigma\mapsto\sup_z g_\sigma(z)$ is strictly increasing and crosses $2$ at
a unique
\[
\sigma^\star=1.517619\ldots,
\]
so regularity holds iff $\sigma_{i\mid-i}\le\sigma^\star$, equivalently
$Q_{ii}\ge1/(\sigma^\star)^2=0.434184\ldots$ For $\sigma_{i\mid-i}<\sigma^\star$ the
slope is bounded away from zero; at $\sigma_{i\mid-i}=\sigma^\star$ it vanishes at
exactly one point ($z^\star=-0.551$) and $\psi_i$ is still strictly increasing;
for $\sigma_{i\mid-i}>\sigma^\star$ it is negative on an open interval and
regularity fails.
\end{stheorem}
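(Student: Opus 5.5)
Writing $\sigma=\sigma_{i\mid-i}$, I will first obtain the derivative formula by differentiating $\psi_i=s_i[1-\sigma R(z_i)]$ in $s_i$, with $dz_i/ds_i=1/(\sigma s_i)$ and $R'(z)=zR(z)-1$ (Appendix~S1). This gives
\[
\frac{\partial\psi_i}{\partial s_i}=1-\sigma R(z)-\bigl(zR(z)-1\bigr)=2-(z+\sigma)R(z).
\]
Because $s_i\mapsto z_i$ is a strictly increasing bijection from $(0,\infty)$ onto $\R$, monotonicity of $\psi_i$ on its support is exactly the statement that $g_\sigma(z)=(z+\sigma)R(z)\le2$ for all $z$, with strictness wherever the inequality is strict.

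Next I will reduce the condition to a single scalar threshold. For fixed $z$, $g_\sigma(z)$ is strictly increasing in $\sigma$ since $R>0$. So $M(\sigma)=\sup_z g_\sigma(z)$ is nondecreasing, and it is strictly increasing once the supremum is known to be attained. Attainment follows from the tails. As $z\to+\infty$, $R(z)=1/z-1/z^3+O(z^{-5})$, so $g_\sigma(z)\to1$. As $z\to-\infty$, $R(z)\sim 1/\phi(z)$ blows up while $z+\sigma<0$, so $g_\sigma(z)\to-\infty$. Hence whenever $M(\sigma)>1$ the supremum is a maximum at an interior point.

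Now $g_\sigma(z)\le2$ for all $z$ is equivalent to $\sigma R(z)\le 2-zR(z)$. Since $zR(z)<1$, the right side is positive, so this is the same as $\sigma\le 2/R(z)-z$ for all $z$. The sharp threshold is therefore
\[
\sigma^\star=\inf_z h(z),\qquad h(z)=\frac{2}{R(z)}-z.
\]
This is how the statement defines it, and it settles the ``iff'' directly. The infimum is attained because $h(z)\to+\infty$ in both directions: as $z\to+\infty$, $2/R\approx 2z$, so $h\approx z$; as $z\to-\infty$, $2/R\to0$ and $-z\to\infty$. Monotonicity and uniqueness of the crossing then follow from the identity $\{\sigma: M(\sigma)\le2\}=(-\infty,\sigma^\star]$.

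For the fine structure I will locate the minimizer of $h$. Its derivative is $h'(z)=-2R'(z)/R(z)^2-1=(2-2zR-R^2)/R^2$, so a critical point solves $R^2+2zR=2$. Numerically this has the root $z^\star\approx-0.551$, and $h(z^\star)\approx1.5176$, which confirms the stated constant and gives $Q_{ii}\ge1/(\sigma^\star)^2$ for the diagonal form.

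The step I expect to be the main obstacle is showing that this critical point is unique, which is needed for the claim that at $\sigma=\sigma^\star$ the slope vanishes at exactly one point. The cleanest route seems to be proving that $h$ is strictly convex, or that $k(z)=R^2+2zR$ is strictly monotone. Differentiating with $R'=zR-1$ gives $k'(z)=2R(zR-1)+2R+2z(zR-1)$, which simplifies to $2\bigl(zR^2+z^2R-z\bigr)=2z\bigl(R^2+zR-1\bigr)$. So the sign of $k'$ reduces to the sign of $z$ times that of $R^2+zR-1$, and the latter should follow from the known Mills-ratio bounds. I would first try the Sampford-type inequality to fix the sign of $R^2+zR-1$; failing that, a careful numerical or interval argument on a compact window, with the asymptotics covering the tails.

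With uniqueness in hand, the remaining claims follow. For $\sigma<\sigma^\star$, $2-g_\sigma(z)\ge(\sigma^\star-\sigma)R(z)\,\min(\cdot)$, which gives a bound away from zero after handling the large-$z$ tail, where the slope tends to $1$. For $\sigma>\sigma^\star$, continuity gives $g_\sigma>2$ on an open neighbourhood of $z^\star$, so regularity fails there.
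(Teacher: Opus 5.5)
Your route is correct in outline and differs from the paper's. The paper treats $M(\sigma)=\sup_z g_\sigma(z)$ as a function of $\sigma$. It argues that $\partial g_\sigma/\partial\sigma=R>0$ makes $M$ increasing, that $M(0)=1<2$ and $M(\sigma)\to\infty$, and that continuity then gives a unique crossing. It computes $\sigma^\star$ numerically from the pair $g_\sigma(z)=2$, $R(z)[1+z(z+\sigma)]=z+\sigma$. You instead solve $g_\sigma\le 2$ for $\sigma$, which amounts to the factorization
\[
2-g_\sigma(z)=R(z)\bigl(h(z)-\sigma\bigr),\qquad h(z)=\frac{2}{R(z)}-z .
\]
This has three advantages. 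It exhibits $\sigma^\star=\min_z h$ directly, which is the main text's formula. It removes any need to argue continuity of $M$. It collapses the paper's two equations into your single equation $R^2+2zR=2$, by eliminating $\sigma=2/R-z$.

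The same identity also proves the three regimes, which the paper's proof asserts without argument. For $\sigma<\sigma^\star$ the slope is at least $(\sigma^\star-\sigma)R(z)>0$, and it tends to $1$ at $+\infty$ and to $+\infty$ at $-\infty$, so it is bounded away from zero; your $\min(\cdot)$ factor is not needed. At $\sigma=\sigma^\star$ the slope vanishes exactly on $\{h=\sigma^\star\}$. For $\sigma>\sigma^\star$ it equals $R(z^\star)(\sigma^\star-\sigma)<0$ at $z^\star$.

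The one step that would fail as written is the hope that $k(z)=R^2+2zR$ is strictly monotone. Your derivative $k'(z)=2z\bigl(R^2+zR-1\bigr)$ is correct. Sampford's inequality $0<\lambda'<1$ for $\lambda=1/R$, where $\lambda'=\lambda(\lambda-z)$, is exactly the statement $R^2+zR-1>0$. So $k'$ has the sign of $z$: $k$ is decreasing on $(-\infty,0)$ and increasing on $(0,\infty)$, not monotone.

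Uniqueness of $z^\star$ still holds with one more observation. At the origin $k(0)=R(0)^2=\pi/2<2$, and $k(z)=2-z^{-2}+O(z^{-4})\to 2$ as $z\to+\infty$. Since $k$ is increasing on $(0,\infty)$, this gives $k<2$ on $[0,\infty)$. On $(-\infty,0)$, $k$ falls strictly from $+\infty$ to $\pi/2$ and meets $2$ exactly once.

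Because $h'=(2-k)/R^2$, $h$ is strictly decreasing up to $z^\star$ and strictly increasing after it. Hence $h=\sigma^\star$ only at $z^\star$, which is the ``exactly one point'' claim; the paper's proof does not establish this. Your other option, strict convexity of $h$, is the same as strict convexity of the normal hazard $1/R$, which is classical and would also work.
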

\begin{proof}
Differentiating, with $\partial z_i/\partial s_i=1/(s_i\sigma_{i\mid-i})$ and
$R'(z)=zR(z)-1$ (from $\phi'=-z\phi$),
\[
\frac{\partial\psi_i}{\partial s_i}
=\bigl(1-\sigma_{i\mid-i}R(z_i)\bigr)-R'(z_i)
=2-(z_i+\sigma_{i\mid-i})R(z_i).
\]
Finiteness of the supremum: as $z\to+\infty$, $R(z)=z^{-1}-z^{-3}+O(z^{-5})$
gives $g_\sigma(z)\to1$; as $z\to-\infty$, $R(z)\sim\sqrt{2\pi}e^{z^2/2}$ while
$(z+\sigma)\to-\infty$, so $g_\sigma\to-\infty$; hence the supremum is attained
at an interior point. Monotonicity in $\sigma$: $\partial g_\sigma/\partial
\sigma=R(z)>0$ pointwise, so the supremum is strictly increasing in $\sigma$;
at $\sigma=0$, $\sup_z zR(z)=1<2$, and the supremum diverges as
$\sigma\to\infty$, so continuity gives a unique crossing $\sigma^\star$. The
stationarity condition $R(z)[1+z(z+\sigma)]=z+\sigma$ is transcendental, so
$\sigma^\star$ is determined numerically: $\sigma^\star=1.517619\ldots$
\end{proof}

\paragraph{Constructive guarantee (design choice).}
In the spectral parameterization $Q_{ii}=D_{ii}$ (the coupling matrix has zero
diagonal), so raising the diagonal floor of $D$ to $c_0\ge0.45>0.434184$ makes
$\sigma_{i\mid-i}<\sigma^\star$ hold for \emph{every} network output. Under
the log link, global regularity becomes an architectural property exactly as
affiliation is, with no inference-time checks; by the proposition above this is a design choice, since neither truthfulness nor optimality depends on it. The identity-link deployment in
the main text needs no floor at all.

\paragraph{What $\sigma^\star$ does and does not govern.}
The theorem characterizes textbook regularity, monotonicity of $\psi_i$ on the whole support. Neither truthfulness nor optimality needs it in the single-item setting, as the next proposition (main text, Theorem~5.9(ii)) shows.

\begin{sproposition}[Truthfulness and optimality for every conditional log-scale]
Under the log link with any SPD M-matrix $Q$, the allocation ``highest nonnegative conditional virtual value wins'' with threshold payments is EPIC and EPIR. If the prior is correct, it maximizes expected revenue among all EPIC and EPIR mechanisms with no positive transfers and finite expected payments, for every value of $\sigma_{i\mid-i}$.
\end{sproposition}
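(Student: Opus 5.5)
The plan is to show that regularity is needed only where virtual values are nonnegative, and that there it holds for every $\sigma_{i\mid-i}$. The key quantity is the slope computed in the regularity-threshold theorem,
\[
\partial\psi_i/\partial s_i=2-(z_i+\sigma_{i\mid-i})R(z_i).
\]
On the set where $\psi_i\ge0$ we have $\sigma_{i\mid-i}R(z_i)\le1$, so the slope is at least $1-z_iR(z_i)$. If $z_i\le0$ this is at least $1$. If $z_i>0$, the Mills bound $zR(z)<1$ gives a strictly positive lower bound. Hence $\psi_i$ is strictly increasing on its nonnegative region.

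\textbf{Step 1: the zero set is a threshold.} Because $\psi_i=s_i[1-\sigma R(z_i)]$ and $R(z_i)$ is strictly decreasing in $s_i$, the sign of $\psi_i$ equals the sign of $1-\sigma R(z_i)$. That sign switches exactly once, at $\rho_i(\bs_{-i})=\exp(\mu_{i\mid-i}+\sigma z_0)$, where $R(z_0)=1/\sigma$. So $\{\psi_i\ge0\}=[\rho_i,\infty)$, and $\psi_i$ is strictly increasing there.

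\textbf{Step 2: cross-monotonicity.} Since $\partial\mu_{i\mid-i}/\partial s_j=-Q_{ij}/Q_{ii}\ge0$, raising $s_j$ lowers $z_i$, raises $R(z_i)$, and so lowers $\psi_i$ for fixed $s_i>0$. This holds for every $\sigma$.

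\textbf{Step 3: monotone allocation.} Bidder $i$ wins iff $\psi_i\ge\max(0,\max_{j\ne i}\psi_j)$. Winning forces $\psi_i\ge0$, i.e.\ $s_i\ge\rho_i$. On that range $\psi_i$ increases in $s_i$, while every rival score weakly decreases in $s_i$ by Step 2. Moreover, once $\psi_i\ge0$ it stays nonnegative as $s_i$ grows. So $x_i(\cdot,\bs_{-i})$ is a step function. The critical value is well defined, and its existence and uniqueness in $(0,s_i]$ follow from the argument of Theorem~\ref{supp:thm:threshold}, restricted to $[\rho_i,s_i]$ with $D(\rho_i)\le0$. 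The EPIC/EPIR proof of Section~S2 then applies verbatim: a monotone single-parameter allocation with critical payment is DSIC, the payment is at most $s_i$, and losers pay zero.

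\textbf{Step 4: optimality.} I would redo the revenue identity of Section~S3 with the log-normal conditional, whose support is $(0,\infty)$ with finite mean. This gives $\E[\mathrm{Rev}]=\E[\sum_i x_i\psi_i]$ for any EPIC, EPIR, no-positive-transfer mechanism with finite expected payments. The upper bound $\sum_i x_i\psi_i\le\max(0,\max_i\psi_i)$ holds pointwise for any feasible $x$, and our rule attains it pointwise. Our rule is implementable by Step 3, so it is optimal without any ironing, for every $\sigma_{i\mid-i}$.

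\textbf{Main obstacle.} The delicate point is the identity in Step 4. The usual Myerson argument uses monotonicity of $\psi$ only to make the pointwise maximizer implementable, and Step 3 supplies that. What must be checked carefully is the Tonelli interchange and the boundary term at $0$: that $u(0^+)=0$ and that $t(1-F(t))\to0$ at both ends. Both follow from the log-normal tails.
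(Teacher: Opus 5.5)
Your proposal is correct and follows the paper's proof essentially step for step: the nonnegative region $\{\psi_i\ge0\}$ is an up-set because $\psi_i/s_i$ is increasing, the slope satisfies $\partial\psi_i/\partial s_i\ge1-z_iR(z_i)>0$ on it, cross-monotonicity carries over through the monotone link, and optimality follows from the regularity-free revenue identity plus pointwise maximization. The only slip is notational: $-Q_{ij}/Q_{ii}$ is $\partial\mu_{i\mid-i}/\partial y_j$ with $y_j=\log s_j$, not $\partial\mu_{i\mid-i}/\partial s_j$, and this does not change its sign.
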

\begin{proof}
\emph{Step 1 (the nonnegative region is an up-set).} $\psi_i/s_i=1-\sigma_{i\mid-i}R(z_i)$ is strictly increasing in $s_i$, since $z_i$ increases in $s_i$ and $R$ is decreasing; hence $\{\psi_i\ge0\}=[s_i^\star,\infty)$ for some $s_i^\star\in(0,\infty)$.
\emph{Step 2 (monotone on that region).} On $\{\psi_i\ge0\}$ one has $\sigma R(z_i)\le1$, hence
\[\frac{\partial\psi_i}{\partial s_i}=2-z_iR(z_i)-\sigma R(z_i)\ \ge\ 1-z_iR(z_i)\ >\ 0\]
for every $\sigma>0$,
so $\psi_i$ is strictly increasing wherever it is nonnegative. Cross-monotonicity carries over under the monotone link ($\partial\mu_{j\mid-j}/\partial y_i=-Q_{ji}/Q_{jj}\ge0$ and $\psi_j$ is decreasing in $\mu_{j\mid-j}$). Therefore, as $s_i$ rises, bidder $i$ moves from ``never wins'' ($\psi_i<0$) to a region where its own virtual value rises and every rival's falls: the allocation is monotone in $s_i$ for every $\bs_{-i}$, the threshold $t_0$ is unique, and EPIC/EPIR follow exactly as in the main text's Theorem~5.4.
\emph{Step 3 (revenue identity without regularity).} Fix $\bs_{-i}$ and let $x_i(\cdot\mid\bs_{-i})$ be any allocation rule that is nondecreasing in $s_i$, with the EPIR-binding payment $p_i(s_i)=x_i(s_i)s_i-\int_0^{s_i}x_i(t)\,dt$. Integrating by parts against the conditional density $f_{i\mid-i}$ gives $\E[p_i\mid\bs_{-i}]=\E[x_i\,\psi_i\mid\bs_{-i}]$ provided the boundary term $x_i(s)\,s\,(1-F_{i\mid-i}(s))\to0$ as $s\to\infty$ and $\E[s_i\mid\bs_{-i}]<\infty$. Both hold for the log-normal: $\E[s_i\mid\bs_{-i}]=e^{\mu+\sigma^2/2}<\infty$ and $s(1-\Phi(z))\le s\,\phi(z)/z\to0$. The identity uses only monotonicity and the IR normalization, not regularity. Summing over $i$ and taking expectations, every EPIC/EPIR mechanism with no positive transfers and finite expected payments has $\mathrm{Rev}=\E[\sum_i x_i\psi_i]$.
\emph{Step 4 (pointwise bound and attainment).} Since $\sum_i x_i\le1$ and $x_i\ge0$, $\sum_ix_i\psi_i\le\max\{0,\psi_1,\dots,\psi_n\}$ pointwise, so $\mathrm{Rev}\le\E[\max\{0,\max_i\psi_i\}]$. The mechanism of Step~2 attains this bound pointwise, is EPIC/EPIR by Step~2, and has finite expected payments (payments are bounded by the winner's value, which has finite mean). Hence it is optimal.
\end{proof}

Global regularity therefore matters only for the un-ironed rule to coincide with Myerson's ironed rule on the whole support; where it fails, on part of $\{\psi_i<0\}$, the item is never sold, so nothing changes. We keep $\sigma\le\sigma^\star$ as the condition under which the virtual value is a textbook regular one, and enforce it in the deployed parameterization as a design choice (floor $c_0=0.45$), not as a requirement of the guarantees.

\section{Learned hypothesis class}

\paragraph{Proof of the main text's Proposition~6.1 (single factor $\Rightarrow$ M-matrix).} Let $\Sigma=D+\lambda vv^\top$ with $D$ positive diagonal, $\lambda>0$ and $v\ge0$ entrywise. By the Sherman--Morrison formula, with $c=1+\lambda v^\top D^{-1}v>0$,
\[Q=\Sigma^{-1}=D^{-1}-\frac{\lambda}{c}\,(D^{-1}v)(D^{-1}v)^\top,\]
\[Q_{ij}=-\frac{\lambda\,v_iv_j}{c\,d_id_j}\le0\quad(i\neq j),\]
so the off-diagonal entries are nonpositive, and $\Sigma\succ0$ gives $Q\succ0$. Hence $Q$ is a symmetric positive definite M-matrix.

For every context, the deployed network emits a positive diagonal $D$, a
symmetric nonnegative zero-diagonal matrix $C_0$, and a scalar $g$. It sets
\[
\alpha=\frac{(1-\epsilon)\operatorname{sigmoid}(g)}
{\lambda_{\max}(C_0)},\qquad
Q=D^{1/2}(I-\alpha C_0)D^{1/2}.
\]
For a symmetric nonnegative matrix, Perron--Frobenius gives spectral radius
$\lambda_{\max}(C_0)$. Hence all eigenvalues of $\alpha C_0$ have magnitude
strictly below one, so $I-\alpha C_0\succ0$. Congruence by $D^{1/2}$ preserves
positive definiteness, and the off-diagonals of $Q$ are nonpositive. Thus $Q$
is an M-matrix for every network weight vector and context. A stricter
diagonally dominant construction is retained as an ablation.

\begin{figure}[h]\centering
\includegraphics[width=\columnwidth]{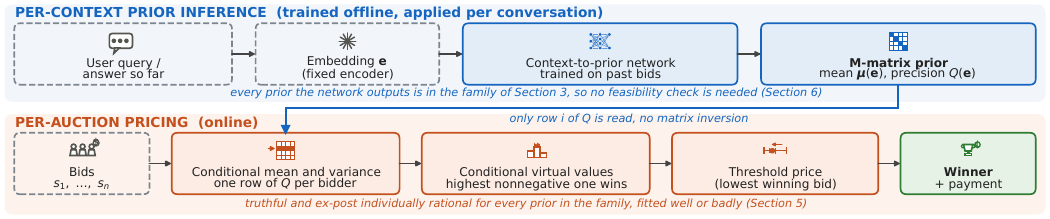}
\caption{Deployment pipeline. \textbf{Offline} (top lane): historical bids and
query embeddings train the context-to-prior map; the spectral parameterization
makes \emph{every} network output an M-matrix, so truthfulness never depends on
fit quality (main text, misspecification-robustness theorem). \textbf{Online}
(bottom lane): one $O(n^2)$ row-read of the precision matrix plus a $60$-step
bisection prices each auction in $\sim$$12\,\mu$s at $n{=}2$ (pricing only;
p50 $0.3$\,ms end-to-end with the network forward pass, Section~\ref{supp:rerun}).}
\end{figure}

\begin{stheorem}[Completeness of the spectral representation]
Let $Q$ be any symmetric positive-definite Z-matrix. Define
\[
D=\operatorname{diag}(Q),\qquad B=I-D^{-1/2}QD^{-1/2}.
\]
Then $D\succ0$, $B$ is symmetric, nonnegative and zero-diagonal,
$\rho(B)<1$, and $Q=D^{1/2}(I-B)D^{1/2}$. Conversely, any positive diagonal
$D$ and symmetric nonnegative zero-diagonal $B$ with $\rho(B)<1$ produce an
SPD Z-matrix through this formula.
\end{stheorem}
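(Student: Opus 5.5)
The plan has two directions. The forward direction is a direct construction from $Q$. The converse is a congruence argument that uses the spectral radius of $B$. Throughout, a Z-matrix means nonpositive off-diagonal entries, so an SPD Z-matrix is exactly the class of Assumption~\ref{as:mvn}.

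\emph{Forward direction.} First, $Q\succ0$ gives $Q_{ii}=e_i^\top Qe_i>0$, so $D=\operatorname{diag}(Q)\succ0$ and $D^{-1/2}$ is well defined. Next, I will compute the entries of $B$ directly. The diagonal is $B_{ii}=1-Q_{ii}/Q_{ii}=0$, and the off-diagonal entries are $B_{ij}=-Q_{ij}/\sqrt{Q_{ii}Q_{jj}}\ge0$ by the Z-sign condition. Symmetry of $B$ is inherited from $Q$. The identity $Q=D^{1/2}(I-B)D^{1/2}$ is immediate from the definition of $B$.

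The only substantive step is $\rho(B)<1$. The matrix $I-B=D^{-1/2}QD^{-1/2}$ is congruent to $Q$, hence SPD, so every eigenvalue $\lambda$ of the symmetric matrix $B$ satisfies $\lambda<1$. This bounds the eigenvalues only from above. To control $|\lambda|$ I will use Perron--Frobenius for the symmetric nonnegative matrix $B$: its spectral radius $\rho(B)$ is itself an eigenvalue, namely $\lambda_{\max}(B)$. Hence $\rho(B)=\lambda_{\max}(B)<1$. A more elementary route is also available, since $\rho(B)=\max_{\|x\|=1}|x^\top Bx|\le\max_{\|x\|=1}|x|^\top B|x|\le\lambda_{\max}(B)$ by nonnegativity of $B$. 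This avoids citing Perron--Frobenius, and I would include it as the self-contained argument. This step is the main obstacle: positive definiteness alone would not exclude eigenvalues of $B$ at or below $-1$, and it is nonnegativity that turns the one-sided bound into a bound on $\rho(B)$.

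\emph{Converse.} Let $D$ be positive diagonal and $B$ symmetric, nonnegative, and zero-diagonal with $\rho(B)<1$. Every eigenvalue of $I-B$ is then at least $1-\rho(B)>0$, so $I-B\succ0$. Congruence by the invertible $D^{1/2}$ preserves positive definiteness, so $Q=D^{1/2}(I-B)D^{1/2}\succ0$ and $Q$ is symmetric. Its off-diagonal entries are $Q_{ij}=-\sqrt{D_{ii}D_{jj}}\,B_{ij}\le0$, so $Q$ is an SPD Z-matrix.

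Finally, I will tie the result to the parameterization of Section~\ref{sec:neural} in a closing remark. For a given $Q$, take $C_0=B$ and choose $\alpha$ so that $\alpha\lambda_{\max}(C_0)=\rho(B)\in[0,1)$. The network form then reproduces $Q$. This covers every M-matrix up to the stated $\epsilon$ margin and the floor on $D$, and the degenerate case $B=0$ corresponds to any $g$ with $C_0$ arbitrary.
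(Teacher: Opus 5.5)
Your proof is correct and follows the paper's argument step for step: the entrywise computation of $B$, congruence giving $I-B\succ0$ and hence $\lambda_{\max}(B)<1$, Perron--Frobenius to upgrade this to $\rho(B)<1$, and congruence plus the sign of $B_{ij}$ for the converse; your Rayleigh-quotient bound $\rho(B)\le\lambda_{\max}(B)$ is a self-contained substitute for the Perron--Frobenius citation. The only slip is in your closing remark, which lies outside the theorem: $B=0$ is not realized by any $g$ and $C_0$, since $\alpha C_0=0$ would need $\operatorname{sigmoid}(g)=0$ and $C_0=0$ leaves $\alpha$ undefined, and with softplus weights any $B$ with a zero off-diagonal entry is reached only in the closure, as the paper notes.
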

\begin{proof}
Positive definiteness gives $Q_{ii}>0$. Directly,
$B_{ii}=0$ and $B_{ij}=-Q_{ij}/\sqrt{Q_{ii}Q_{jj}}\ge0$ for $i\ne j$.
Moreover, $I-B=D^{-1/2}QD^{-1/2}\succ0$, hence every eigenvalue of $B$ is
strictly below one. Since $B$ is nonnegative, Perron--Frobenius identifies its
spectral radius with its largest eigenvalue, so $\rho(B)<1$. The factorization
is the definition rearranged. Conversely, symmetry and $\rho(B)<1$ imply
$I-B\succ0$; congruence by $D^{1/2}$ preserves positive definiteness, while
the signs of the off-diagonals are nonpositive.
\end{proof}

The theorem concerns the abstract factorization. The implementation uses
entrywise softplus weights, diagonal floor $D_{ii}>10^{-3}$, and strength
$(1-\epsilon)\operatorname{sigmoid}(g)$ with $\epsilon=10^{-3}$, so its
realized set is
\[
\begin{aligned}
\mathcal{S}_\epsilon=\{D^{1/2}(I-B)D^{1/2}:\ &D_{ii}>10^{-3},\ B>0\ \text{entrywise off-diagonal},\\
&\rho(B)<1-\epsilon\},
\end{aligned}
\]
a strict subset of the cone. Four precise statements replace any ``entire class'' reading. (i) Every SPD M-matrix with $Q_{ii}>10^{-3}$ for all $i$, normalized coupling
$\rho(B)<1-\epsilon$, and strictly negative off-diagonals is realized exactly. (i$'$) Since $Q_{ii}=D_{ii}$ in this factorization, the diagonal floor excludes matrices with a conditional precision below $10^{-3}$, i.e.\ a conditional scale above $1/\sqrt{10^{-3}}=31.6$ cents, a hard cap on how diffuse a conditional can be rather than a spectral sliver; the fitted seed-$0$ prior touches it ($\min_i\sqrt{Q_{ii}}=0.032$). Truthfulness is unaffected; the coverage claim is ``the cone with $\rho(B)<1-\epsilon$ and $Q_{ii}>10^{-3}$'', not ``the cone up to an $\epsilon$-sliver''.
(ii) M-matrices with zero off-diagonal entries (including diagonal $Q$) are
not realized exactly but lie in the closure of $\mathcal{S}_\epsilon$, since
softplus weights can be made arbitrarily small. (iii) M-matrices with $\rho(B)\in(1-\epsilon,1)$, within $\epsilon$ of singularity, are excluded,
by design: the margin doubles as a conditioning guarantee for the online
inversion-free pricing. Both exclusions, the spectral sliver and the diagonal floor, are
dimension-independent; in particular no dimension-dependent \emph{partial-correlation} budget of the diagonally dominant
$1/(n-1)$ kind applies at any $n$, which is the property the main text uses. (Strict diagonal dominance budgets partial correlations; marginal correlations can still approach $1$, but only at vanishing dominance margin.)

\paragraph{Relaxed-model audit.}
We also fit an SPD precision-Cholesky model and repair it by clipping positive
off-diagonals, symmetrizing, and shifting the diagonal when the minimum
eigenvalue is too small. This is not an exact nearest-cone projection. On the
four-advertiser benchmark the repair is material and reserve-relative revenue
is seed-unstable; these diagnostics are why the paper deploys the structured
model instead.

\paragraph{Four priors on the designed $n=4$ data (main text, Section~7.2).}
Same split ($6{:}1{:}3$ over $780$ queries), five optimization seeds, test-set statistics; $R^2$ is the mean over the four brands of the context-mean prediction on unseen queries, $\bar\rho$ the mean fitted correlation over all six brand pairs, $\rho_{01}$ the fitted correlation of the flagship--premium pair, and ``outside cone'' the share of positive off-diagonal precision entries before any repair (\texttt{fit\_four\_priors\_n4b.py}, \texttt{fit\_four\_priors\allowbreak\_n4b.json}).

\begin{table*}[t]\centering\small
\begin{tabular}{lccccc}
\toprule
Prior & Test NLL & $R^2$ & $\bar\rho$ & $\rho_{01}$ & Outside cone \\
\midrule
A: global closed form (no context) & $18.56$ & $-0.00$ & $0.20$ & $0.67$ & $33\%$ \\
B: unconstrained precision Cholesky & $20.75\pm0.46$ & $-0.03\pm0.02$ & $0.47\pm0.06$ & $0.61\pm0.03$ & $25\pm3\%$ \\
C: diagonally dominant M-matrix & $19.00\pm0.62$ & $+0.60\pm0.06$ & $0.22\pm0.02$ & $0.23\pm0.03$ & $0$ \\
S: spectral M-matrix (deployed) & $18.37\pm0.57$ & $+0.01\pm0.02$ & $0.47\pm0.08$ & $0.58\pm0.05$ & $0$ \\
\bottomrule
\end{tabular}
\end{table*}

The global fit A is included as the no-context reference: its $\rho_{01}$ of $0.67$ is the raw correlation of the two brands' bids, which the context-dependent fits redistribute between the context mean and the residual correlation. C is the only prior that learns an informative context mean, and it does so by giving up correlation; S keeps the correlation and, as Section~7.6 of the main text shows on real queries, that is the channel revenue rides.

\section{Revenue stability under prior misspecification: proof and numerical certification}\label{supp:stab}

\paragraph{Why the conditional coordinates, and a heuristic bridge to KL (moved from the main text).}
The ruler in Proposition~5.8 is the conditional coordinate, not $\|\hat Q-Q\|$: with $\sigma_{i\mid-i}=Q_{ii}^{-1/2}$, revenue is Lipschitz in the precision matrix only locally, with a constant that grows like $Q_{ii}^{-3/2}$ and is useless near the fitted priors (\S\ref{supp:stab-metric}). For Gaussians the conditional-coordinate error is, to leading order, the Fisher metric, which suggests $\E\Delta\lesssim\sqrt{\mathrm{KL}}$, the quantity maximum likelihood controls, and the one for which $\mathrm{MTP}_2$-constrained estimation is unusually well behaved (Lauritzen, Uhler and Zwiernik 2019). We record this as a heuristic bridge, not a theorem: a KL \emph{rate} for the constrained MLE is what would close the chain sample size $\to$ KL error $\to$ revenue loss.

The misspecification-robustness theorem of the main text guarantees that truthfulness survives \emph{any} prior in the M-matrix cone. This section quantifies what misspecification costs in \emph{revenue}, proves the main text's revenue-stability proposition for arbitrary $n$, and certifies each step numerically on the fitted seed-$0$ prior and the held-out bids.

\paragraph{Roadmap.}
Six lemmas, in the order they are used.
Lemma~\ref{lem:mills} is Mills-ratio calculus and fixes the constant $C(\cdot)$ that bounds how fast a virtual value can move when its conditional moments move.
Lemma~\ref{lem:active} shows that being \emph{active} ($\psi_i\ge0$) by itself bounds a bidder's standardized residual from below, so the constant is a property of the prior class rather than a measured quantity, which is what keeps it finite.
Lemma~\ref{lem:slack} shows that moving a bidder's conditional moments from one prior to the other, scale first and mean second, keeps its residual within an explicit slack of the pinned value; this is what lets a bidder that is active under only \emph{one} of the two priors be controlled.
Lemma~\ref{lem:gap} converts a parameter perturbation into a uniform perturbation of the threshold equation's gap function, and is where the clamp $\max\{0,\cdot\}$ earns its keep by excluding inactive bidders.
Lemma~\ref{lem:sandwich} converts that into a bound on the threshold itself, using only the fact that the gap function rises at unit rate; it replaces the implicit function theorem and is what makes the result hold at every $n$.
Lemma~\ref{lem:boundary} handles the profiles where the two priors disagree about who wins, or about whether to sell at all.
Proposition~\ref{prop:supp} assembles them.

\paragraph{Two facts that shape the result.}
First, the natural metric is not the precision matrix: since $\sigma_{i\mid-i}=Q_{ii}^{-1/2}$, revenue is not Lipschitz in $\|\hat Q-Q\|$ near singular priors (\S\ref{supp:stab-metric}). Second, the bound is Lipschitz only up to a boundary term: sale and winner decisions flip, so pointwise payments jump, and a bounded signal density is what turns those jumps into a controlled contribution.

\subsection{Setting and notation}
Bids are truthful for every prior in the cone (main text, misspecification-robustness theorem), so the true law $P^\star$, supported on $[0,\bar s]^n$ with bounded density, is fixed while the prior varies. Fix a profile $\bs$ with provisional winner $w$. For a prior $\theta=(\bmu,Q)$ write
\[
\begin{aligned}
  &\sigma_i=Q_{ii}^{-1/2},\qquad
  \beta_{ij}=-Q_{ij}/Q_{ii}\ \ge 0,\\
  &m_i(t)=\mu_i+\beta_{iw}(t-\mu_w)+\!\!\sum_{j\neq i,w}\!\!\beta_{ij}(s_j-\mu_j),
\end{aligned}
\]
so $m_i(t)$ is bidder $i$'s conditional mean when $w$'s report is set to $t$; $m_w$ does not depend on $t$. Put $z_i(t)=(s_i-m_i(t))/\sigma_i$ and $\psi_i(t)=s_i-\sigma_iR(z_i(t))$, with $\psi_w(t)=t-\sigma_wR\bigl((t-m_w)/\sigma_w\bigr)$. The threshold solves
\[
  D_\theta(t)\;\triangleq\;\psi^\theta_w(t)-\max\Bigl\{0,\ \max_{j\neq w}\psi^\theta_j(t)\Bigr\}\;=\;0,
\]
and $\mathrm{Rev}(\theta)=\E_{P^\star}\bigl[t_0(\bs;\theta)\,\1\{\max_i\psi^\theta_i\ge0\}\bigr]$. Call bidder $j$ \emph{active at $t$} under $\theta$ if $\psi^\theta_j(t)\ge0$; otherwise the clamp removes it. With $\hat\theta$ the second prior,
\[
  \Delta(\bs)=\sum_i\Delta_i(\bs),\qquad
  \Delta_i(\bs)=|\sigma_i-\hat\sigma_i|+\sup_{t\in[0,s_w]}|m_i(t)-\hat m_i(t)|,
\]
the supremum over the bracket $[0,s_w]$ being attained at an endpoint because $m_i$ is affine in $t$; $\|\Delta\|_\infty=\sup_{\bs}\Delta(\bs)<\infty$ by compact support. Finally $\sigma_{\min}=\min_i\min\{\sigma_i,\hat\sigma_i\}$ and $\kappa$ is a bound on the conditional mean-to-scale ratio, $m_i(\bs)/\sigma_i\le\kappa$ at every profile in the support, for both priors, an empirical regularity of the fitted priors ($\kappa=16.93$ on the seed-$0$ prior and held-out bids, \S\ref{sub:cert}), not an assumption of the main text. Since $m_w$ is free of $t$ and every $m_j(t)$ is nondecreasing in $t$ ($\beta_{jw}\ge0$), the bound at the realized profile covers every $t\in[0,s_w]$, which is all the proof uses.

\begin{sremark}[Why the supremum over $t$]
Evaluating the conditional means at the realized profile only is not enough: the competitor's conditional mean enters the threshold equation at $t_0$, not at $s_w$, and when the two priors disagree on a regression weight $\beta$ the discrepancy grows like $|\beta-\hat\beta|\,|s_w-t_0|$, which the realized-profile version does not cover. Taking the supremum over the bracket $[0,s_w]$ closes this and costs nothing computationally.
\end{sremark}

\subsection{Six lemmas}

\begin{slemma}[Mills-ratio calculus and the constant]\label{lem:mills}
$R'(z)=zR(z)-1<0$ and $zR(z)<1$ for all $z$, hence
\[
  \frac{\partial\psi_i}{\partial s_i}=2-z_iR(z_i)>1,\qquad
  \frac{\partial\psi_i}{\partial m_i}=R'(z_i),\qquad
  \frac{\partial\psi_i}{\partial\sigma_i}=-R(z_i)+z_iR'(z_i).
\]
Write $g(z)=R(z)-zR'(z)=R(z)(1-z^2)+z$. Then
\begin{enumerate}
\item[(a)] $z\mapsto|R'(z)|=1-zR(z)$ is strictly decreasing, so $\sup_{z\ge z_0}|R'|=1-z_0R(z_0)$;
\item[(b)] $g'(z)=z\bigl[z-R(z)(1+z^2)\bigr]$ has the sign of $-z$, so $g$ is unimodal with maximum $g(0)=\sqrt{\pi/2}$, and $g(z)\to0^+$ as $z\to+\infty$, $g(z)\to-\infty$ as $z\to-\infty$;
\item[(c)] consequently, with
  \[
    C(z_0)\triangleq\max\bigl\{1-z_0R(z_0),\ |g(z_0)|,\ \sqrt{\pi/2}\bigr\},
  \]
  one has $\sup_{z\ge z_0}\max\{|R'(z)|,|g(z)|\}\le C(z_0)$, with equality whenever $z_0\le0$; $C$ is nonincreasing and $C(z_0)\to\infty$ as $z_0\to-\infty$.
\end{enumerate}
\end{slemma}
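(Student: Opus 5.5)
The plan is to derive every statement from two classical Mills-ratio inequalities and the identity $\phi'(z)=-z\phi(z)$. The inequalities are the upper bound $R(z)<1/z$ and Gordon's lower bound $R(z)>z/(1+z^2)$ for $z>0$. First, differentiating $R=(1-\Phi)/\phi$ gives $R'=-1+R\cdot(z\phi)/\phi=zR-1$. For $zR<1$, the case $z\le0$ is trivial because $R>0$. For $z>0$, write $1-\Phi(z)=\int_z^\infty\phi(t)\,dt<\int_z^\infty (t/z)\phi(t)\,dt=\phi(z)/z$. Hence $R'<0$.

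The three partial derivatives follow from the chain rule on $\psi_i=s_i-\sigma_iR(z_i)$ with $z_i=(s_i-m_i)/\sigma_i$. We have $\partial z_i/\partial s_i=1/\sigma_i$, $\partial z_i/\partial m_i=-1/\sigma_i$ and $\partial z_i/\partial\sigma_i=-z_i/\sigma_i$. These give $1-R'(z_i)=2-z_iR(z_i)>1$, then $R'(z_i)$, and then $-R(z_i)+z_iR'(z_i)$.

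For (a), $|R'|=1-zR$, so it suffices that $zR$ is strictly increasing. Its derivative is $R+zR'=R(1+z^2)-z$. This is positive for $z\le0$ since $R>0$, and for $z>0$ it is Gordon's bound. I would prove Gordon's bound by setting $h(z)=1-\Phi(z)-z\phi(z)/(1+z^2)$. A short computation gives $h'(z)=-2\phi(z)/(1+z^2)^2<0$, and $h(z)\to0$ as $z\to\infty$, so $h>0$.

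For (b), expand $g=R-zR'=R(1-z^2)+z$ and differentiate using $R'=zR-1$. This gives $g'=(zR-1)(1-z^2)-2zR+1=z\bigl[z-R(1+z^2)\bigr]$. The bracket is negative for $z>0$ by Gordon. For $z\le0$ it is at most $z<0$, or negative at $z=0$, because $R>0$. So $g'$ has the sign of $-z$, $g$ is unimodal, and its maximum is $g(0)=R(0)=\sqrt{\pi/2}$. For the limits, as $z\to+\infty$ the expansion $R=z^{-1}-z^{-3}+3z^{-5}+O(z^{-7})$ gives $g=2/z+O(z^{-3})\to0$. Since $g$ decreases on $(0,\infty)$ toward $0$, it is positive there, so the limit is $0^+$. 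As $z\to-\infty$, $R\sim\sqrt{2\pi}e^{z^2/2}$ makes $(1-z^2)R\to-\infty$, which dominates the term $z$.

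For (c), on $[z_0,\infty)$ part (a) gives $|R'|\le1-z_0R(z_0)$, attained at $z_0$. The function $g$ is bounded above by $\sqrt{\pi/2}$. It is increasing on $(-\infty,0]$ and positive on $[0,\infty)$, so its infimum on $[z_0,\infty)$ is at least $\min\{g(z_0),0\}$. Hence $|g|\le\max\{|g(z_0)|,\sqrt{\pi/2}\}$, which proves the bound. When $z_0\le0$ all three quantities are attained: at $z_0$, at $z_0$, and at $0\in[z_0,\infty)$ respectively. This gives equality.

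For the monotonicity of $C$, the term $1-z_0R(z_0)$ is decreasing by (a). On the set where $g(z_0)<0$ (necessarily $z_0<0$), $|g(z_0)|$ is decreasing because $g$ is increasing there. Elsewhere $|g(z_0)|\le\sqrt{\pi/2}$, so it never exceeds the constant entry. Thus $C$ is nonincreasing. Finally, $1-z_0R(z_0)\to\infty$ as $z_0\to-\infty$, so $C(z_0)\to\infty$.

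The only genuinely nontrivial ingredient is Gordon's lower bound. It drives both the strict monotonicity in (a) and the sign of $g'$ in (b), so I would establish it first via the auxiliary function $h$. The rest is bookkeeping of signs and limits.
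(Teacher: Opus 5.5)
Your proposal is correct and follows the paper's proof essentially step for step: the classical Mills bound, then Gordon's bound $R(z)>z/(1+z^2)$ driving both the strict monotonicity in (a) and the sign of $g'$ in (b), the same tail expansions, and the same sup/inf bookkeeping on $[z_0,\infty)$ for (c). The only differences are that you prove the two Mills inequalities (via the auxiliary $h$ with $h'=-2\phi/(1+z^2)^2$) and spell out the monotonicity and divergence of $C$, where the paper cites the inequalities and asserts those facts.
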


\begin{proof}
Differentiating $(1-\Phi)/\phi$ with $\phi'=-z\phi$ gives $R'=(-\phi^2+z\phi(1-\Phi))/\phi^2=zR-1$; $zR<1$ is the classical Mills bound. The three partials follow by the chain rule with $\partial z_i/\partial m_i=-1/\sigma_i$ and $\partial z_i/\partial\sigma_i=-z_i/\sigma_i$.
(a) $\frac{d}{dz}(1-zR)=z-R(1+z^2)<0$ by the Gordon bound $R(z)>z/(1+z^2)$, trivial for $z\le0$.
(b) $g'=R'(1-z^2)-2zR+1=z[z-R(1+z^2)]$, whose bracket is negative by the same bound; $g(0)=R(0)=\sqrt{\pi/2}$. The tails use $R(z)=z^{-1}-z^{-3}+O(z^{-5})$ at $+\infty$, giving $g(z)=2/z+O(z^{-3})$, and $R(z)\sim\sqrt{2\pi}e^{z^2/2}$ at $-\infty$ against $(1-z^2)\to-\infty$.
(c) On $[z_0,\infty)$ the supremum of $g$ is $\sqrt{\pi/2}$ when $z_0\le0$ and $g(z_0)<\sqrt{\pi/2}$ otherwise; its infimum is $\min\{g(z_0),0\}$. Monotonicity of $C$ follows from (a), (b).
\end{proof}

\begin{sremark}[All three terms of $C$ are needed]
Two of the three terms are easy to drop. The absolute value on $g$ is needed because $g<0$ for $z_0<-0.840$ and $|g|$ overtakes $|R'|$ once $z_0<-1.586$. The interior term $\sqrt{\pi/2}$ is needed whenever $z_0\le0$, since $g$ peaks at $z=0$ rather than at the left endpoint. At the residual floor measured at the threshold in our data ($-1.297$) the first term binds and the correction is invisible; at the derived floor $z_0=h^{-1}(\kappa)=-1.908$ (\S\ref{sub:cert}) the second term binds, and the corrected constant $41.6$ replaces the $29.7$ an uncorrected formula would print.
\end{sremark}

\begin{slemma}[Activity pins the residual]\label{lem:active}
Let $h(z)=R(z)-z$; then $h'(z)=zR(z)-2<0$, so $h$ is a strictly decreasing bijection of $\R$. If bidder $i$ is active at $t$ under $\theta$, i.e.\ $\psi^\theta_i(t)\ge0$, then
\[
  h\bigl(z^\theta_i(t)\bigr)\le\frac{m^\theta_i(t)}{\sigma_i},
  \quad\text{i.e.}\quad
  z^\theta_i(t)\ \ge\ h^{-1}\!\Bigl(\frac{m^\theta_i(t)}{\sigma_i}\Bigr)\ \ge\ h^{-1}(\kappa).
\]
In particular $z_0\triangleq\min\{h^{-1}(\kappa),0\}$ lower-bounds the residual of every bidder at every point where it is active, under either prior; this covers the winner at any $t\ge t_0$ (where $\psi_w(t)\ge\psi_w(t_0)=\max\{0,\cdot\}\ge0$) and every active competitor.
\end{slemma}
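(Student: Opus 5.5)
The plan has three parts: establish the properties of $h$, then rearrange the activity condition into an inequality on $h$, then invert $h$ and apply the bound $\kappa$.

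\emph{Properties of $h$.} Lemma~\ref{lem:mills} gives $R'(z)=zR(z)-1$. Hence $h'(z)=R'(z)-1=zR(z)-2$. The Mills bound $zR(z)<1$ then gives $h'(z)<-1<0$ everywhere, so $h$ is strictly decreasing. To see that $h$ is onto $\R$, look at the tails. As $z\to+\infty$, $R(z)\to0$, so $h(z)\sim -z\to-\infty$. As $z\to-\infty$, $R(z)\to+\infty$ and $-z\to+\infty$, so $h(z)\to+\infty$. Continuity and strict monotonicity then make $h$ a bijection of $\R$ with a strictly decreasing inverse. Strictness actually follows directly from $h'<-1$, so no finer asymptotics are needed.

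\emph{The activity condition.} Fix $t$ and the prior $\theta$. By the definition of the standardized residual, $s_i=m_i(t)+\sigma_i z_i(t)$, where $s_i$ is replaced by $t$ when $i=w$. This holds for the winner as well, because $\psi_w(t)=t-\sigma_wR((t-m_w)/\sigma_w)$ has the same form. Substituting into $\psi_i(t)\ge0$ and dividing by $\sigma_i>0$ gives
\[
\frac{m_i(t)}{\sigma_i}+z_i(t)-R(z_i(t))\ge0,
\qquad\text{that is,}\qquad
h\bigl(z_i(t)\bigr)\le \frac{m_i(t)}{\sigma_i}.
\]
Applying the decreasing map $h^{-1}$ reverses the inequality and yields $z_i(t)\ge h^{-1}(m_i(t)/\sigma_i)$.

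\emph{Using $\kappa$ and the coverage claims.} The standing bound $m_i/\sigma_i\le\kappa$, together with the fact that $h^{-1}$ is decreasing, gives $h^{-1}(m_i(t)/\sigma_i)\ge h^{-1}(\kappa)$. The minimum with $0$ only weakens the bound, so it is harmless.

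This step is the main (small) obstacle, because $\kappa$ was stated at the realized profile while the lemma needs it at every $t\in[0,s_w]$. I would argue as in the setting paragraph. First, $m_w$ does not depend on $t$. Second, for a competitor $j$, $m_j(t)$ is nondecreasing in $t$ since $\beta_{jw}\ge0$. Therefore $m_j(t)\le m_j(s_w)$, and the realized-profile bound covers the whole bracket, for both $\theta$ and $\hat\theta$.

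For the winner at any $t\ge t_0$, $\psi_w$ is increasing by Lemma~\ref{lem:mills}. Hence $\psi_w(t)\ge\psi_w(t_0)$, and by the threshold equation $\psi_w(t_0)=\max\{0,\cdot\}\ge0$, so the winner is active there. Active competitors satisfy the activity hypothesis by definition. Both therefore inherit the lower bound $z_0$.
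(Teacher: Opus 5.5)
Your proof is correct and follows the paper's argument. You substitute $s_i=m_i+\sigma_i z_i$ into $\psi_i\ge0$ to get $h(z_i)\le m_i/\sigma_i$, invert $h$ (strictly decreasing by the Mills bound $zR<1$), and apply $m_i/\sigma_i\le\kappa$. Your extension of the $\kappa$ bound from the realized profile to every $t\in[0,s_w]$ uses $\beta_{jw}\ge0$ and the fact that $m_w$ does not depend on $t$, which is exactly the remark the paper makes in its setting paragraph.
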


\begin{proof}
Substituting $s_i=m_i+\sigma_iz_i$ into $\psi_i=s_i-\sigma_iR(z_i)\ge0$ gives $m_i/\sigma_i\ge R(z_i)-z_i=h(z_i)$; invert the decreasing $h$ and use $m_i/\sigma_i\le\kappa$. Monotonicity of $h$ is the Mills bound again: $h'=R'-1=zR-2<0$; $h(z)\to\pm\infty$ as $z\to\mp\infty$.
\end{proof}

This replaces a measured quantity by a derived one: $z_0$ follows from the single scalar $\kappa$. The bound is the companion, from above, of the main text's positive-mean-dominance regularity, which constrains the same ratio from below. It also yields a consistency check reported in \S\ref{sub:cert}. What Lemma~\ref{lem:active} does \emph{not} give is a bound on the residual of a bidder under a prior where it is \emph{inactive}; the next lemma supplies it.

\begin{slemma}[Residual slack along the two-step path]\label{lem:slack}
Fix a bidder $i$, a point $t$, and suppose $z^\theta_i(t)\ge z_0$ with $z_0\le0$ (for instance because $i$ is active at $t$ under $\theta$). Move the parameters from $(m_i,\sigma_i)$ to $(\hat m_i,\hat\sigma_i)$ in two monotone steps, \emph{scale first, then mean}: $(m_i,\sigma_i)\to(m_i,\hat\sigma_i)\to(\hat m_i,\hat\sigma_i)$. Every intermediate residual satisfies
\[
\begin{aligned}
  z\ &\ge\ z_0-\frac{|z_0|\,|\sigma_i-\hat\sigma_i|+|m_i(t)-\hat m_i(t)|}{\sigma_{\min}}\\
   &\ge\ z_1\triangleq z_0-\frac{\max\{1,|z_0|\}\,\|\Delta\|_\infty}{\sigma_{\min}},
\end{aligned}
\]
and consequently, by Lemma~\ref{lem:mills}(c),
\[
  \bigl|\psi^\theta_i(t)-\psi^{\hat\theta}_i(t)\bigr|\le C(z_1)\,\Delta_i(\bs).
\]
The same holds with the roles of $\theta$ and $\hat\theta$ exchanged.
\end{slemma}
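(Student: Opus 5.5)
The plan is to track the standardized residual along the two-step path. Then we integrate the derivative bounds of Lemma~\ref{lem:mills} along each leg, using the fact that the residual never drops below $z_1$. Write $z=(s_i-m)/\sigma$ for the current parameters $(m,\sigma)$, with $s_i$ and $t$ fixed. The starting residual is $z^\theta_i(t)\ge z_0$, where $z_0\le0$.

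\emph{Scale leg.} We hold $m=m_i(t)$ fixed and move $\sigma$ from $\sigma_i$ to $\hat\sigma_i$. The residual is then $z(\sigma)=z^\theta_i(t)\,\sigma_i/\sigma$, which keeps the sign of $z^\theta_i(t)$. If $z^\theta_i(t)\ge0$, every intermediate residual is nonnegative, hence at least $z_0$. If $z_0\le z^\theta_i(t)<0$, then
\[
z(\sigma)-z^\theta_i(t)=z^\theta_i(t)\,\frac{\sigma_i-\sigma}{\sigma}\ \ge\ -\frac{|z_0|\,|\sigma_i-\hat\sigma_i|}{\sigma_{\min}},
\]
because $|\sigma_i-\sigma|\le|\sigma_i-\hat\sigma_i|$ and $\sigma\ge\sigma_{\min}$ along the segment. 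So the scale leg loses at most $|z_0||\sigma_i-\hat\sigma_i|/\sigma_{\min}$.

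\emph{Mean leg.} Now $\sigma=\hat\sigma_i$ is fixed and $m$ moves linearly from $m_i(t)$ to $\hat m_i(t)$. The residual is affine in $m$ and changes by at most $|m_i(t)-\hat m_i(t)|/\hat\sigma_i\le|m_i(t)-\hat m_i(t)|/\sigma_{\min}$. Adding the two losses gives the first displayed inequality. The second follows because $|\sigma_i-\hat\sigma_i|+|m_i(t)-\hat m_i(t)|\le\Delta_i(\bs)\le\|\Delta\|_\infty$. Here $\Delta_i$ takes the supremum over $t\in[0,s_w]$, so the bound holds at every $t$ in the bracket. We also use $|z_0|,1\le\max\{1,|z_0|\}$.

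\emph{Virtual-value difference.} Along the path, Lemma~\ref{lem:mills} gives $\partial\psi/\partial m=R'(z)$ and $\partial\psi/\partial\sigma=-g(z)$, since $-R+zR'=-g$. Every residual on the path satisfies $z\ge z_1$, so Lemma~\ref{lem:mills}(c) bounds both partials in absolute value by $C(z_1)$; the constant $C$ is nonincreasing, and $z_1\le z_0\le0$. Integrating along the two legs gives
\[
|\psi^\theta_i(t)-\psi^{\hat\theta}_i(t)|\le C(z_1)\bigl(|\sigma_i-\hat\sigma_i|+|m_i(t)-\hat m_i(t)|\bigr)\le C(z_1)\Delta_i(\bs).
\]
The symmetric statement follows by relabeling $\theta\leftrightarrow\hat\theta$. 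The hypothesis is then placed on $z^{\hat\theta}_i(t)$, and the same argument applies because $\sigma_{\min}$ and $\|\Delta\|_\infty$ are symmetric in the two priors.

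\emph{Main obstacle.} The delicate step is the scale leg with a negative starting residual. The order ``scale first, then mean'' matters here: with the mean fixed at its original value, rescaling multiplies a residual bounded by $|z_0|$. This gives a loss proportional to $|z_0|$ rather than to the possibly large ratio $|s_i-m|/\sigma$ that the opposite order would produce. The remaining step is to check that the denominator along the segment is at least $\sigma_{\min}$, which holds because $\sigma$ lies between $\sigma_i$ and $\hat\sigma_i$.
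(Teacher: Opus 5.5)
Your proof is correct and follows the paper's argument: the same scale-then-mean path, the same sign split on the starting residual in the scale leg, the same $|m_i(t)-\hat m_i(t)|/\hat\sigma_i$ bound in the mean leg, and the same integration of the Lemma~\ref{lem:mills}(c) bounds on $|R'|$ and $|g|$ along the two legs. Your scale-leg estimate bounds $|\sigma_i-\sigma|\le|\sigma_i-\hat\sigma_i|$ and $\sigma\ge\sigma_{\min}$ separately, which is slightly cleaner than the paper's rewriting via $z\sigma_i/\sigma_{\min}$: that step tacitly needs $\sigma_i-\sigma_{\min}\le|\sigma_i-\hat\sigma_i|$, which fails when $\sigma_{\min}$ is set by another bidder, so it should use $\min\{\sigma_i,\hat\sigma_i\}$ there.
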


\begin{proof}
Write $z=z^\theta_i(t)$, $\delta=s_i-m_i(t)=z\sigma_i$. \emph{Step 1} ($\sigma_\lambda$ from $\sigma_i$ to $\hat\sigma_i$, mean fixed): the residual is $\delta/\sigma_\lambda=z\,\sigma_i/\sigma_\lambda$. If $z\ge0$ it stays nonnegative, hence $\ge z_0$. If $z<0$ it is smallest when $\sigma_\lambda$ is smallest, and $z\sigma_i/\sigma_{\min}=z+z(\sigma_i-\sigma_{\min})/\sigma_{\min}\ge z-|z|\,|\sigma_i-\hat\sigma_i|/\sigma_{\min}\ge z_0-|z_0|\,|\sigma_i-\hat\sigma_i|/\sigma_{\min}$, using $z_0\le z<0$. \emph{Step 2} (mean from $m_i(t)$ to $\hat m_i(t)$ at scale $\hat\sigma_i$): the residual changes monotonically by at most $|m_i(t)-\hat m_i(t)|/\hat\sigma_i\le|m_i(t)-\hat m_i(t)|/\sigma_{\min}$ from its value at the end of step~1, which is $\ge z_0-|z_0||\sigma_i-\hat\sigma_i|/\sigma_{\min}$ by the same argument (that value is either $\ge0$ or the step-1 minimum). Combining gives the first display; bounding $|m_i(t)-\hat m_i(t)|+|\sigma_i-\hat\sigma_i|\le\Delta_i(\bs)\le\|\Delta\|_\infty$ gives $z_1$. Along the path both parameter sensitivities of $\psi_i$ are bounded by $C(z_1)$ (Lemma~\ref{lem:mills}(c) with $z_1\le z_0\le0$), so integrating $d\psi_i=(\partial\psi_i/\partial\sigma)\,d\sigma+(\partial\psi_i/\partial m)\,dm$ along the two steps gives $|\psi^\theta_i(t)-\psi^{\hat\theta}_i(t)|\le C(z_1)\bigl(|\sigma_i-\hat\sigma_i|+|m_i(t)-\hat m_i(t)|\bigr)\le C(z_1)\Delta_i(\bs)$.
\end{proof}

\begin{sremark}[Order of the moves matters]
Moving the mean first and the scale second would pass through the point $(\hat m_i,\sigma_i)$, whose residual $(s_i-\hat m_i)/\sigma_i=z^{\hat\theta}_i\,\hat\sigma_i/\sigma_i$ is a residual under neither prior and, when $\hat\sigma_i>\sigma_i$ and $z^{\hat\theta}_i<0$, lies below both endpoint residuals by a factor rather than an additive slack. Scale first keeps the excursion additive and first-order in $\Delta$. The statement ``every intermediate residual lies between the endpoint values'' is false for the mean-first order.
\end{sremark}

\begin{slemma}[Uniform perturbation of the gap function]\label{lem:gap}
At every $t\in[0,s_w]$ at which the winner $w$ is active under at least one of the two priors,
\[
  \bigl|D_\theta(t)-D_{\hat\theta}(t)\bigr|\;\le\;C(z_1)\,\Delta(\bs)\;\triangleq\;\eta(\bs).
\]
\end{slemma}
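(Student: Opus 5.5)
The plan is to split $D_\theta(t)-D_{\hat\theta}(t)$ into the winner's term and the clamped competitor term:
\[
|D_\theta(t)-D_{\hat\theta}(t)|\le|\psi^\theta_w(t)-\psi^{\hat\theta}_w(t)|+\Bigl|\max\{0,\max_{j\ne w}\psi^\theta_j(t)\}-\max\{0,\max_{j\ne w}\psi^{\hat\theta}_j(t)\}\Bigr|.
\]
Each piece is then bounded with Lemma~\ref{lem:slack}, and the two bounds are summed to give $C(z_1)\bigl(\Delta_w(\bs)+\sum_{j\ne w}\Delta_j(\bs)\bigr)=C(z_1)\Delta(\bs)$.

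\emph{Winner term.} By hypothesis $w$ is active at $t$ under at least one prior; call it $\theta$, since the other case is symmetric by the last sentence of Lemma~\ref{lem:slack}. Lemma~\ref{lem:active} then gives $z^\theta_w(t)\ge z_0$, with $z_0\le0$. Lemma~\ref{lem:slack} applied to $w$ yields $|\psi^\theta_w(t)-\psi^{\hat\theta}_w(t)|\le C(z_1)\Delta_w(\bs)$. Here $m_w$ is free of $t$, so the supremum in $\Delta_w$ is trivially attained.

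\emph{Competitor term.} I would first use that $x\mapsto\max\{0,x\}$ and $\max_j$ are $1$-Lipschitz in sup norm. Writing $a=\max\{0,\max_j\psi^\theta_j\}$ and $b=\max\{0,\max_j\psi^{\hat\theta}_j\}$, it suffices to bound $a-b$ when $a\ge b$; the reverse case is symmetric.
\begin{itemize}
\item If $a=0$, then $a-b=0$.
\item Otherwise $a=\psi^\theta_k(t)\ge0$ for some $k$, so $k$ is active under $\theta$ at $t$. Lemma~\ref{lem:active} then gives $z^\theta_k(t)\ge z_0$, and Lemma~\ref{lem:slack} gives $\psi^{\hat\theta}_k(t)\ge\psi^\theta_k(t)-C(z_1)\Delta_k(\bs)$. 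Hence $a-b\le\psi^\theta_k-\max\{0,\psi^{\hat\theta}_k\}\le C(z_1)\Delta_k(\bs)\le C(z_1)\sum_{j\ne w}\Delta_j(\bs)$.
\end{itemize}
This is where the clamp matters. Only the maximizing competitor under the larger side needs a residual bound, and that competitor is automatically active, so inactive bidders with arbitrarily negative residuals, where $C$ blows up, never enter.

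The main obstacle is ensuring Lemma~\ref{lem:slack} is invoked only for bidders active under the prior used as the starting point. For the winner this is the reason for the hypothesis ``active under at least one prior''. For competitors it is the case split above: the comparison always starts from the prior attaining the larger clamped maximum, and this is exactly why the direction matters.

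A second point to check is that Lemma~\ref{lem:active} requires the bound $m_i(t)/\sigma_i\le\kappa$ at every $t\in[0,s_w]$, not only at the realized profile. This follows from the setting's remark that $m_j(t)$ is nondecreasing in $t$ and is therefore bounded by its value at $t=s_w$. Likewise, $\Delta_i$ already takes the supremum over $t$, so the bound holds uniformly on the bracket.
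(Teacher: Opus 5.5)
Your proof is correct and follows the paper's argument: the same split into the winner's term and the clamped competitor term, Lemma~\ref{lem:active} plus Lemma~\ref{lem:slack} for the winner, and the bound on the clamped maximum through the maximizer under whichever prior has the larger clamped value, so that only active competitors ever need a residual floor. Your opening remark that the maximum is $1$-Lipschitz is not needed: the case split you actually carry out does the work.
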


\begin{proof}
Write $a_j=\psi^\theta_j(t)$, $\hat a_j=\psi^{\hat\theta}_j(t)$ for $j\ne w$, $A=\max_ja_j$, $\hat A=\max_j\hat a_j$, so $D_\theta-D_{\hat\theta}=(a_w-\hat a_w)-(\max\{0,A\}-\max\{0,\hat A\})$.

\emph{Winner.} $w$ is active at $t$ under one prior, so Lemma~\ref{lem:active} pins its residual there and Lemma~\ref{lem:slack} (started from that prior) gives $|a_w-\hat a_w|\le C(z_1)\Delta_w(\bs)$.

\emph{Only active competitors enter.} If $A\le0$ then $\max\{0,A\}-\max\{0,\hat A\}\le0$. If $A>0$ with argmax $j^\star$, then $\max\{0,A\}=a_{j^\star}$ and $\max\{0,\hat A\}\ge\hat a_{j^\star}$, so $\max\{0,A\}-\max\{0,\hat A\}\le a_{j^\star}-\hat a_{j^\star}$, where $j^\star$ is active at $t$ under $\theta$; Lemmas~\ref{lem:active} and~\ref{lem:slack} give $a_{j^\star}-\hat a_{j^\star}\le C(z_1)\Delta_{j^\star}(\bs)$. Symmetrically, $\max\{0,\hat A\}-\max\{0,A\}\le\hat a_{\hat j}-a_{\hat j}\le C(z_1)\Delta_{\hat j}(\bs)$ with $\hat j$ the $\hat\theta$-argmax, active under $\hat\theta$, or $\le0$ if $\hat A\le0$. Bidders inactive under both priors never appear.

Adding the winner's term to whichever competitor term applies and using $\Delta_w+\Delta_j\le\Delta$ gives the claim.
\end{proof}

\begin{sremark}[No combinatorial factor]
A factor $2$ ``for winner and active loser'', or $1+\#\{\text{active}\}$ at general $n$, would double count: $\Delta(\bs)$ already sums over bidders, and at any $t$ the gap function involves the winner plus whichever single competitor attains the inner maximum. The constant in \S\ref{sub:cert} is correspondingly $13.6$, not $21.3$.
\end{sremark}

\begin{slemma}[Threshold stability, sandwich form]\label{lem:sandwich}
$D_\theta$ is continuous and strictly increasing, and $D_\theta(t)-D_\theta(t_0)\ge t-t_0$ for $t\ge t_0$. Consequently, if $|D_\theta(t)-D_{\hat\theta}(t)|\le\eta$ at $t=t_0(\theta)+\eta$ and at $t=t_0(\hat\theta)+\eta$, then $|t_0(\theta)-t_0(\hat\theta)|\le\eta$.
\end{slemma}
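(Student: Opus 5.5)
The plan has two parts. First I show that $D_\theta$ rises at least at unit rate. Then I use that rate to sandwich each threshold against the other prior's gap function.

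\textbf{Unit-rate growth.} Write $D_\theta(t)=\psi^\theta_w(t)-M_\theta(t)$ with $M_\theta(t)=\max\{0,\max_{j\ne w}\psi^\theta_j(t)\}$.

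For the first term, $\psi^\theta_w(t)=t-\sigma_wR((t-m_w)/\sigma_w)$, and $m_w$ does not depend on $t$. By Lemma~\ref{lem:mills}, its derivative is $2-zR(z)>1$. So $\psi^\theta_w(t)-\psi^\theta_w(t_0)\ge t-t_0$ for $t\ge t_0$.

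For the second term, each competitor enters through $m_j(t)=\mu_j+\beta_{jw}(t-\mu_w)+\dots$, which is affine and nondecreasing in $t$ because $\beta_{jw}\ge0$. Since $R$ is decreasing, $\psi^\theta_j(t)=s_j-\sigma_jR((s_j-m_j(t))/\sigma_j)$ is nonincreasing in $t$; this is Lemma~\ref{lem:cross} restated on the bracket. A pointwise maximum of continuous nonincreasing functions, clamped at $0$, is again continuous and nonincreasing. Hence $-M_\theta$ is nondecreasing, and $D_\theta(t)-D_\theta(t_0)\ge t-t_0$. Continuity and strict monotonicity follow from the same facts, since $R$ is smooth.

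\textbf{The sandwich.} Set $t_0=t_0(\theta)$ and $\hat t_0=t_0(\hat\theta)$, so $D_\theta(t_0)=0=D_{\hat\theta}(\hat t_0)$ by Lemma~\ref{thm:threshold}. I show $\hat t_0\le t_0+\eta$; the other direction is symmetric.

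At $t=t_0+\eta$, unit growth gives $D_\theta(t_0+\eta)\ge\eta$. The hypothesis $|D_\theta-D_{\hat\theta}|\le\eta$ at this point then gives $D_{\hat\theta}(t_0+\eta)\ge0=D_{\hat\theta}(\hat t_0)$. Since $D_{\hat\theta}$ is strictly increasing, $\hat t_0\le t_0+\eta$.

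For the reverse inequality I evaluate at $\hat t_0+\eta$. There unit growth of $D_{\hat\theta}$ gives $D_{\hat\theta}(\hat t_0+\eta)\ge\eta$, hence $D_\theta(\hat t_0+\eta)\ge0$, hence $t_0\le\hat t_0+\eta$. This uses that the unit-growth property holds for every prior in the cone, so it holds for $\hat\theta$ as well.

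\textbf{Main obstacle: the bracket.} The argument needs $D$ at points $t_0+\eta$ that may exceed $s_w$, or lie where the bracket is not defined. I would handle this in two ways. First, note that $D_\theta$ is defined, continuous and monotone for every $t\in\R$, not just on $[0,s_w]$. Second, observe that the conclusion is trivial if $t_0+\eta>s_w$, since both thresholds lie in $(0,s_w]$, which reduces everything to the bracket.

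I also expect to remark that the winner must be provisionally the same under both priors for $\hat t_0$ to make sense. Disagreement about the winner is deferred to Lemma~\ref{lem:boundary}.
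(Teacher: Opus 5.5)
Your proposal is correct and follows the paper's proof essentially step for step. You get unit-rate growth of $D_\theta$ from $\partial\psi_w/\partial t=2-z_wR(z_w)>1$, with competitors' virtual values nonincreasing in $t$ and this property preserved by the maximum and the clamp, and you then apply the sandwich at $t_0+\eta$ and $\hat t_0+\eta$. Your treatment of the case $t_0+\eta>s_w$ via $\hat t_0\le s_w$, and your deferral of winner disagreement to Lemma~\ref{lem:boundary}, match how the paper uses this lemma inside Proposition~\ref{prop:supp}.
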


\begin{proof}
$\partial\psi_w/\partial t=2-z_wR(z_w)>1$ by Lemma~\ref{lem:mills}, while each $t\mapsto\psi_j(t)$ is nonincreasing because $\partial\psi_j/\partial t=\beta_{jw}R'(z_j)\le0$ with $\beta_{jw}\ge0$ and $R'<0$; the clamp and the maximum preserve this. So $D_\theta$ rises at rate at least one, which is the displayed inequality. Then $D_{\hat\theta}(t_0+\eta)\ge D_\theta(t_0+\eta)-\eta\ge\eta-\eta=0$, and since $D_{\hat\theta}$ is strictly increasing with unique zero $\hat t_0$, $\hat t_0\le t_0+\eta$. Exchanging the roles of the priors at $t=\hat t_0+\eta$ gives $t_0\le\hat t_0+\eta$.
\end{proof}

\begin{sremark}[What this buys]
This is the step that replaces the implicit function theorem. It requires no differentiability of $\max_j$ (which fails wherever the inner argmax switches) and no interpolation path inside the cone. The latter matters: linear interpolation of $(\beta_i,\sigma_i)$ violates the symmetry $Q_{ij}=Q_{ji}$, so the interpolated ``prior'' need not be admissible. The argument uses only $\sigma>0$ and $\beta\ge0$, which the coordinatewise moves of Lemma~\ref{lem:slack} preserve, and it is indifferent to $n$. Note also that Lemma~\ref{lem:gap} is invoked only at $t_0+\eta$ and $\hat t_0+\eta$, points at which the winner is active under the corresponding prior, so the winner's residual is never needed off its own active region.
\end{sremark}

\begin{slemma}[Boundary term]\label{lem:boundary}
Let $\eta_{\max}=C(z_1)\|\Delta\|_\infty$. The sale sets and winner sets under $\theta$ and $\hat\theta$ differ only on
\[
  \bigl\{|\max_i\psi^\theta_i|\le\eta_{\max}\bigr\}\cup
  \bigl\{|\psi^\theta_{(1)}-\psi^\theta_{(2)}|\le\eta_{\max}\bigr\}
\]
(virtual values at the realized profile). If $f_s,f_w$ bound the densities of $\max_i\psi^\theta_i$ and of $\psi^\theta_{(1)}-\psi^\theta_{(2)}$ uniformly on $[-\eta_{\max},\eta_{\max}]$ (a bound at $0$ alone does not control a strip of finite width), and $\bar t_s,\bar t_w$ bound the conditional mean of the larger of the two payments on the respective strip, the disagreement set contributes at most $2(\bar t_sf_s+\bar t_wf_w)\,C(z_1)\|\Delta\|_\infty$ to $|\mathrm{Rev}(\theta)-\mathrm{Rev}(\hat\theta)|$.
\end{slemma}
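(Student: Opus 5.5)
The plan has two parts. First I show that a disagreement between the two priors forces the realized profile into one of the two strips. Then I integrate the payment discrepancy over those strips, using the density bounds. For the first part I rely on the pointwise estimate of Lemma~\ref{lem:slack}: at the realized profile, every bidder that is active under at least one prior satisfies $|\psi^\theta_i-\psi^{\hat\theta}_i|\le C(z_1)\Delta_i(\bs)\le\eta_{\max}$. Inactive-under-both bidders never matter, because the clamp removes them from both the sale decision and the winner decision.

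\emph{Sale disagreement.} Suppose the item is sold under exactly one prior, say $\max_i\psi^{\hat\theta}_i\ge0>\max_i\psi^\theta_i$. Take $k$ to be the $\hat\theta$-argmax. Then $k$ is active under $\hat\theta$, so the slack bound applies with the roles exchanged, and $\psi^\theta_k\ge\psi^{\hat\theta}_k-\eta_{\max}\ge-\eta_{\max}$. Hence $\max_i\psi^\theta_i\in[-\eta_{\max},0)$. The case $\max_i\psi^\theta_i\ge0>\max_i\psi^{\hat\theta}_i$ is handled the same way: the $\theta$-argmax is active under $\theta$, so its $\hat\theta$ value is at least $\max_i\psi^\theta_i-\eta_{\max}$, which forces $\max_i\psi^\theta_i\le\eta_{\max}$.

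\emph{Winner disagreement with both priors selling.} Suppose the $\theta$-winner is $a$ and the $\hat\theta$-winner is $b\neq a$. Both are active under their own prior. The delicate point is that $b$ might be inactive under $\theta$, and I would handle this by the same slack route. We have $\psi^{\hat\theta}_b\ge\psi^{\hat\theta}_a$, and $a$ is active under $\theta$, so applying Lemma~\ref{lem:slack} to $a$ and to $b$ gives
\[
\psi^\theta_a-\psi^\theta_b\le(\psi^\theta_a-\psi^{\hat\theta}_a)+(\psi^{\hat\theta}_b-\psi^\theta_b)\le C(z_1)(\Delta_a+\Delta_b)\le\eta_{\max}.
\]
Since $\psi^\theta_b\le\psi^\theta_{(2)}$, this puts the profile in $\{|\psi^\theta_{(1)}-\psi^\theta_{(2)}|\le\eta_{\max}\}$. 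If $b$ is inactive under $\theta$ we have $\psi^\theta_b<0$, and the same inequality still bounds $\psi^\theta_{(1)}-\psi^\theta_{(2)}$ from above, because $\psi^\theta_{(2)}\ge\psi^\theta_b$ whenever $n\ge2$. It is important here that $\Delta_a+\Delta_b\le\Delta\le\|\Delta\|_\infty$, so no factor of two enters at this stage.

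\emph{Contribution to revenue.} On the disagreement set, the revenue difference at a profile is at most the larger of the two payments. The contribution is therefore at most $\E[\max(t_0,\hat t_0)\1_{\text{strip}}]$, which I write as the conditional mean of that payment on the strip times the strip's probability. Each strip has width $2\eta_{\max}$, and the densities of $\max_i\psi^\theta_i$ and $\psi^\theta_{(1)}-\psi^\theta_{(2)}$ are bounded uniformly by $f_s$ and $f_w$. The strip probabilities are thus at most $2f_s\eta_{\max}$ and $2f_w\eta_{\max}$. A union bound, together with the definitions of $\bar t_s$ and $\bar t_w$, gives the total $2(\bar t_sf_s+\bar t_wf_w)C(z_1)\|\Delta\|_\infty$.

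The step I expect to be the main obstacle is making sure the slack lemma is applied only to a bidder that is active under the prior from which the path starts. This matters most in the winner-switch case, where one of the two winners can be inactive under the other prior, and the inequalities must be ordered so that each is started from the prior where that bidder is active.
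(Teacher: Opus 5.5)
Your proposal is correct and follows essentially the same route as the paper. The paper argues the contrapositive: outside the strips, both priors make the same sale decision and the $\theta$-winner beats every $\hat\theta$-active rival. It starts Lemma~\ref{lem:slack} from whichever prior the bidder is active under, exactly as you do, and then bounds each strip's contribution by $2f\eta_{\max}$ times the conditional mean payment; your direct form is the same argument read in the other direction.
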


\begin{proof}
If $\max_i\psi^\theta_i>\eta_{\max}$, its argmax $i$ is active under $\theta$, so Lemmas~\ref{lem:active}--\ref{lem:slack} give $\psi^{\hat\theta}_i\ge\psi^\theta_i-C(z_1)\Delta_i>0$: both priors sell. If $\max_i\psi^\theta_i<-\eta_{\max}$ and some $j$ were active under $\hat\theta$, the same lemmas started from $\hat\theta$ would give $\psi^\theta_j\ge\psi^{\hat\theta}_j-C(z_1)\Delta_j\ge-\eta_{\max}$, a contradiction; so neither sells. For the winner, suppose both sell and $\psi^\theta_{(1)}-\psi^\theta_{(2)}>\eta_{\max}$ with $\theta$-winner $i$: then $\psi^{\hat\theta}_i\ge\psi^\theta_i-C(z_1)\Delta_i$, and any $j\ne i$ active under $\hat\theta$ has $\psi^{\hat\theta}_j\le\psi^\theta_j+C(z_1)\Delta_j\le\psi^\theta_{(2)}+C(z_1)\Delta_j$, so $\psi^{\hat\theta}_i-\psi^{\hat\theta}_j\ge(\psi^\theta_{(1)}-\psi^\theta_{(2)})-C(z_1)(\Delta_i+\Delta_j)>0$; bidders inactive under $\hat\theta$ cannot win. The probability of each strip is at most $2f\eta_{\max}$, and on it the revenue difference is at most the larger payment, whose conditional mean is bounded by $\bar t$; payments never exceed the winner's own report (Theorem~\ref{supp:thm:threshold}), which is bounded on the support.
\end{proof}

\begin{sremark}[Why this term is $L^\infty$, not $L^1$]
Applying the strip bound with the \emph{random} radius $\eta(\bs)$ and then taking expectations would be inadmissible: the density bound $\Pr(\text{flip})\le2f\delta$ requires a deterministic $\delta$, and $\|\Delta\|_\infty=\infty$ under unbounded support whenever $\beta_i\neq\hat\beta_i$, because $m_i(\cdot)$ is affine in the opponent's report. Compact support, which is the deployed regime since bids are bounded in cents, repairs the step. Under unbounded support, truncating at radius $M$ and optimizing $M\asymp\sqrt{2\log(1/\varepsilon)}$ with $\varepsilon=\E[\Delta]$ gives $O\bigl(\varepsilon\sqrt{\log(1/\varepsilon)}\bigr)$: still near-linear, at the cost of one logarithmic factor.
\end{sremark}

\subsection{The proposition}

\begin{sproposition}[Revenue stability; arbitrary $n$]\label{prop:supp}
Let $\theta,\hat\theta$ lie in the cone with conditional mean-to-scale ratios bounded by $\kappa$ on the support, let $P^\star$ be supported on $[0,\bar s]^n$ with bounded density, and let $z_0=\min\{h^{-1}(\kappa),0\}$, $z_1=z_0-\max\{1,|z_0|\}\|\Delta\|_\infty/\sigma_{\min}$. Then
\[
  \bigl|\mathrm{Rev}(\theta)-\mathrm{Rev}(\hat\theta)\bigr|
  \le C(z_1)\,\E_{P^\star}\bigl[\Delta(\bs)\bigr]
   +2\bigl(\bar t_sf_s+\bar t_wf_w\bigr)C(z_1)\,\|\Delta\|_\infty .
\]
As $\|\Delta\|_\infty\to0$, $z_1\to z_0$ and $C(z_1)\to C(z_0)$: revenue is locally Lipschitz in the conditional coordinates with sharp constant $C(h^{-1}(\kappa))$.
\end{sproposition}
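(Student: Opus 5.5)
The plan is to split the revenue difference profile by profile, according to whether the two priors agree on the sale decision and the winner, and then integrate against $P^\star$. Write $\mathrm{Rev}(\theta)-\mathrm{Rev}(\hat\theta)=\E_{P^\star}[\,t_0(\bs;\theta)\1_\theta(\bs)-t_0(\bs;\hat\theta)\1_{\hat\theta}(\bs)\,]$, where $\1_\theta$ is the sale indicator. Let $G$ be the set of profiles on which both priors sell to the same winner $w$, or neither sells. Let $B$ be the complement, on which they disagree about whether to sell or about who wins. On $G$ the integrand is either zero or $t_0(\bs;\theta)-t_0(\bs;\hat\theta)$ for a common winner. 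On $B$ I would use Lemma~\ref{lem:boundary} directly.

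\textbf{Agreement set.} Fix $\bs\in G$ with common winner $w$. I would apply Lemma~\ref{lem:sandwich} with $\eta=\eta(\bs)=C(z_1)\Delta(\bs)$. The lemma requires the gap bound $|D_\theta(t)-D_{\hat\theta}(t)|\le\eta$ at the two points $t=t_0(\theta)+\eta$ and $t=t_0(\hat\theta)+\eta$. Lemma~\ref{lem:gap} supplies it, provided the winner is active under at least one prior at those points. That proviso holds because $\psi_w$ is increasing in $t$ and $\psi^\theta_w(t_0(\theta))\ge0$, so $w$ is active under $\theta$ at every $t\ge t_0(\theta)$; the symmetric argument covers $\hat\theta$.

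One technicality needs care. The points $t_0+\eta$ may exceed $s_w$, which is the bracket on which $\Delta_i$ takes its supremum. I would handle this by capping at $s_w$. If $t_0+\eta>s_w$, then the sandwich inequality evaluated at $s_w$ still gives $\hat t_0\le s_w\le t_0+\eta$, because both thresholds lie in $(0,s_w]$ by Theorem~\ref{supp:thm:threshold}. With that in place, $|t_0(\theta)-t_0(\hat\theta)|\le C(z_1)\Delta(\bs)$ pointwise, and integrating gives the first term $C(z_1)\E_{P^\star}[\Delta]$.

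\textbf{Disagreement set.} The two sale sets and winner sets differ only on the two strips of Lemma~\ref{lem:boundary}. That lemma bounds the contribution of $B$ by $2(\bar t_sf_s+\bar t_wf_w)C(z_1)\|\Delta\|_\infty$. Adding the two contributions gives the displayed inequality.

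\textbf{Uniform constant and local limit.} The same constant $C(z_1)$ must work at every profile and for every bidder the argument touches. This follows because Lemma~\ref{lem:active} pins the residual of every active bidder at or above $z_0$ under either prior, and Lemma~\ref{lem:slack} then keeps the residual above $z_1$ along the scale-then-mean path; $C$ is nonincreasing by Lemma~\ref{lem:mills}(c). For the local statement, as $\|\Delta\|_\infty\to0$ we have $z_1\to z_0$, and $C$ is continuous, so $C(z_1)\to C(z_0)=C(h^{-1}(\kappa))$. The boundary term is $O(\|\Delta\|_\infty)$, which gives the local Lipschitz claim.

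\textbf{Main obstacle.} I expect the hardest step to be checking that Lemma~\ref{lem:gap} is only ever invoked at points where the winner is active, and that those points lie in the bracket where $\Delta$ is defined. The capping argument at $s_w$ is what I would try first. The claim that the constant $C(h^{-1}(\kappa))$ is sharp, rather than merely an upper bound, would require exhibiting a profile near the residual floor that attains it. This plan does not establish that.
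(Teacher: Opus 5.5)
Your proposal is correct and follows the paper's proof essentially step for step. Like the paper, you split on the event that the sale decision and winner agree, and apply Lemma~\ref{lem:gap} at $t_0+\eta$ and $\hat t_0+\eta$ (where the winner is active under the respective prior) together with Lemma~\ref{lem:sandwich}. You dispose of the case $t_0+\eta\ge s_w$ through the bracket $\hat t_0\le s_w$, bound the complement by Lemma~\ref{lem:boundary}, and get the limit from continuity of $C$ alone, so, as you note, the paper also does not separately prove that $C(h^{-1}(\kappa))$ is sharp.
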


\begin{proof}
Split on the event $A$ that the sale decision and the winner identity agree under the two priors. On $A$, the winner $w$ is common, it is active under $\theta$ at $t_0+\eta$ and under $\hat\theta$ at $\hat t_0+\eta$, so Lemma~\ref{lem:gap} applies at both points (when $t_0+\eta\ge s_w$ the inequality $\hat t_0\le t_0+\eta$ is immediate from $\hat t_0\le s_w$, Theorem~\ref{supp:thm:threshold}, and likewise with the priors exchanged) and Lemma~\ref{lem:sandwich} gives $|t_0(\theta)-t_0(\hat\theta)|\le C(z_1)\Delta(\bs)$ pointwise; taking expectations yields the first term. On $A^c$, Lemma~\ref{lem:boundary} yields the second. Continuity of $C$ gives the limit statement.
\end{proof}

\begin{sremark}[What is derived and what is measured]
Every ingredient of the constant is now read off the prior class and the perturbation size: $\kappa$ and $\sigma_{\min}$ from the priors, $\|\Delta\|_\infty$ from the pair. Nothing about the realized residuals is assumed. The price of this is the slack $z_0\to z_1$, which is first order in $\|\Delta\|_\infty/\sigma_{\min}$ and vanishes in the Lipschitz limit; quoting $C(z_0)$ for a finite perturbation would be correct only for bidders active under \emph{both} priors and for the winner exactly at its threshold. The certification below reports both $C(z_0)$, the limit constant, and the realized residual floor.
\end{sremark}

\subsection{Why not $\|\hat Q-Q\|$}\label{supp:stab-metric}
Locally around any fixed SPD prior, revenue \emph{is} Lipschitz in the precision matrix: $Q\mapsto Q_{ii}^{-1/2}$ and $Q\mapsto-Q_{ij}/Q_{ii}$ are smooth on the SPD interior, and composing them with Proposition~\ref{prop:supp} gives a constant of order $Q_{ii}^{-3/2}$. What fails is a \emph{useful} statement $\mathrm{Rev}(\hat Q)\ge\mathrm{Rev}(Q)-L\|\hat Q-Q\|$ with one $L$ for the class: the constant explodes as $Q_{ii}\to0$, and at the fitted priors it is already hopeless. The seed-$0$ spectral prior has $\min_i\sqrt{Q_{ii}}=0.032$ (median $0.064$): conditional standard deviations reach $\approx31$ cents, and $\sigma=Q_{ii}^{-1/2}$ is violently nonlinear there. Perturbing $Q$ additively in operator norm and replaying on the held-out bids (script \texttt{regret\_probe2.py}), the smallest perturbation already moves revenue by $9$ cents out of $37.9$; the empirical ratio $|\Delta\mathrm{Rev}|/\|\Delta Q\|$ is $220$--$650$ (far below the local constant, as it must be, and far above anything a platform could use), and neither a linear nor a quadratic fit has positive $R^2$ at the perturbation sizes tried (a large finite ratio is evidence of bad conditioning, not of non-Lipschitzness). Restricting to a sub-cone $Q_{ii}\ge c$ does not repair it ($c=0.1$: ratio $223$). The conditional coordinates are the right ruler, since they are exactly what the precision-conditioning lemma reads off row $i$ of $Q$, and for Gaussians their squared error is, to leading order, the Fisher metric, which suggests $\E\Delta\lesssim\sqrt{\mathrm{KL}}$ and so stability in $\sqrt{\mathrm{KL}}$: the quantity maximum likelihood controls, and the one under which $\mathrm{MTP}_2$-constrained estimation is unusually well behaved (Lauritzen, Uhler and Zwiernik, 2019). We record this as a heuristic bridge rather than a theorem. What is missing is a KL \emph{rate} for the constrained MLE, of which the cited work gives existence under very few samples; supplying it would close the chain sample size $\to$ KL error $\to$ revenue loss and make the class learnable in the revenue sense, not only in the sense that its MLE is convex.

\subsection{Numerical certification}\label{sub:cert}
{\sloppy Fitted seed-$0$ spectral prior, $n{=}2$ held-out bids ($180$ queries, $3{,}600$ auctions, $3{,}241$ sold); scripts \texttt{prop\_verify.py}, \texttt{prop\_verify2.py}, \texttt{regret\_probe3.py}, and the lemma-level diagnostics \texttt{mills\_checks.py} (Lemma~\ref{lem:mills}), \texttt{check\_z0.py} (Lemma~\ref{lem:active}, elementwise) and \texttt{check\_stability.py} (Lemmas~\ref{lem:active} and \ref{lem:gap}--\ref{lem:boundary} on a pair of priors), driven by \texttt{run\_certification.py}; all in \texttt{code/n4}, numbers read from their JSON outputs and from \texttt{prop\_verify2.json} and \texttt{regret\_probe3\_results.json}. The perturbed prior follows the relative scheme of \texttt{regret\_probe3.py} (diagonal $\times e^{\varepsilon u}$, off-diagonal $\times(1+\varepsilon u)$ clipped to the cone, means shifted by $\varepsilon u\sigma$); the row quoted uses $\varepsilon=5\times10^{-4}$, the smallest scale, as the Lipschitz-limit regime.\par}

\begin{table*}[t]\centering\footnotesize\setlength{\tabcolsep}{4pt}
\resizebox{\textwidth}{!}{
\begin{tabular}{@{}lll@{}}
\toprule
Lemma / quantity & Check & Result \\
\midrule
\ref{lem:mills}: $R'=zR-1$ & finite differences, $z\in[-3,6]$ & rel.\ error $3\times10^{-8}$ \\
\ref{lem:mills}: $2-zR>1$ & grid $z\in[-6,6]$ & min $1.026$; $\to1$ at $+\infty$ \\
\ref{lem:mills}(a): $|R'|$ decreasing & grid & confirmed \\
\ref{lem:mills}(b): $\max g=\sqrt{\pi/2}$ & grid & $1.2533$ at $z=0$ \\
\ref{lem:active}: $h(z_i)\le m_i/\sigma_i$ & all active bidders, sold auctions & $0/4{,}339$ violations \\
$\kappa=\max_i m_i/\sigma_i$ & test set, fitted prior (median $3.25$) & $16.93$ ($\ge6.54$ \checkmark) \\
$z_0=h^{-1}(\kappa)$, derived & closed form & $-1.908$ \\
$C(z_0)$, derived limit constant & closed form ($|g|$ binds) & $\mathbf{41.6}$ \\
$\sigma_{\min}$; $\|\Delta\|_\infty$ & cone pair, $\varepsilon=5\times10^{-4}$ & $2.89$ cents; $0.40$ cents \\
$z_1$; $C(z_1)$ & same pair & $-2.172$; $99.3$ \\
\ref{lem:gap}: $|D_\theta-D_{\hat\theta}|\le C(z_1)\Delta$ at $t_0{+}\eta,\hat t_0{+}\eta$ & same pair & $0/6{,}476$ violations \\
\ref{lem:sandwich}: $|t_0-\hat t_0|\le\eta$ & same pair, agreement event & $0/3{,}238$ violations \\
\ref{lem:boundary}: flips inside the strips & same pair & $0/3{,}241$ violations \\
$|t_0-\hat t_0|\le C(z_0)\Delta$, \emph{limit} constant & same pair (diagnostic) & $0/3{,}238$ violations \\
measured $|\partial t_0/\partial m|$ & median / max & $0.70$ / $0.90$ \\
measured $|\partial t_0/\partial\sigma|$ & median / max & $0.28$ / $0.75$ \\
active competitors & share of sold auctions & $65.6\%$ (inactive $34.4\%$) \\
residual floor at $t_0$ (winner; active competitor) & min, sold auctions & $-1.291$; $-1.297$ \\
residual floor at the realized profile (active competitor) & min, sold auctions & $-1.834$ \\
$C$ at the threshold floor & closed form at $-1.297$ & $7.80$ \\
\ref{lem:boundary}: $f_s,\bar t_s$ & $|\max\psi|<1$ cent & $0.0081$/cent, $14.8$ cents \\
\ref{lem:boundary}: $f_w,\bar t_w$ & $|\psi_1-\psi_2|<1$ cent & $0.0057$/cent, $43.6$ cents \\
\midrule
aggregate $C\bigl(1+2(\bar t_sf_s+\bar t_wf_w)\bigr)$, derived ($C=41.6$) & uniform bound & $\mathbf{72.2}$ \\
aggregate at the $t_0$ floor ($C=7.80$) & lower bound on the derived one & $13.6$ \\
per-auction $|\partial D/\partial m|$ & mean / median / max & $2.38$ / $2.35$ / $8.83$ \\
empirical $|\Delta\mathrm{Rev}|/$mean coordinate error & cone perturbations, $8$ scales (\texttt{regret\_probe3.py}) & $0.24$ (linear fit $R^2$ $0.93$) \\
empirical vs.\ $\mathrm{KL}$ & same perturbations & $1.65\,\mathrm{KL}$ ($R^2$ $0.94$); $2.36\sqrt{\mathrm{KL}}$ ($0.91$) \\
\bottomrule
\end{tabular}}
\end{table*}

\paragraph{Reading.}
The derived uniform constant ($72.2$) exceeds the measured average slope ($0.24$) by a factor of about $300$, an order-of-magnitude comparison only, since the constant multiplies $\E\Delta$ plus an $\|\Delta\|_\infty$ boundary term while the slope is per unit of mean coordinate error, as a worst-case-over-auctions, worst-direction bound must; the constant one would quote from the residual floor measured at the threshold ($13.6$) is a factor $5$ smaller, and Lemma~\ref{lem:active} explains why it is only a lower bound, since the floor is attained by a bidder at $\psi\approx0$ with $m/\sigma=6.54$, whereas the class contains conditional means up to $16.9$ scales. The analytic sensitivities average $2.4$ against a measured $\partial D/\partial t\approx3.4$, so the typical threshold moves by $\approx0.7$ cent per cent of conditional-mean error: sub-unit, the visible work of $\partial\psi/\partial s>1$. Dropping the active-competitor refinement of Lemma~\ref{lem:gap} admits residuals down to $z=-3.46$, where $C=1.09\times10^4$ and the aggregate constant is $1.9\times10^4$, so the clamp in the threshold equation buys more than two orders of magnitude and is what makes the bound informative at all. At finite perturbation size the slack $z_0\to z_1$ is real: with $\sigma_{\min}=2.9$ cents, a $0.4$-cent perturbation already costs a factor $2.4$ in the constant ($99.3$ vs.\ $41.6$), and by $\|\Delta\|_\infty\approx1.5$ cents the guaranteed constant exceeds $10^3$, while the limit constant $C(z_0)$ still bounds every realized threshold move ($0/3{,}238$ violations at every scale tested). The gap between the two is the price of a proof that assumes nothing about where the perturbed residuals land.

\paragraph{A support bound for $\kappa$.}
The tabulated $\kappa=16.93$ is the maximum of $m_i/\sigma_i$ over realized test profiles. Because $m_i(\bs)=\mu_i+\sum_j\beta_{ij}(s_j-\mu_j)$ has $\beta_{ij}=-Q_{ij}/Q_{ii}\ge0$, its supremum over the box $[0,\bar s]^n$ is attained at the vertex $s_j=\bar s$ for all $j$, so a bound valid on the whole support, not a finite-sample maximum, is computable in closed form from the fitted prior: $\kappa_{\mathrm{box}}=40.0$ on the seed-$0$ test contexts ($\bar s=200$), giving $z_0=h^{-1}(\kappa_{\mathrm{box}})=-2.33$ and $C(z_0)=169$ (\texttt{kappa\_box.py}, \texttt{kappa\_box\_results.json}). The proposition may be read with either constant: the realized one describes the data, the vertex one is a certificate.

\paragraph{Consistency check on the residual floor.}
Lemma~\ref{lem:active} turns the measured residual floor into a test of the fitted priors: a residual of $-1.297$ among active bidders is possible only if $\max_im_i/\sigma_i\ge h(-1.297)=6.54$ somewhere in the test set, so $\kappa\ge6.54$ and the derived constant $C(h^{-1}(\kappa))$ is at least the $7.80$ quoted above, with equality iff the floor is attained by a bidder exactly at $\psi=0$. The realized $\kappa=16.93$ passes this check with room to spare, and the elementwise version of Lemma~\ref{lem:active} holds at every one of the $4{,}339$ active-bidder records (minimum slack $-3.6\times10^{-15}$, i.e.\ equality at the binding records up to rounding). The two easy ways to fail the check (evaluating the winner's residual at $s_w$ instead of $t_0$, or reading $\sigma$ from $\Sigma_{ii}$ rather than $1/Q_{ii}$) are what \texttt{check\_z0.py} guards against. Note also that the competitor's floor depends on where it is evaluated: $-1.297$ at the threshold $t_0$ (where Lemma~\ref{lem:gap} uses it) but $-1.834$ at the realized profile (where Lemma~\ref{lem:boundary} uses it), because $m_j(t)$ is nondecreasing in $t$ and $t_0\le s_w$; the derived $z_0=-1.908$ lies below both, as it must.

\paragraph{Empirical trace.}
The main text's prior-perturbation study is the bound's empirical trace: mean noise $\varepsilon$ perturbs $m_i$ alone, and the rent decays monotonically ($+6.4\%\to+2.5\%\to-4.3\%$ at $\varepsilon=0,10,20$). The successive drops ($-3.9$, $-6.8$) are mildly convex in $\varepsilon$, so the study is \emph{compatible with} the bound rather than a confirmation of a linear law: an upper bound constrains the decay from above only. In the same twelve-regime study, mis-scaling the conditional variance by $\times0.5$ and $\times2$ leaves the rent at $+5.9\%$ and $+2.4\%$ (\texttt{perturb\_out.json}).

\section{Experimental protocol and reproducibility}

\paragraph{Within-query correlations on the four-brand data (main text, Section~7.1).} Bids are demeaned within each query of the four-brand file (\texttt{simulated\_bids\_n4b.pt}; $780$ queries, $15{,}600$ rows per brand) and correlated across all rows. All six pairs are positive; the main-line pair (premium, value) is $0.58$ on this data set and $0.66$ on the $n{=}2$ main-line data (\texttt{ST\_600.pt}, $600$ queries), which differ in query set and scenario count. Script and results: \texttt{pairwise\_within\_query\_corr} (\texttt{.py} and \texttt{\_n4b.json}).

\begin{center}\resizebox{\columnwidth}{!}{\small
\begin{tabular}{lcccc}
\toprule
 & Flagship & Premium & Value & Clearance \\
\midrule
Flagship & $1$ & $0.63$ & $0.28$ & $0.20$ \\
Premium & $0.63$ & $1$ & $0.58$ & $0.43$ \\
Value & $0.28$ & $0.58$ & $1$ & $0.57$ \\
Clearance & $0.20$ & $0.43$ & $0.57$ & $1$ \\
\bottomrule
\end{tabular}
}\end{center}

\paragraph{Prompt-ablation table (main text, Section~7.7).}
Seven gpt-4o-mini prompt schemes and the gpt-4o replication. Gradient $=$ ratio of the premium brand's mean bid on transactional to informational queries (query means first); Corr.\ $=$ within-query correlation of the two brands' bids after removing query means. The last four columns describe the shape of the bid distribution (cents on a $0$--$200$ scale): median, share of bids at or below $10$ cents, 95th percentile, and the share of all bids carried by the three most frequent values. Explicit calculation piles bids at the floor; the judgment-based schemes, facts-only and behavioral alike, spread bids across the scale, and differ in whether the spread follows the query's intent (gradient) or only the brand. Source: \texttt{pilot*.json} (the v3 row uses the full v3 run \texttt{simulated\_bids\_n4\_full.json}, the data its gradient and correlation were computed on), statistics by \texttt{bid\_shape\_by\_scheme.py} (\texttt{code/n4/bid\_shape\_by\_scheme.json}). The v4.1 pilot ($48$ calls) is omitted: its expected values exceeded the $200$-cent cap and the run was stopped.

\begin{table*}[t]\centering\small
\begin{tabular}{lcc|cccc}
\toprule
Prompt scheme & Gradient & Corr. & Median & $\le 10$ & P95 & Top-3 \\
\midrule
Facts only, qualitative (v3) & $0.98$ & $+0.78$ & $30$ & $13\%$ & $94$ & $27\%$ \\
$+$qualitative conversion funnel (v4) & $1.74$ & $+0.74$ & $30$ & $6\%$ & $85$ & $32\%$ \\
$+$quantified conversion facts (v4.2) & $8.56$ & $+0.41$ & $12$ & $46\%$ & $80$ & $25\%$ \\
$+$chain-of-thought (v4.2$+$CoT) & $6.71$ & $+0.06$ & $5$ & $80\%$ & $80$ & $54\%$ \\
$+$quantified demand index$+$decimals (v4.3) & $7.40$ & $+0.09$ & $4$ & $84\%$ & $100$ & $27\%$ \\
$+$low temperature ($T{=}0.2$) & $8.94$ & $-0.03$ & $3$ & $88\%$ & $120$ & $29\%$ \\
Behavioral traits (main-line design) & $5.76$ & $+0.66$ & $25$ & $24\%$ & $157$ & $27\%$ \\
gpt-4o, v4.3 & $14.52$ & $+0.36$ & $1.2$ & $88\%$ & $100$ & $20\%$ \\
\bottomrule
\end{tabular}
\end{table*}

\paragraph{Is the demand signal the only common input? (main text, Sections~7.1 and~7.7).}
Given a query and a shock realization, each brand's bid is an independent LLM call, so any co-movement beyond what the shared signal $u$ induces would have to come from somewhere else in the pipeline. We check this on the main-line data of Table~1 (\texttt{ST\_600.pt}: $600$ queries $\times$ $20$ scenarios, $12{,}000$ rows; \texttt{partial\_corr\_u.py}, \texttt{partial\_corr\_u.json}) by removing successively more of the common structure and recomputing the Pearson correlation of the two brands' residual bids. The signal enters the prompts as one of five demand-level texts (very weak, below normal, normal, elevated, very strong), which defines the cells of the last layer. Bootstrap intervals over queries.

\begin{center}\resizebox{\columnwidth}{!}{\small
\begin{tabular}{lrr}
\toprule
Residualized on & Correlation & $95\%$ CI \\
\midrule
query means (the within-query correlation) & $+0.66$ & $[+0.64,+0.67]$ \\
query means $+$ demand-level means & $+0.16$ & $[+0.14,+0.19]$ \\
(query, demand-level) cell means & $-0.01$ & $[-0.03,+0.01]$ \\
\bottomrule
\end{tabular}
}\end{center}

The demand level explains $88\%$ and $82\%$ of the two brands' within-query bid variance, and once query and shock realization are both held fixed the residual correlation is indistinguishable from zero. The benchmark therefore contains no correlation of its own making: what the mechanism conditions on is the agents' transmission of a common demand signal, the classical source of affiliation (Milgrom and Weber 1982), and the prompt study above measures how much of that signal different agents let through. (The prompt-ablation table above is computed on the pilot sets, and its bid-shape columns pool all four brands; the behavioral statistics in the main text, Sections~7.1 and~7.7, are computed on the two main-line brands in \texttt{ST\_600.pt}: gradient $5.76$, within-query correlation $+0.66$, median bid $37$ cents, 95th percentile $112$; \texttt{behavior\_mainline.py}, \texttt{behavior\_mainline.json}.)

\paragraph{CAN baseline (main text, Table~1).}
We could not find a public implementation of CAN, so every number reported for it comes from our own reimplementation of the mechanism as described in the paper, not from the authors' code.
We reimplement the Conditional Auction Net of Huo, Zheng and Wu (AAAI 2025) as a context-free mechanism on the main line (\texttt{can\_baseline\_fixed.py}, results \texttt{can\_baseline\_fixed\_results.json}). As in the original paper, CAN maps the bids alone to an allocation and a payment. We keep it that way rather than add a context input of our own design, which would make it a different mechanism. The value of context is measured separately, by the contextual-reserve baseline (ctxSPA) and by the fixed-opponent market below, where every context-free mechanism is capped by the no-context ceiling. Each bidder $i$ has a rank-score function $r_i(\mathbf b)=\min_z\max_q[\,w_{zq}\cdot(b_i,-\mathbf b_{-i})+c_{zq}\,]$, a min-max network (six groups of six affine units, nonnegative weights via softplus) over the inputs $(b_i,-\mathbf b_{-i})$, so $\partial r_i/\partial b_i\ge0$ and $\partial r_j/\partial b_i\le0$ hold by construction (the single-crossing conditions, the paper's eq.~(4)). Following the paper's eq.~(5), a learned reserve score $r_0(\mathbf b)=\min_z\max_q[\,w^0_{zq}\cdot(-\mathbf b)+c^0_{zq}\,]$, nonincreasing in every bid, plays the role of an additional bidder: the item goes to $\arg\max_i r_i(\mathbf b)$ if that score is at least $r_0(\mathbf b)$, otherwise it is unsold. The winner pays its critical bid, the smallest report at which it still wins, found by $40$ bisection steps (its own score is nondecreasing and every competing score, including $r_0$, nonincreasing in the report, so the root is unique). Training maximizes expected revenue on the $7{,}200$ training bids of the Table~1 split with a softmax-relaxed allocation over the $n+1$ scores (temperature annealed from $0.3$ to $0.02$ over $3{,}000$ Adam steps, batch $256$, learning rate $2\times10^{-3}$) and gradients through the critical bid by implicit differentiation at interior roots (a payment clipped at zero or at the bid contributes no gradient, as in the original training objective); the checkpoint with the highest validation revenue is kept. The mechanism receives no query context, as in the original paper, and no prior.

Test revenue over five training seeds is $37.43\pm0.07$ cents ($37.43$, $37.44$, $37.48$, $37.46$, $37.32$); four seeds learn a reserve low enough to trade on every test profile (trade rate $100\%$, welfare $54.6$--$54.9$), the fifth trades $85\%$ (welfare $53.2$). An incentive-compatibility probe (100 test profiles, 25 deviations per bidder) finds $0/5{,}000$ violations for every seed. Against the other baselines CAN is clearly ahead: on seed $0$ it earns $+5.1\%$ over the tuned reserve and $+8.0\%$ over IPV-Myerson. \textsc{AffMyerson}'s five-seed revenue on the same test rows is $37.84\pm0.06$ against CAN's $37.43\pm0.07$. The two seed distributions are disjoint (the smallest value the \textsc{AffMyerson} mean and sd admit, $37.74$, exceeds CAN's largest seed, $37.48$), and a Welch test on the seed means gives $t=10.3$, $p<0.001$. The original paper trains CAN on $100{,}000$ samples per market, against $7{,}200$ here.

\paragraph{When the conversation matters to one bidder only (main text, Section~7.3).}
$200$ pairs of conversations ($400$ in all) share the same product words. One member of each pair carries a high-value cue (an upcoming trip, a broken item, a group purchase, a gift with a budget) and the other a low-value cue (writing an article, a school project, an item already bought, a same-name non-product). Each conversation has $20$ rows ($8{,}000$ rows), embedded with the same frozen sentence encoder as the main line, and split $6{:}1{:}3$ by conversation with the main-line split seed ($240/40/120$ conversations; $2{,}400$ test rows, $46\%$ high-cue). Bidder A (``marketplace'') does not read the conversation, and its value is $\mathcal N(30,3^2)$ cents on every row. Bidder B (``specialist'') reads it, with value $\mathcal N(200,25^2)$ on a high cue and $\mathcal N(100,25^2)$ on a low cue, clipped at zero. The two are independent by construction, so this experiment isolates what reading the conversation is worth. It is not a measurement of the affiliation rent, and \textsc{AffMyerson} and IPV-Myerson with the same prior coincide here. The prior is the spectral parameterization of Section~6 fitted on the training rows (five seeds), and CAN is the full model of the paragraph above, trained on the same rows (five seeds). These CAN runs use the corrected trainer described above. One reference quantity: the \emph{no-context ceiling} is Myerson's optimal auction (with ironing) for the empirical marginal bid distributions of the training set, which is the most any mechanism that sees only bids can earn. Code \texttt{ctx\_branch/fixed\_opponent.py}, results \texttt{fixed\_opponent\_results.json}.

\begin{center}\resizebox{\columnwidth}{!}{\small
\begin{tabular}{@{}lrrr@{}}\toprule
Mechanism & Revenue (cents) & High-cue rows & Low-cue rows \\\midrule
Second price (SPA) & $29.93$ & & \\
Tuned-reserve SPA ($r^*{=}163.5$) & $71.26$ & & \\
No-context ceiling (ironed Myerson) & $84.39$ & $155.0$ & $24.6$ \\
CAN (five seeds) & $83.95\pm0.17$ & $153.5$ & $25.1$ \\
Lookahead (same prior) & $100.33\pm0.28$ & & \\
Contextual-reserve SPA & $103.92\pm0.21$ & & \\
IPV-Myerson (same prior) & $104.79\pm0.12$ & & \\
\textsc{AffMyerson} (five seeds) & $104.78\pm0.22$ & $149.7$ & $66.8$ \\\bottomrule

\end{tabular}
}\end{center}
\noindent The fitted prior reads the specialist's mean as $192.5$ on high-cue and $106.3$ on low-cue conversations (true values $200$ and $100$) with conditional scale $31$ (true $25$).

\paragraph{Prompt scheme $\times$ mechanism: conditioning on calculation-type bids.}
To ask what conditioning is worth if bids looked like the calculation-type schemes, the v4.2$+$CoT scheme was rerun at main-line scale ($600$ queries $\times$ $20$ scenarios $\times$ $2$ brands $=24{,}000$ calls; \texttt{run\_full\_cot.py}, data \texttt{ST\_cot.pt}) and passed through the same fitting and auction pipeline (five seeds, split $0$). The bid shape matches the pilot (median $5$ cents, $82\%$ of bids at or below $10$ cents, three values carrying $55\%$). The fitted spectral prior has $\bar\rho=0.17$--$0.19$ on the test set; AffMyerson earns $4.995\pm0.184$ cents, $+2.50\%\pm1.01\%$ over IPV-Myerson and $-7.24\%\pm3.42\%$ against the tuned-reserve SPA ($r^\star=10$), both directions stable across the five seeds (\texttt{cot\_auc\_output.txt}). This is the low-correlation regime of the shuffling scan (main text, Section~7.4), in which the two-tier criterion tells a platform not to condition. Note the two correlation measures differ: the table's Corr.\ is the raw within-query correlation of pilot bids; $\bar\rho$ is the fitted prior's conditional correlation on held-out queries.

We requested access to the Yahoo! Webscope A1 advertiser bidding data, which records competing bids on the same keywords; two months after the request, access had not been granted.
Queries, rather than individual scenarios, are assigned to train, validation,
and test partitions. Reserves are tuned only on the training partition.
Reported fixed-split uncertainty uses five optimization seeds. A separate
five-split audit repeats preprocessing, fitting, reserve tuning, and testing
for independent query partitions. Test-set comparisons use paired bootstrap
resampling of query clusters.

Context-predictability diagnostics on real queries (the $R^2$ figures for the
diagnostic classes and the encoder/feature ablations) are measured with the unconstrained
class under the legacy fitting pipeline; they are an optimistic estimate of what a context mean
could recover and are reported as diagnostics, not properties of the deployed
prior. The designed-data figure they are compared with in the main text ($R^2=0.60$) is the diagonally dominant fit of the four-prior table (Section~S5), the best context mean any class attains there; the unconstrained class itself reaches $R^2\approx0$ on the designed data.

The accompanying code records the exact data file, split seed, optimization
seed, per-split revenue, and paired intervals in machine-readable JSON. Seven
unit tests cover SPD construction, M-matrix feasibility, score monotonicity,
cross-monotonicity, exact reduction to IPV pricing for diagonal $Q$, round-trip
spectral representation, and the critical-payment score equality and bracket;
the lemma-level diagnostics of Section~\ref{supp:stab} (\texttt{check\_z0.py},
\texttt{check\_stability.py}) ship alongside them.

\section{Log-link revenue comparison (controlled, legacy pipeline)}

Same test set, same held-out real bids, same reserve grid; the two blocks
differ only in the prior's link function (both fit by the legacy
diagonally dominant pipeline, which predates the spectral network and budgets partial correlations at roughly $1/(n-1)$, so the rent columns are
\emph{not} comparable to the main line's).

\begin{center}\resizebox{\columnwidth}{!}{\small
\begin{tabular}{lrr}\toprule
Mechanism & Identity link & Log link ($c_0=0.45$) \\\midrule
SPA & 28.39 & 28.39 \\
SPA + tuned reserve ($r^*{=}32$) & 35.62 & 35.62 \\
IPV-Myerson & 33.31 & 36.81 \\
AffMyerson & 36.15 & 37.35 \\\bottomrule
\end{tabular}
}\end{center}

The better-fitting marginal is revenue-positive for both prior-based
mechanisms (the SPA rows are link-free sanity checks); the measured rent
narrows because IPV-Myerson gains more, an effect confounded by the legacy
pipeline's correlation cap. Constructive regularity: the diagonal floor
$c_0=0.45>1/\sigma^{\star2}\approx0.4342$ guarantees $\sigma<\sigma^\star$
for every network output.

\section{Shuffling scan: prior-tracking panel}

Companion panel to the main text's shuffling-scan figure: the fitted mean
correlation $\bar\rho$ of the retrained prior tracks the injected within-query
correlation at every level.

\begin{figure}[h]\centering
\includegraphics[width=0.9\columnwidth]{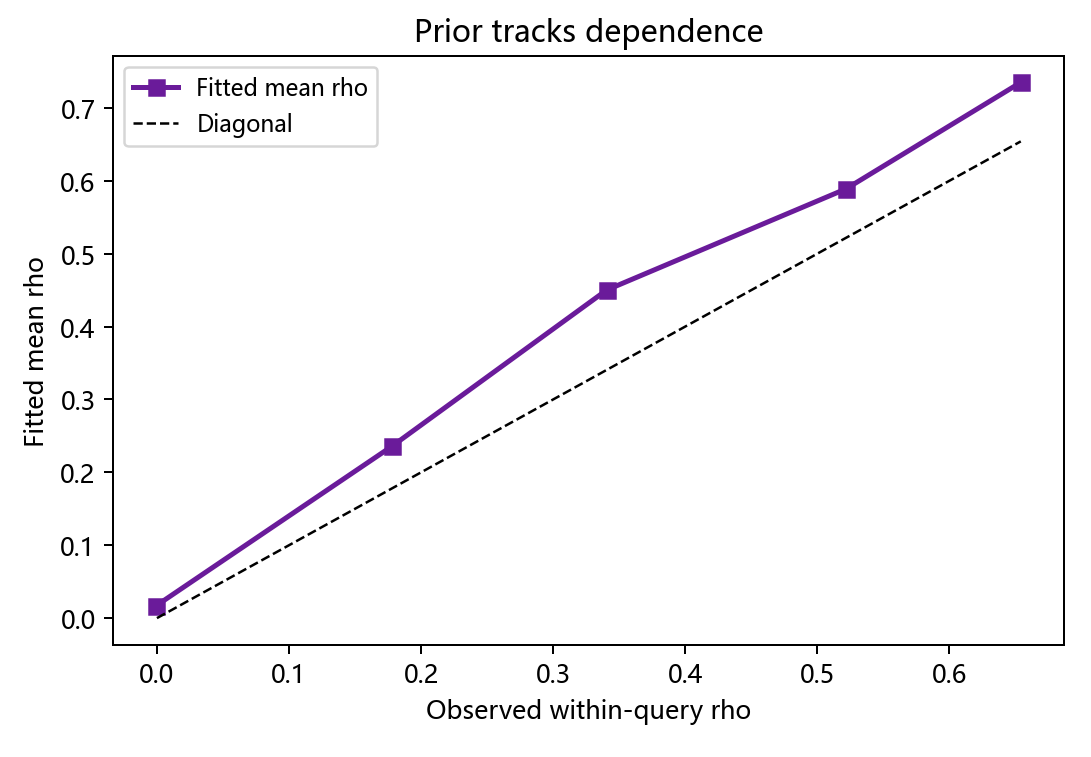}
\end{figure}

\paragraph{Fixed-policy variant (main text, Section~7.4).} The retrained protocol refits the prior and re-tunes the reserve at every shuffling level. The fixed-policy variant keeps the prior fitted on the original data (fitted mean correlation $\bar\rho=0.735$) and the original tuned reserve ($r^\star=32$), and re-pairs only the test set. With no retraining the rent still rises monotonically with the within-query correlation, so the response in the main text is not a retraining artifact; the negative endpoint at $\rho\approx0$ is the cost of conditioning on a correlation that is no longer in the data. Bootstrap intervals over test queries; script \texttt{rho\_scan\_fixed.py}, results \texttt{rho\_fixed\_results.json}.

\begin{center}\resizebox{\columnwidth}{!}{\small
\begin{tabular}{ccrr}
\toprule
Shuffle fraction & Test $\rho$ & Lift vs.\ IPV-Myerson & Lift vs.\ tuned reserve \\
\midrule
$1.00$ & $0.009$ & $-2.2\%$ $[-3.7,-0.5]$ & $-2.2\%$ $[-3.7,-0.8]$ \\
$0.75$ & $0.183$ & $+0.7\%$ $[-0.8,+2.4]$ & $+0.3\%$ $[-1.3,+1.9]$ \\
$0.50$ & $0.365$ & $+4.4\%$ $[+2.8,+6.2]$ & $+3.2\%$ $[+1.6,+4.7]$ \\
$0.25$ & $0.521$ & $+6.9\%$ $[+5.3,+8.8]$ & $+5.0\%$ $[+3.3,+6.6]$ \\
$0.00$ & $0.654$ & $+8.9\%$ $[+7.2,+10.9]$ & $+6.5\%$ $[+4.7,+8.3]$ \\
\bottomrule
\end{tabular}
}\end{center}

\section{Link choice: where each marginal fits better}

\paragraph{Two home turfs: where each marginal fits better (main text, Section~7.2).}
For every (query, brand) cell with at least ten scenario bids we fit a normal and a log-normal by maximum likelihood and compare their negative log-likelihoods on the original bid scale (the log-normal includes the Jacobian term $\sum\log x$; both have two parameters, so the comparison is a likelihood ratio). The table reports the share of cells the normal wins, with $95\%$ binomial half-widths (\texttt{link\_home\_turf.py}, \texttt{link\_home\_turf\_out/results.json}). Overall the log-normal is ahead ($36\%$ normal wins on the designed data, $31\%$ on real queries), but the normal is on par or ahead exactly where revenue concentrates: transactional and niche-premium queries, home-turf queries, and the value and clearance brands whose bids sit in the high-value range with additive rather than multiplicative spread.

\begin{table*}[t]\centering\small
\begin{tabular}{lcc}
\toprule
Cell group (share of cells won by the normal) & Designed ($n{=}4$, $3{,}065$ cells) & Real queries ($2{,}232$ cells) \\
\midrule
All cells & $36\pm2\%$ & $31\pm2\%$ \\
\midrule
Transactional queries & $57\pm4\%$ & $43\pm4\%$ \\
Niche-premium queries & $59\pm6\%$ & --- \\
Home-turf queries (four brands) & $43$--$61\%$ & --- \\
Comparison queries & $21\pm3\%$ & $34\pm4\%$ \\
Informational queries & $22\pm3\%$ & $23\pm4\%$ \\
Adjacent queries & $26\pm3\%$ & $23\pm3\%$ \\
\midrule
Clearance brand (Basecamp) & $48\pm4\%$ & $56\pm4\%$ \\
Value brand (Trail) & $47\pm4\%$ & $45\pm4\%$ \\
Premium brand (Summit) & $27\pm3\%$ & $13\pm3\%$ \\
Flagship brand (Apex) & $23\pm3\%$ & $8\pm2\%$ \\
\bottomrule
\end{tabular}
\end{table*}

\section{Is the network necessary? Convex fits in the same cone}

Real-query benchmark, same split and replay as the channel decomposition
(cents; deterministic fits, network rows are 5-seed means).

\begin{center}\resizebox{\columnwidth}{!}{\small
\begin{tabular}{lr}\toprule
Prior & AffMyerson revenue \\\midrule
SPA + tuned reserve ($r^*{=}25$) & 41.54 \\
Embedding-context network & 43.48 \\
Global closed form + cone projection (paper) & 43.65 \\
Global $\mathrm{MTP}_2$-constrained MLE (convex) & 43.65 \\
Diagonal $+$ rank-one single factor & 43.63 \\
Intent-bucketed $\mathrm{MTP}_2$-MLE (4 buckets) & 44.29 \\
Intent-feature network & 44.33 \\\bottomrule
\end{tabular}
}\end{center}

The convex MLE and the clip-projection agree to $0.001$; the bucketed
convex fit ties the intent-feature network; the single-factor prior loses
only $0.02$ to the full M-matrix fit here. On the \emph{designed} four-brand
data the ordering changes (seed-0 configuration, split $0$, cents): unconstrained network repaired into the cone $49.29$;
projected global $48.33$; convex global MLE $48.17$; SPA + tuned reserve
$47.09$ ($r^\star=30$); diagonal-plus-rank-one $46.31$ (\texttt{convex\_compare\allowbreak\_n4b.py}, \texttt{convex\_compare\allowbreak\_n4b.json}; the network row is the unconstrained precision network of \texttt{proj\_cost.py} after projection into the cone, \texttt{proj\_cost\_out.json}). The projected global fit and the convex MLE differ here because clipping the sample precision matrix distorts the marginal variances (its implied IPV revenue is $43.64$ against $45.95$ for the MLE, which preserves them), while their conditioning revenues stay within $0.16$ cents. Where context is predictable the
repaired unconstrained network is worth $+0.96$ over the best convex fit, and one Gaussian factor
is not enough, the applicability boundary reported in the main text.

\section{Collusion sensitivity and the conditioning clip}\label{sec:clip}

Single-shot DSIC covers unilateral deviations only. The main text reports that joint shading by the winner's opponents moves the winner's payment more under conditioning than under IPV pricing, because the same channel that earns the rent transmits coordinated misreports. The exposure is tunable. Define the \emph{clipped} conditional mean by capping the shift that competitors' reports induce,
\[
  m^{(c)}_{i}=\mu_i+\operatorname{clip}\Bigl(\textstyle\sum_{j\ne i}\beta_{ij}(s_j-\mu_j),\ -c\,\sigma_{i\mid-i},\ c\,\sigma_{i\mid-i}\Bigr),
\]
and run the mechanism with $\psi^{(c)}_i=s_i-\sigma_{i\mid-i}R\bigl((s_i-m^{(c)}_i)/\sigma_{i\mid-i}\bigr)$. The clip is monotone in every $s_j$, so own-monotonicity and cross-monotonicity (main text, Lemmas~5.1--5.2) are preserved and the mechanism stays EPIC and EPIR for every $c\in[0,\infty]$; $c=\infty$ is \textsc{AffMyerson}, $c=0$ prices with the conditional scale but no conditional shift. Seed-$0$ spectral prior, held-out $n{=}2$ bids, script \texttt{spectral\_unify.py} (results in \texttt{spectral\_unify\_results.json}); ``drop'' is the winner's payment reduction when all opponents shade their reports by $20\%$; welfare is relative to first best; allocations are unchanged by the shading in every row; single seed, hence the small offsets from the five-seed means of the main text.

\begin{table*}[t]\centering\footnotesize\setlength{\tabcolsep}{4pt}
\begin{tabular}{lccccc}\toprule
Mechanism & Revenue & Rent vs.\ IPV & Share of rent & Welfare & Drop, $20\%$ shade \\\midrule
IPV-Myerson (no conditioning) & $34.83$ & -- & -- & $84.2\%$ & $10.1\%$ \\
$c=0$ (scale only, no shift) & $35.16$ & $+1.0\%$ & $11\%$ & $87.6\%$ & $11.0\%$ \\
$c=0.25$ & $36.56$ & $+5.0\%$ & $56\%$ & $90.8\%$ & $11.4\%$ \\
$c=0.5$ & $37.21$ & $+6.8\%$ & $77\%$ & $92.7\%$ & $12.1\%$ \\
$c=1$ & $37.66$ & $+8.1\%$ & $91\%$ & $93.9\%$ & $13.1\%$ \\
$c=2$ & $37.87$ & $+8.7\%$ & $98\%$ & $94.0\%$ & $13.8\%$ \\
$c=\infty$ (full conditioning) & $37.93$ & $+8.9\%$ & $100\%$ & $93.7\%$ & $14.5\%$ \\
\bottomrule
\end{tabular}
\end{table*}

\noindent Full conditioning earns $+8.9\%$ over IPV and loses $14.5\%$ of the winner's payment to a $20\%$ joint shade, against $10.1\%$ for IPV; the extra exposure is $4.4$ points. At $c=0.5$ the mechanism keeps $77\%$ of the rent while the extra exposure falls to $2.0$ points; at $c=0.25$, $56\%$ of the rent for $1.3$ points. The trade-off is monotone in $c$, so a platform can set the clip from its own estimate of collusion risk; nothing else in the pipeline changes. Every clipped variant is EPIC/EPIR by the main text's Theorem~5.7; only $c=\infty$ is the revenue-optimal rule of Theorem~5.5 for a Gaussian prior. (Numbers here are the spectral main line and supersede any legacy-pipeline figures.)

\section{Prior-poisoning attack and mitigation (full numbers)}

Shading one advertiser's bid history by $\delta=10\%/20\%$ before the refit
(period-one cost ignored) raises that advertiser's next-period utility by
$+14\%/+37\%$, while platform revenue falls by $0.6\%/3.0\%$ (the IPV baseline
loses $1.6\%$ under the same attack). Mitigation by prior averaging: replacing
the current prior with a convex combination of the current and previous-period
priors cuts the $\delta=20\%$ utility gain to $+17\%$ at $\lambda=\tfrac12$ and
$+8\%$ at $\lambda=\tfrac14$, at no revenue cost when no attack is present.
Full dynamic-IC treatment remains open; single-shot EPIC is unaffected.

\section{Complete rerun summary}\label{supp:rerun}

\begin{table}[h]\centering
\caption{Main fixed-split revenue, five optimization seeds.}
\begin{tabular}{lr}\toprule
Mechanism & Revenue (cents)\\\midrule
SPA & 28.391\\
Contextual-reserve SPA & $32.979\pm0.179$\\
IPV-Myerson & $34.661\pm0.188$\\
Ronen lookahead & $35.667\pm0.075$\\
SPA + tuned reserve & 35.621\\
AffMyerson & $\mathbf{37.838\pm0.056}$\\\bottomrule
\end{tabular}
\end{table}

\begin{table}[h]\centering
\caption{Independent query-split reruns with the spectral M-matrix prior.}
\begin{tabular}{crr}\toprule
Split & Lift vs.\ tuned reserve & Lift vs.\ IPV\\\midrule
0 & $+5.80\%$ & $+8.14\%$\\
1 & $+6.78\%$ & $+7.94\%$\\
2 & $+5.80\%$ & $+9.59\%$\\
3 & $+4.60\%$ & $+8.91\%$\\
4 & $+5.97\%$ & $+9.80\%$\\\midrule
Mean$\pm$sd & $+5.79\%\pm0.78\%$ & $+8.88\%\pm0.83\%$\\\bottomrule
\end{tabular}
\end{table}

On the full $n=4$ market, AffMyerson earns $49.047\pm0.431$ cents,
$4.16\%\pm0.92\%$ above the tuned reserve and $6.17\%\pm0.37\%$ above matched
IPV. Sweeping the share of special (niche/home-turf) queries from $0$ to $23\%$ moves the rent between $+5.5\%$ and $+6.8\%$ and the margin over the tuned reserve between $+4.3\%$ and $+7.6\%$, all confidence intervals positive. On the $n=2$ main line AffMyerson also trades more often than the tuned-reserve SPA ($89.8\%$ vs.\ $83.3\%$ of auctions) at equal welfare ($53.8$). On real-user queries with LLM-agent valuations, it earns
$43.480\pm0.146$ cents, with corresponding lifts of
$4.67\%\pm0.35\%$ and $4.48\%\pm0.07\%$.

\paragraph{Brand submarkets (main text, Section~7.5).} Because each brand bids without seeing the others, every subset of the four brands is a valid market. The table lists all eleven submarkets with the spectral prior (seed $0$; bootstrap intervals over test queries; script \texttt{rho\_scan\_b.py}, results \texttt{n\_sweep\_out\_b/results\_spectral.json}). Fitted $\bar\rho$ alone does not rank the markets: the three lowest rents ($+0.4$, $+0.5$, $+1.5\%$, none significant) are the pairings of a high-end brand with the clearance brand, whose near-constant bids carry no usable signal for conditioning, while the same clearance-containing markets show the largest margins over the tuned reserve because a single reserve fits them poorly. One three-brand market loses to the tuned reserve ($-1.5\%$ $[-2.6,-0.2]$). The across-submarket median rent is $+6.3/+6.5/+6.4\%$ for $n=2/3/4$; a symmetric synthetic market ($\mu=50$, $\sigma=15$ cents, equicorrelation $\rho=0.6$, $5{,}000$ auctions per $n$; \texttt{mech\_aff\_n.py}) declines instead, $+4.0/+3.3/+3.1/+2.2\%$ for $n=2/3/4/5$, so in the stratified market the conditioning information that an added competitor brings offsets the dilution of the rent.

\begin{table*}[t]\centering\small
\begin{tabular}{clrr}
\toprule
$n$ & Market & Rent vs.\ IPV-Myerson & Lift vs.\ tuned reserve \\
\midrule
2 & Flagship$+$Premium & $+5.6\%$ $[+2.5,+9.0]$ & $+1.6\%$ $[-2.9,+6.8]$ \\
2 & Flagship$+$Value & $+7.1\%$ $[+5.2,+9.0]$ & $+1.3\%$ $[-1.0,+3.2]$ \\
2 & Flagship$+$Clearance & $+0.5\%$ $[-2.0,+3.1]$ & $+22.2\%$ $[+13.5,+31.8]$ \\
2 & Premium$+$Value & $+8.4\%$ $[+6.7,+10.2]$ & $+10.5\%$ $[+7.8,+13.2]$ \\
2 & Premium$+$Clearance & $+1.5\%$ $[-0.3,+3.2]$ & $+37.9\%$ $[+31.2,+44.9]$ \\
2 & Value$+$Clearance & $+8.4\%$ $[+7.1,+9.7]$ & $+12.4\%$ $[+10.8,+14.2]$ \\
3 & Flagship$+$Premium$+$Value & $+6.4\%$ $[+4.3,+8.8]$ & $-1.5\%$ $[-2.6,-0.2]$ \\
3 & Flagship$+$Premium$+$Clearance & $+0.4\%$ $[-1.5,+2.5]$ & $+3.4\%$ $[+0.8,+6.4]$ \\
3 & Flagship$+$Value$+$Clearance & $+8.9\%$ $[+7.4,+10.5]$ & $+19.4\%$ $[+15.2,+23.8]$ \\
3 & Premium$+$Value$+$Clearance & $+6.7\%$ $[+4.9,+8.5]$ & $+15.4\%$ $[+11.9,+19.3]$ \\
4 & All four & $+6.4\%$ $[+4.8,+7.9]$ & $+4.7\%$ $[+3.5,+5.9]$ \\
\bottomrule
\end{tabular}
\end{table*}

\paragraph{Parameterization audit.}
On $n=4$, relaxed precision learning plus feasibility repair clips
$24$--$27\%$ of off-diagonal entries, removes $16$--$21\%$ of their
$L_1$ mass, and shifts the diagonal of $36$--$43\%$ of matrices; its
reserve-relative effect changes sign. The diagonally dominant M-matrix
ablation is stable but fits the flagship--premium pair correlation near $0.23$ (all-pairs mean $0.22$).
The spectral model fits $0.58\pm0.05$ for the same pair (all-pairs mean $0.47$; four-prior table, Section~S5) and gives stable revenue effects.

\paragraph{Distributional misspecification audit.}
On the main $n=2$ split, we refit the same spectral M-matrix network after
monotone transforms and compare test NLL on the original bid scale using the
exact Jacobian. Across five optimization seeds, the original-scale Gaussian
has NLL $9.410\pm0.051$ and whitened-residual standard deviation, skewness,
and kurtosis $1.410\pm0.022$, $0.846\pm0.125$, and $7.996\pm0.726$.
The $\log(s+0.1)$ transform has NLL $8.435\pm0.016$, an improvement of
$0.976\pm0.057$ nats, with residual diagnostics $1.088\pm0.032$,
$-0.066\pm0.022$, and $3.124\pm0.079$. This is evidence against treating the
Gaussian as the literal agent-value distribution. We retain it as a working
prior because the closed-form regularity and incentive proofs are for the
original-scale Gaussian; the transformed fit is a diagnostic, not an auction
baseline. Accordingly, the experiment supports structural truthfulness and
replay revenue under misspecification, not the correct-prior optimality premise.

\paragraph{Latency.}
With 60 bisection iterations, the final reference pipeline (spectral network,
tensor transfer, and pricing; sentence encoder excluded) has p50/p95 latency
$0.308/0.450$ ms for $n=2$, $0.676/0.946$ ms for $n=4$, and
$0.887/1.195$ ms for $n=5$ on the reported AMD CPU.

\paragraph{Unseen-shock audit.}
The prior and reserve are trained only on the central $80\%$ of latent demand
shocks; evaluation uses held-out queries in the unseen lower and upper $10\%$
tails. The latent shock is used only to define this audit and is not supplied
to the network or mechanism. Across five optimization seeds, combined-tail
revenue is $37.046\pm0.271$ cents, a $9.52\%\pm0.49\%$ lift over matched IPV
and $14.08\%\pm0.84\%$ over the training-tuned reserve. Seed-0 paired
query-cluster intervals are $[6.21,12.32]\%$ and $[8.89,16.82]\%$.
As a deliberately optimistic diagnostic, we additionally select one constant
reserve directly on the combined tail test sample. This post-hoc oracle
reserve is not deployable and has an information advantage over AffMyerson;
nevertheless, AffMyerson exceeds it by $8.54\%\pm0.80\%$ across seeds, with a
seed-0 paired query-cluster interval of $[3.26,11.81]\%$. Thus the combined-tail
result is not explained only by carrying a central-regime reserve into the
tails.
If the diagnostic is given the hidden tail label and may tune a different
reserve separately in each regime, those two test-tuned oracles exceed
AffMyerson by $31.21\%\pm2.93\%$ in the low tail and
$3.29\%\pm0.50\%$ in the high tail. This is not a deployable shock-blind
baseline, but it identifies a remaining limitation: the present context model
cannot adapt to an unobserved regime label.

\section{Use of generative AI}

Generative AI (Claude, Anthropic, via the Claude desktop app) was used to edit prose, to write and review code, to check proofs and propose fixes for gaps the authors then verified, and to help refine experimental designs proposed by the authors. All mathematical statements, citations, code changes, and numerical results were checked by the authors. A short note on which research stages received AI assistance, with the tool version, will accompany the released code. The LLM agents that act as bidders in the experiments are part of the experimental setup, not writing assistance, and are documented with their prompts and sampling settings.

\section{Real-query prediction and channel diagnostics}

This section provides the per-source and channel decompositions supporting Section~7.6 of the main paper.

\paragraph{Construction of the real-query benchmark (main text, Section~7.6).} Queries come from two public corpora of genuine user queries, the Amazon ESCI shopping-query examples and the MS MARCO query set. Each corpus is first screened by outdoor-gear keywords, then ranked by cosine similarity of its MiniLM embedding to five seed phrases (``hiking and camping gear'', ``outdoor equipment for trekking'', ``mountaineering and climbing equipment'', ``ski and snowboard gear'', ``camping tent sleeping bag backpack''), keeping the top $2{,}000$ ESCI and $3{,}200$ MARCO queries. Two \texttt{gpt-4o-mini} passes follow. The first labels each query with one of the four intent tiers used for the designed queries (transactional, comparison, informational, adjacent), with the same definitions. The second is a topic gate that answers yes or no to whether the query is genuinely about outdoor recreation gear or activities, and rejects gym equipment, toys, baby gear, school bags, grills, RV dealers, road-cycling brands, fashion items, events and stores, and anything that merely contains outdoor-sounding words. From the queries that pass the gate we draw up to $150$ per tier, filling transactional and comparison tiers from ESCI first and informational and adjacent tiers from MARCO first, which yields the $580$ queries ($274$ ESCI, $306$ MARCO) of the main text. Neither pass sees any bid or revenue. Script \texttt{build\_real\_queries.py}; both prompts are in the script and the released artifact.

\paragraph{Per-source decomposition.} The two sources are independent sub-benchmarks; the split localizes the prediction failure: $R^2$ is mildly positive on MARCO ($+0.13$) but \emph{negative} on ESCI ($-0.45$); within concrete shopping queries the value differences ride on product-tier and budget details invisible to sentence embeddings, so the model's variation is spurious; a modern encoder (bge-base, C-Pack) doubles overall $R^2$ to $0.21$ without flipping ESCI; a $4$-dimensional intent label alone reaches $0.18$; a split-half noise ceiling puts attainable $R^2$ at $\approx0.81$ (split-half estimate). Crucially, the rent persists \emph{on exactly the source where text prediction is counterproductive}.

\paragraph{Channel decomposition.} A $2\times2$ ablation (context on/off $\times$ conditioning on/off, same held-out bids) makes ``the rent flows through conditioning'' quantitative: with a \emph{global} prior (zero text prediction) conditioning alone is worth $+3.03$ cents, \emph{more} than the entire rent ($+1.86\pm0.03$), and already beats the tuned-reserve SPA by $+5.1\%$; the embedding context channel adds \emph{nothing} ($-0.17\pm0.14$), while the $4$-dim intent feature adds $+0.68\pm0.03$. The same $2\times2$ on the \emph{designed} data completes the contrast: conditioning is worth $+4.48$ cents there (vs.\ rent $+3.18\pm0.16$) while the context channel is worth $\approx+1.9$. Conditioning is the engine in both regimes. Revenue is insensitive to $\mu$-accuracy beyond a coarse level: the engine is conditioning on competitors' reports (main-paper Theorem~5.7).

\end{document}